\newtheorem{proposition}[]{Proposition}
\newtheorem{assumption}[]{Assumption}
\newtheorem{theorem}[]{Theorem}
\newtheorem{corollary}[]{Corollary}
\newtheorem{remark}[]{Remark}
\newtheorem{lemma}[]{Lemma}
\title{No-harm calibration for generalized Oaxaca-Blinder estimators}
\author{
    Peter L. Cohen \\
  Operations Research Center\\
  Massachusetts Institute of Technology\\
  1 Amherst Street\\
  Cambridge, Massachusetts 02142, U.S.A\\
  \texttt{plcohen@mit.edu} \\
  \And
    Colin B. Fogarty \\
  Operations Research and Statistics Group\\
  Massachusetts Institute of Technology\\
  100 Main Street\\
  Cambridge, Massachusetts 02142,  U.S.A\\
  \texttt{cfogarty@mit.edu} 
}
\newcites{refs}{References}
\begin{document}

\maketitle

\begin{abstract}
    In randomized experiments, adjusting for observed features when estimating treatment effects has been proposed as a way to improve asymptotic efficiency.  However, only linear regression has been proven to form an estimate of the average treatment effect that is asymptotically no less efficient than the treated-minus-control difference in means regardless of the true data generating process.  Randomized treatment assignment provides this ``do-no-harm" property, with neither truth of a linear model nor a generative model for the outcomes being required.  We present a general calibration method which confers the same no-harm property onto estimators leveraging a broad class of nonlinear models. This recovers the usual regression-adjusted estimator when ordinary least squares is used, and further provides non-inferior treatment effect estimators using methods such as logistic and Poisson regression. The resulting estimators are non-inferior to both the difference in means estimator and to treatment effect estimators that have not undergone calibration. We show that our estimator is asymptotically equivalent to an inverse probability weighted estimator using a logit link with predicted potential outcomes as covariates. In a simulation study, we demonstrate that common nonlinear estimators without our calibration procedure may perform markedly worse than both the calibrated estimator and the unadjusted difference in means.
\end{abstract}



\section{Introduction}
In completely randomized experiments, \citetrefs{lin13} demonstrated that linear regression employing treatment-by-covariate interactions can be used to estimate the sample average treatment effect while adjusting for baseline features.  The orthogonalities arising in the geometry of linear regression in concert with the act of randomization yield a ``do-no-harm" property for the resulting estimator: assuming neither the existence of a true linear model nor a generative model for the outcome variables, the regression-adjusted estimator's asymptotic variance is never larger than that of the usual difference in means estimator.  While the resulting estimator is asymptotically no less efficient than the difference in means regardless of the data generating process, linear regression seems ill-suited for modeling phenomena such as binary or count data.  In such contexts, leveraging a nonlinear model such as logistic or Poisson regression may seem more natural. 

Lin's (2013) regression-adjusted estimator can be viewed as an imputation estimator: the practitioner uses linear regressions of outcomes on covariates to impute counterfactual outcomes, and then takes the difference in means between the imputed populations as their estimate of the treatment effect.  Building upon the work of \citetrefs{OaxacaOriginal} and \citetrefs{BlinderOriginal} among others, \citetrefs{generalizedOB} present a general theory for leveraging ``simple'' nonlinear models to impute missing potential outcomes, providing conditions for consistency and asymptotic normality for the resulting treatment effect estimators, coined \textit{generalized Oaxaca-Blinder estimators}. That said,  \citetrefs{generalizedOB} were not able to establish a non-inferiority property for these nonlinear estimators relative to the difference in means estimator, raising the concern that imputation with more general prediction functions could degrade inference relative to no adjustment whatsoever.  

A related class of treatment effect estimators are \textit{model standardization estimators}, which take the difference in the averages of the predicted values for the potential outcomes under treatment and control as an effect estimate. These are equivalent to Oaxaca-Blinder estimators when the average of the fitted values for those receiving treatment and control equal the average of the observed outcomes for those individuals; this holds automatically for generalized linear models. When using nonlinear adjustment, standardized estimators have been shown to have non-inferior asymptotic efficiency relative to the difference in means estimator under a superpopulation model when assuming correct specification of the conditional mean function; see for instance \citetrefs{rosenblumVDL} or \citetrefs[Theorem 7.1]{negiWooldridge}. Unfortunately, this class of estimators does not generally provide non-inferior treatment effect estimates under misspecification. \citeauthor{negiWooldridge} remark that ``we do not have theoretical results to show when the nonlinear [regression adjustment] methods unambiguously improve asymptotic efficiency in case of misspecification'' \citeprefs[p. 526]{negiWooldridge}. \citetrefs{leiDing} generalized linear model standardization results to high-dimensional data; they retain the asymptotic non-inferiority of \citetrefs{lin13} but their proofs rest upon the linearity of the prediction functions.  This leaves a major gap between linear and nonlinear regression adjustment in randomized experiments.  \looseness=-1

We provide a calibration procedure that confers non-inferiority after nonlinear regression adjustment under both the finite-population and superpopulation framework for causal inference. To the best of our knowledge this is the first procedure for conferring non-inferiority to generalized Oaxaca-Blinder estimators and model standardization estimators while remaining agnostic to the truth of an underlying nonlinear model. Our procedure simply feeds the predicted values for the potential outcomes under both treatment and control from a nonlinear model as covariates from which to form the linear regression-adjusted estimator of \citetrefs{lin13}, and provides the same non-inferiority guarantees under conditions on the prediction functions outlined in this work. We show through simulation that without this calibration step generalized Oaxaca-Blinder estimators can perform markedly worse than both the calibrated estimator and the unadjusted difference in means. This leads us to strongly recommend the use of our procedure in providing the natural extension of adjustment in randomized experiments from linear to nonlinear models. Not only are calibrated estimators non-inferior to both the difference in means estimator and the uncalibrated estimator, but also without calibration generalized Oaxaca-Blinder estimators can perform worse than the difference in means even when using simple, commonly deployed nonlinear models. We further discuss how calibration may be used in concert with adjustment strategies leveraging flexible nonlinear methods without corrupting desirable properties such as semiparametric efficiency.

\section{Notation and review}
\subsection{Notation for completely randomized designs}
We begin under the finite-population approach to causal inference where randomized treatment allocation alone justifies our results without assuming a generative model for outcomes or features; as will be discussed, our findings also hold under common superpopulation formulations for inference on both the population average treatment effect and the conditional average treatment effect.  An experimental population is comprised of $N$ units.  For the $i$th unit there are two scalar potential outcomes: what would have been observed under control, $y_{i}(0);$ and what would have been observed under treatment, $y_{i}(1)$.  Randomness only enters the experiment through the allocation of treatment over the population of $N$ individuals, $n_0$ of whom receive the control and $n_1$ of whom receive the treatment.  Let $Z_{i}$ denote the treatment indicator of the $i$th unit: $Z_{i} = 1$ if the $i$th unit receives treatment otherwise $Z_{i} = 0$.  In a completely randomized experiment, the vector $Z = (Z_{1}, \ldots, Z_{N})^{\T}$ is distributed uniformly over $\Omega_{CRE} = \left\{z \in \{0, 1\}^{N} \;:\; \sum_{i = 1}^{N}z_{i} = n_{1} \right\}$. Asymptotics are taken with respect to a sequence of finite experimental populations.  For each $N$ the characteristics of the finite population may change, as may the ratio of treated to control units; this dependence is generally suppressed in the notation that follows. We assume that $n_1/N\rightarrow p$ satisfying $0 < p < 1$ as $N\rightarrow \infty$.

An experimenter draws a treatment allocation $Z$ uniformly from $\Omega_{CRE}$. Under the stable-unit treatment value assumption \citeprefs{rub80}, she then observes $y_{1}(Z_{1}), \ldots, y_{N}(Z_{N})$, the potential outcomes corresponding to the observed treatment assignment.  The sample average treatment effect for the $N$ units is $\SATE = N^{-1}\sum_{i = 1}^{N}\left\{y_{i}(1) - y_{i}(0)\right\}$, and is unknown because $y_i(0)$ and $y_i(1)$ cannot be jointly observed for any unit $i$. The conventional estimator for $\SATE$ is $\tauhat_{unadj} = n_{1}^{-1}\sum_{i = 1}^{N}Z_{i} y_{i}(Z_{i}) -  n_{0}^{-1}\sum_{i = 1}^{N}(1 - Z_{i}) y_{i}(Z_{i})$, the treated-minus-control difference in means \citeprefs{ney23}. Under mild regularity conditions on the sequence of finite populations, $N^{1/2}\left( \tauhat_{unadj} - \SATE \right)$ obeys a central limit theorem \citeprefs{FiniteCLT}.

\subsection{The generalized Oaxaca-Blinder estimator}
Oftentimes baseline covariates $x_i\in \mathbb{R}^k$ are collected for each unit in the study. While $\tauhat_{unadj}$ is unbiased without adjustment its variance may be inflated due to post-randomization imbalances on covariates predictive of the outcome. Imputation estimators use covariate information to impute the counterfactual $y_{i}(1 - Z_{i})$ with the objective of reducing estimator variance. Suppose one trains predictors for the outcomes under control and treatment, $\muhat_{0}(x_i)$ and $\muhat_{1}(x_i)$ respectively. Following \citetrefs{generalizedOB}, define imputed outcomes $\hat{y}_{i}(z)$ for $z = 0,1$ and the resulting generalized Oaxaca-Blinder estimator as
\begin{equation}\label{eqn: imputation estimator}
    \tauhat_{gOB} = N^{-1}\sum_{i = 1}^{N}\left\{\hat{y}_{i}(1) - \hat{y}_{i}(0) \right\}; \quad \hat{y}_{i}(z) = \begin{cases}
                        y_{i}(Z_{i}) &\text{if } Z_{i} = z\\
                        \muhat_{z}\left(x_{i}\right) &\text{if } Z_{i} \neq z
    \end{cases}.
\end{equation} 

Both the difference-in-means estimator and Lin's (2013) regression-adjusted estimator arise from particular choices for $\hat{\mu}_1$ and $\hat{\mu}_0$. \citetrefs{generalizedOB} present a set of sufficient conditions which facilitate analysis of the limiting distribution for $N^{1/2}(\hat{\tau}_{gOB} - \bar{\tau})$.  We generalize their conditions slightly. Two important assumptions, ``stability'' and ``vanishing error processes'', allow for asymptotic reformulations of the estimators in terms of certain residuals.  A third, ``prediction unbiasedness,'' ensures robustness of Oaxaca-Blinder estimators to model misspecification.

\begin{assumption}[Stability]\label{asm: stability}
    For $z=0,1$, there exists a deterministic sequence of functions $\{\overdotmu_{z}^{(N)}\}_{N \in \N}$ such that
$\left|\left|\muhat_{z} - \overdotmu_{z}^{(N)}\right|\right|_{N} := \left\{N^{-1}\sum_{i = 1}^{N}||\overdotmu_{z}^{(N)}(x_{i}) - \muhat_{z}(x_{i})||^{2} \right\}^{1/2} = o_{p}(1).$
    For notational simplicity, we generally drop the superscripted index and write $\overdotmu_{z}$.  
\end{assumption}

\begin{assumption}[Vanishing Error Process]\label{asm: error process vanishes}
    For a function $f: \R^{k} \rightarrow \R$ define\footnote{We use the notation $\G_{N, z}$ to follow that of \citetrefs{generalizedOB}, but the stochastic process $\{\G_{N, z}(\cdot)\}$ indexed by functions $f$ has been implicitly examined elsewhere in the literature, e.g., \citetrefs{rothe}.} ${\G_{N, z}(f) = N^{-1/2}\sum_{i = 1}^{N}\left(\frac{\indicatorFunction{Z_{i} = z}f(x_{i})}{n_{z}/N} - f(x_{i}) \right)}.$
    Assume that, for $z \in \{0, 1\}$, the error stochastic process $\left|\G_{N, z}(\overdotmu_{z}) -  \G_{N, z}(\muhat_{z})\right|$ vanishes in probability; formally $\left|\G_{N, z}(\overdotmu_{z}) -  \G_{N, z}(\muhat_{z})\right| = o_{P}\left(1\right)$.
\end{assumption}




\begin{assumption}[Prediction Unbiasedness]\label{asm: prediction unbiasedness}
    $\sum_{i:Z_i=z}\hat{\mu}_z(x_i)= \sum_{i: Z_i=z}y_i(Z_i)$ for $z=0,1$.
\end{assumption}

If $\muhat_{z}$ is the solution to some empirical risk minimization procedure, then a natural candidate for $\overdotmu_{z}$ is the population-level risk minimizer and Assumption~\ref{asm: stability} reflects the standard goal that the empirical risk minimizers approximate the population risk minimizers as the sample size grows.  The functions $\overdotmu_{0}$ and $\overdotmu_{1}$ need not reflect any true relationship between outcomes and covariates.  
Assumption~\ref{asm: error process vanishes} is quite general; in the supplementary material we provide concrete sufficient conditions based upon an entropy bound of \citetrefs{maximalInequalityVDVWellner} or cross-fitting.  Assumption \ref{asm: prediction unbiasedness} holds for many choices of nonlinear models, including generalized linear models. Under Assumption~\ref{asm: prediction unbiasedness} the estimator may be written as a model standardization estimator, with $\hat{\tau}_{gOB} = N^{-1}\sum_{i=1}^N\{\hat{\mu}_1(x_i) - \hat{\mu}_0(x_i)\}$.

Assumptions \ref{asm: stability}, \ref{asm: error process vanishes}, and \ref{asm: prediction unbiasedness} are sufficient to establish consistency and asymptotically linear representations for generalized Oaxaca-Blinder estimators \citeprefs[Theorems 2-3]{generalizedOB}. Further regularity conditions are required to imply asymptotic normality \citeprefs[Corollary 1]{generalizedOB}. Based upon the assumptions of \citetrefs{lin13} and \citetrefs{freedmanRegAdj1} we assume the following about the potential outcomes and the functions $\dot{\mu}_z$:
\begin{assumption}[Limiting Means and Variances]\label{asm: means and covs stabilize}
	 The mean vector and covariance matrix of $(y_i(0), y_i(1), \dot{\mu}_0(x_i), \dot{\mu}_0(x_i))^\T$ have limiting values. For instance, for $z=0,1$ there exists a limiting value $\bar{y}(z)_\infty$ such that $\lim_{N\rightarrow \infty} N^{-1}\sum_{i=1}^Ny_i(z) = \bar{y}(z)_\infty$.
\end{assumption}
\begin{assumption}[Bounded Fourth Moments]\label{asm: bounded fourth moment}
 There exists some $C < \infty$ for which, for all ${z = 0,1}$ and all $N$,
		$N^{-1}\sum_{i = 1}^{N}\{y_{i}(z)\}^{4} < C$ and $N^{-1}\sum_{i = 1}^{N}\left\{\dot{\mu}_z(x_i)\right\}^{4} < C.$
\end{assumption}

\section{Linear Calibration}
Assumptions \ref{asm: stability} - \ref{asm: bounded fourth moment} are not sufficient for $\hat{\tau}_{gOB}$ to be non-inferior to the unadjusted difference in means estimator; see Section \ref{sec: simulations} for an illustration with Poisson regression. We now describe a simple transformation of $\hat{\mu}_z$ that provides a ``do-no-harm" property after nonlinear adjustment. For each unit $i$, create the pseudo-feature vector $\Tilde{x}_{i} = \left(\muhat_{0}(x_{i}), \muhat_{1}(x_{i})\right)^{\T}$ containing the predicted potential outcomes under control and treatment. Then, for $z = 0,1$, define $\muhat_{OLS, z}(x_i)$ as the prediction equation from a least squares regression of $y_{i}(Z_i)$ on $\tilde{x}_i$ along with an intercept for those units $i$ such that $Z_{i} = z$, yielding for $z=0,1$
\begin{align}\label{eq:lincal}
    \hat{\mu}_{OLS,z}(x_i) &= \hat{\alpha}_{z} + \hat{\beta}_{z,0}\hat{\mu}_0(x_i) + \hat{\beta}_{z,1}\hat{\mu}_1(x_i);\\ (\hat{\alpha}_z, \hat{\beta}_{z,0}, \hat{\beta}_{z,1})^\T& \in \underset{(\alpha_z, \beta_{z,0}, \beta_{z,1})^\T}{\arg\min}\;\; \sum_{i: Z_i = z}\{y_i(z) - {\alpha}_{z} - {\beta}_{z,0}\hat{\mu}_0(x_i) - {\beta}_{z,1}\hat{\mu}_1(x_i)\}^2.\nonumber 
\end{align} 
Finally,  form the treatment effect estimator (\ref{eqn: imputation estimator})  using $\muhat_{OLS, 0}$ and $\muhat_{OLS, 1}$. Equivalently, simply calculate Lin's (2013) regression-adjusted estimator with the features $\tilde{x}_i = \left(\muhat_{0}(x_{i}), \muhat_{1}(x_{i})\right)^{\T}$.

We call the resulting estimator the linearly-calibrated Oaxaca-Blinder estimator, denoted by $\tauhat_{cal}$; see the supplementary material for pseudocode. The approach is similar to that of \citetrefs[Equation 8]{generalizedOB}, but importantly differs in that under their approach $\hat{\mu}_{1-z}(x_i)$ is not included as a predictor variable in the regressions for individuals with $Z_i=z$. By including both prediction functions in $\tilde{x}_i$, $\tauhat_{cal}$ attains non-inferiority.

\begin{theorem}\label{thm: reg adj does no harm}
    Suppose that Assumptions \ref{asm: stability}, \ref{asm: error process vanishes}, \ref{asm: means and covs stabilize}, and \ref{asm: bounded fourth moment} hold. Then, $N^{1/2}\left(\tauhat_{cal} - \SATE\right)$ converges in distribution to a mean-zero Gaussian random variable.  Furthermore the asymptotic variance of $N^{1/2}\left(\tauhat_{cal} - \SATE\right)$ is no larger than that of  $N^{1/2}\left(\tauhat_{unadj} - \SATE\right)$.
\end{theorem}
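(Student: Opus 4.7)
The estimator $\tauhat_{cal}$ is, by construction, exactly \citeauthor{lin13}'s~\citeyearpar{lin13} regression-adjusted estimator applied to the random pseudo-features $\tilde{x}_i = (\hat{\mu}_0(x_i), \hat{\mu}_1(x_i))^\T$. My plan is to show that $\tauhat_{cal}$ is asymptotically equivalent to the infeasible oracle estimator $\tauhat_*$ obtained by applying \citeauthor{lin13}'s~\citeyearpar{lin13} adjustment to the deterministic limit features $\dot{x}_i = (\dot{\mu}_0(x_i), \dot{\mu}_1(x_i))^\T$. Because $\dot{x}_i$ is non-random, Lin's original argument applies to $\tauhat_*$ directly and delivers both the central limit theorem and the variance bound relative to $\tauhat_{unadj}$; the asymptotic equivalence then transfers both properties to $\tauhat_{cal}$.

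To carry out this equivalence I would invoke the framework of \citelatex{generalizedOB} with the calibrated function $\hat{\mu}_{OLS, z}$ playing the role of the imputation function, and verify that $\hat{\mu}_{OLS, z}$ itself satisfies Assumptions~\ref{asm: stability}--\ref{asm: bounded fourth moment}. The natural deterministic limit is $\dot{\mu}_{OLS, z}(x) := \alpha_z^* + \beta_{z,0}^* \dot{\mu}_0(x) + \beta_{z,1}^* \dot{\mu}_1(x)$, where the starred coefficients are the large-$N$ limits of the least squares coefficients obtained by regressing $y_i(z)$ on $\dot{x}_i$ among units with $Z_i = z$. Combining Assumption~\ref{asm: stability} for $\hat{\mu}_z$ with the moment conditions of Assumptions~\ref{asm: means and covs stabilize}--\ref{asm: bounded fourth moment}, a finite-population Cauchy--Schwarz argument shows that the sample Gram matrix and sample moment vector defining each calibration regression converge in probability to their oracle counterparts, yielding $(\hat{\alpha}_z, \hat{\beta}_{z,0}, \hat{\beta}_{z,1}) \to (\alpha_z^*, \beta_{z,0}^*, \beta_{z,1}^*)$ in probability; Assumption~\ref{asm: stability} for $\hat{\mu}_{OLS, z}$ then follows by expanding the squared difference in the $\|\cdot\|_N$ seminorm and invoking the triangle inequality. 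For typically simple realizations, I would note that, with probability tending to one, $\hat{\mu}_{OLS, z}$ lies in the affine class $\{a + b_0 f_0 + b_1 f_1 : (a, b_0, b_1) \in K,\; f_z \in \mathscr{A}_{N, z}\}$ for some compact $K \subset \mathbb{R}^3$, and the entropy integral for this class is controlled by summing the entropy integrals of its constituents, each of which is finite by Assumption~\ref{asm: simple realizations}. Assumptions~\ref{asm: means and covs stabilize}--\ref{asm: bounded fourth moment} transfer from $\dot{\mu}_0, \dot{\mu}_1$ to $\dot{\mu}_{OLS, z}$ because the latter is an affine combination of the former with bounded coefficients.

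With these verifications in hand, Corollary~1 of \citelatex{generalizedOB} produces the required central limit theorem for $N^{1/2}(\tauhat_{cal} - \taubar)$, with an asymptotic variance determined by $\dot{\mu}_{OLS, z}$ that coincides with the asymptotic variance of the oracle estimator $\tauhat_*$. Because the trivial choice $(\beta_{z,0}, \beta_{z,1}) = (0,0)$ is feasible within the OLS minimization and recovers $\tauhat_{unadj}$, the standard projection argument underlying \citeauthor{lin13}'s do-no-harm property shows that this asymptotic variance is no larger than that of $\tauhat_{unadj}$. I expect the main obstacle to be the verification of typically simple realizations for $\hat{\mu}_{OLS, z}$: the entropy bound must simultaneously absorb the randomness of the calibration coefficients and of the base predictors $\hat{\mu}_z$, and the bound on the affine combinations must be uniform in $N$ in the sense required by \eqref{ineq: entropy integral}.
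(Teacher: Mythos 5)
Your proposal is correct and follows essentially the same route as the paper: both arguments transfer stability and typically simple realizations from $\hat{\mu}_z$ to the calibrated predictors $\hat{\mu}_{OLS,z}$ (via consistency of the calibration least-squares coefficients and an entropy bound for the affine class built over $\mathscr{A}_{N,0}\times\mathscr{A}_{N,1}$), then invoke the central limit theory of \citelatex{generalizedOB} and identify the limiting variance with that of Lin's estimator applied to the deterministic features $(\dot{\mu}_0(x_i),\dot{\mu}_1(x_i))^\T$, whose do-no-harm property yields the comparison with $\tauhat_{unadj}$. The only detail worth adding is an explicit note that the intercept in the calibration regressions makes $\hat{\mu}_{OLS,z}$ prediction unbiased, which the paper checks separately before applying those results.
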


Theorem~\ref{thm: reg adj does no harm} does not require knowledge of the true relationship between outcomes and covariates, and the models $\muhat_{0}$ and $\muhat_{1}$ can be arbitrarily misspecified. Assumption \ref{asm: prediction unbiasedness} is not required for $\hat{\mu}_0$ and $\hat{\mu}_1$ because prediction unbiasedness always holds after applying our procedure due to the inclusion of the intercept terms. The non-inferiority statement in Theorem \ref{thm: reg adj does no harm} is proven in a manner similar to Corollary 1.1 of \citetrefs{lin13};  however, care must be taken to account for randomness in the incorporation of the derived covariates $\tilde{x}_{i} = (\muhat_{0}\left(x_{i}\right), \muhat_{1}\left(x_{i}\right))^\T$. The proof further demonstrates that the sufficient conditions for asymptotic Gaussianity of $N^{1/2}(\tauhat_{gOB}- \SATE)$ provided by Assumptions \ref{asm: stability} - \ref{asm: bounded fourth moment} are also sufficient for the asymptotic Gaussianity of $N^{1/2}(\tauhat_{cal}- \SATE)$.  That is, under these assumptions non-inferiority can be achieved ``for free" through our calibration step. A similar argument yields the following comparisons between $\hat{\tau}_{cal}$ and both the non-calibrated estimator $\hat{\tau}_{gOB}$ under Assumption \ref{asm: prediction unbiasedness} and the singly-calibrated estimator suggested in \citetrefs[Equation 8]{generalizedOB}, denoted $\hat{\tau}_{GBcal}$.
\begin{theorem}\label{thm: calibration beats the gOB}
    Under the assumptions of Theorem~\ref{thm: reg adj does no harm} and for given estimators $\hat{\mu}_0$ and $\hat{\mu}_1$, the linearly-calibrated estimator $N^{1/2}\left(\tauhat_{cal} - \SATE\right)$ has an asymptotic variance that is no larger than that of $N^{1/2}\left(\tauhat_{GBcal} - \SATE\right)$. Further enforcing Assumption \ref{asm: prediction unbiasedness},  $N^{1/2}\left(\tauhat_{cal} - \SATE\right)$ has an asymptotic variance that is no larger than that of $N^{1/2}\left(\tauhat_{gOB} - \SATE\right)$.
\end{theorem}

\section{Further insight into linear calibration}
As the calibration step may appear unusual, it is first worthwhile to consider what occurs when $\hat{\mu}_1$ and $\hat{\mu}_0$ are fit by separate ordinary least squares regressions with intercepts in the treated and control groups.  In this case, $\tauhat_{cal}$, $\hat{\tau}_{gOB}$, and $\hat{\tau}_{GBcal}$ are identical to Lin's (2013) estimator as the resulting prediction equations remain linear in the covariates themselves; see the appendices for a formal proof. Including $\hat{\mu}_0(x_i)$ and $\hat{\mu}_1(x_i)$ as predictors when calculating  $\hat{\tau}_{cal}$ also yields an insightful asymptotic equivalence with an inverse probability weighted (IPW) estimator. Suppose that despite having run a randomized experiment with known assignment probabilities, one fits a logistic regression model for the probability that $Z_i=1$ with $\Tilde{x}_i = (\muhat_{0}\left(x_{i}\right), \muhat_{1}\left(x_{i}\right))^\T$ as covariates along with an intercept. Call the resulting predicted probabilities $\hat{\pi}(\Tilde{x}_i)$. Consider the IPW estimator
$\hat{\tau}_{ipw} = N^{-1}\sum_{i=1}^N Z_iy_i(Z_i)/{\hat{\pi}(\Tilde{x}_i)} - N^{-1}\sum_{i=1}^N(1-Z_i)y_i(Z_i)/\{1-\hat{\pi}(\tilde{x}_i)\}$.
  
\begin{theorem}\label{thm: IPW}
Under the assumptions of Theorem \ref{thm: reg adj does no harm} $N^{1/2}(\hat{\tau}_{cal} - \hat{\tau}_{ipw}) = o_p(1)$. That is, the two estimators are asymptotically equivalent.
\end{theorem}

In light of the proof of Theorem \ref{thm: reg adj does no harm}, the result follows immediately from Corollary 1 of \citetrefs{shen2014} and is omitted; see also \citetrefs{HiranoImbensRidder} for related results on IPW estimators. In randomized experiments, inverse probability weighted estimators adjust for chance imbalances on variables contained within the propensity score model by reweighting individuals using their predicted probabilities of treatment. Imbalances on covariates are problematic only in so far as the covariates are predictive of the potential outcomes. By including both $\hat{\mu}_0$ and $\hat{\mu}_1$ within the propensity score model, $\hat{\tau}_{ipw}$ adjusts for chance imbalances on the predicted values for the potential outcomes under treatment and control. Both \citetrefs{rot12} and \citetrefs{col15} use predicted potential outcomes into propensity score models to establish non-inferior treatment effect estimators under a superpopulation. By the asymptotic equivalence provided by Theorem 3, $\hat{\tau}_{cal}$ can also be viewed in this light.  Importantly, this equivalence does not generally hold for the entire class of Oaxaca-Blinder estimators $\hat{\tau}_{gOB}$. The estimator $\hat{\tau}_{GBcal}$ suggested in \citetrefs[Equation 8]{generalizedOB} is equivalent to a peculiar IPW estimator where the treated and control outcomes are weighted with estimated probabilities stemming from \textit{different} logit models, with the treated (resp. control) outcomes weighted by estimated probabilities where only fitted values under treatment (resp. control) are used as covariates.

\section{Calibration and non-inferiority under superpopulation models}

Our results have viewed the potential outcomes and covariates as fixed, with the only randomness coming from random assignment. As the recruitment process for inclusion in randomized experiments often amounts to a convenience sample, we view this as a natural framework for inference. Alternative frameworks view $(y_i(1), y_i(0), x_i)$ as independent and identically distributed draws from some distribution $P$. Still others view $x_i$ as fixed but $(y_i(1), y_i(0))$ as independently distributed from some conditional distribution $P_{y(1), y(0)\mid x}$.  The corresponding estimands are the population average treatment effect (PATE) and the conditional average treatment effect (CATE), respectively \citeprefs{imb04}.  Theorems \ref{thm: reg adj does no harm} - \ref{thm: IPW} apply with $\bar{\tau}$ replaced by $\PATE$ or $\CATE$ and with our regularity conditions reformulated depending upon the superpopulation framework. See the appendices for details.

When the triples $(y_i(1), y_i(0), x_i)$ are viewed as independent and identically distributed draws from a distribution $P$, alternative estimators leveraging nonlinear adjustment exist which either guarantee non-inferiority under model misspecification, or leverage cross-fitting to attain semiparametric efficiency bounds. Methods guarding against model misspecification in parametric models include \citetrefs{tan10}, \citetrefs{rot12} and \citetrefs{col15}. These are not imputation estimators, require explicitly fitting a separate propensity score model, and in the case of \cite{rot12} require solving a non-convex optimization problem. Our calibration step returns an imputation estimator, and it can be implemented using off-the-shelf statistical software, simply requiring an initial (potentially nonlinear) regression adjustment within each treatment group followed by a linear regression using the fitted values as covariates. Calibration may be deployed in concert with augmented inverse probability weighted (AIPW) and targeted maximum likelihood estimators (TMLE) within randomized experiments. For instance, one may replace $\hat{\mu}_z$ by $\hat{\mu}_{OLS, z}$ of (\ref{eq:lincal}) to produce an AIPW estimator which is guaranteed to be non-inferior to both the uncalibrated AIPW estimator and the unadjusted difference in means. Should the original AIPW estimator achieve the semiparametric efficiency bound, so too will the estimator after calibration. Should the uncalibrated AIPW estimator be based upon a misspecified model, calibration can provide an improvement over both the uncalibrated AIPW estimator and the unadjusted estimator. The relationship between calibration and semiparametric efficiency is explored in detail in the appendices.  Furthermore, in the appendices we discuss the use of sample splitting to acheive finite sample unbiasedness of calibrated estimators under superpopulation models, thereby extending the results of \cite{LoopEstimator} and \cite{rothe}.  We also include simulations to emphasize these points.
\looseness=-1

\section{Illustrating the improvements from linear calibration}\label{sec: simulations}

To illustrate both the benefits of linear calibration with the prediction equations for both potential outcomes and the potential peril of proceeding without our calibration step, we present a simulation study using Poisson regression. The $s$th of $S$ data sets contains $N$ individuals upon whom an experimenter performs a completely randomized experiment with $n_{1} = \lceil p N \rceil$ treated units.  In our simulations $p = 0.8$. Each unit has a scalar covariate $x_i$, generated as independent and identically distributed draws from a Uniform random variable on $[-5, 5]$. We then generate the potential outcomes under treatment and control for each individual independently as $y_i(1)\sim Poisson\{\exp(x_i)\}$ and $y_i(0)\sim Poisson\{72-0.45\exp(x_i)\}$, where $Poisson(\lambda)$ is a Poisson distribution with rate $\lambda$. The Poisson regression model is thus correctly specified for the potential outcomes under treatment, but incorrectly specified for those under control.

For each data set, we draw $B$ treatment assignment allocations.  An experimenter observes  $y_{i}(Z_{i})$ and continuous covariates $x_{i}$ for each unit.  Using the observed responses after each randomized treatment allocation, we estimate the prediction functions $\hat{\mu}_0(x_i)$ and $\hat{\mu}_1(x_i)$ via separate Poisson regressions of $y_i(Z_i)$ on $x_{i}$ in the subgroups where $Z_i=0$ and $Z_i=1$ respectively. With the outcomes and response functions in tow, we form the difference-in-means estimator $\hat{\tau}_{unadj}$, generalized Oaxaca-Blinder estimator $\hat{\tau}_{gOB}$, the singly-calibrated estimator of \citetrefs[Equation 8]{generalizedOB} $\hat{\tau}_{GBcal}$, and our linearly-calibrated estimator $\hat{\tau}_{cal}$.

Table \ref{tab: sampling variances} compares the averages (over $s=1,\ldots,S$) of the ratios of the variances for the adjusted estimators to the unadjusted estimator when setting $S=1000$, $B=1000$, and varying $N$.  Even at $N=10,000$, both $\hat{\tau}_{gOB}$ and $\hat{\tau}_{GBcal}$ have markedly larger variances than the unadjusted difference in means estimator.  Contrast this with our proposed estimator $\hat{\tau}_{cal}$, which in this simulation study provides a substantial reduction in variance relative to the difference in means, $\hat{\tau}_{gOB}$ and $\hat{\tau}_{GBcal}$. This highlights the importance of including the prediction functions for both potential outcomes in the calibration step \eqref{eq:lincal} after nonlinear adjustment.  In the supplementary material we include a simulation using logistic regression which shows the same qualitative phenomena. We also include analysis of real-world data using Poisson regression adjustment investigating the effectiveness of a chemotherapeutic agent.

\begin{table}

\def~{\hphantom{0}}
\caption{Ratios of Monte Carlo variances for $\tauhat_{cal}$, $\tauhat_{GBcal}$, and $\tauhat_{gOB}$ to the difference in means estimator $\hat{\tau}_{unadj}$ for various experiment sizes $N$.  Each variance is based upon $B = 1000$ simulated treatment allocations for a given set of potential outcomes and covariates.  Results are averaged over $S = 1000$ simulated data sets.}
\centering
\begin{tabular}{lccccc}
                & $\hat{\text{var}}(\tauhat_{gOB}) / \hat{\text{var}}(\tauhat_{unadj})$ & $\hat{\text{var}}(\tauhat_{GBcal}) / \hat{\text{var}}(\tauhat_{unadj})$  & $\hat{\text{var}}(\tauhat_{cal}) / \hat{\text{var}}(\tauhat_{unadj})$ &  &  \\
        $N = 200$   &  1.732  &  1.717  &  0.703  &  &  \\
        $N = 500$   &  1.692  &  1.685  &  0.665  &  &  \\
        $N = 1000$  &  1.675  &  1.670  &  0.659  &  &  \\
        $N = 10000$ &  1.660  &  1.657  &  0.654  &  &
\end{tabular}
\label{tab: sampling variances}
\end{table}

\section{Discussion}
Linear calibration maps $\tauhat_{gOB}$ to $\tauhat_{cal}$ in such a way that asymptotic non-inferiority is guaranteed. The linearly-calibrated estimator can provide hypothesis tests and construct confidence intervals using the standard errors proposed in \citetrefs[\S 3.3]{generalizedOB} and a Gaussian approximation; detailed discussion of variance estimators for $\tauhat_{cal}$ is provided in the appendices. Extending \citetrefs{zha20cov}, one can further provide inference that is exact under the sharp null of no effect for any individual while remaining asymptotically valid for the sample average treatment effect. Calibration using only $\muhat_{0}$ and $\muhat_{1}$ fits into a more general class of algorithms wherein calibration is performed on the vectors ${\ddot{x}_{i} = \left(f(x_{i}),\muhat_{0}(x_{i}), \muhat_{1}(x_{i})\right)^{\T}}$ instead of just $\Tilde{x}_{i}$. Taking $f:\R^{k} \rightarrow \R^{\ell}$ adds an additional $\ell$ features to $\Tilde{x}_{i}$. Setting $f(x_i) = x_i$ yields an estimator which is asymptotically no less efficient than $\tauhat_{unadj}$, $\tauhat_{gOB}$, $\hat{\tau}_{cal}$, and Lin's (2013) estimator simultaneously and is akin to the estimator of \citetrefs{col15}; see the appendices for details. 
\looseness=-1

\section*{Supplementary material} \label{SM}
Appendices below includes proofs of Theorems 1 and 2, a discussion of variance estimation, pseudocode, superpopulation results, and a data example.  Code written in \texttt{R} is available at \url{https://github.com/PeterLCohen/OaxacaBlinderCalibration} to implement the method and to reproduce the simulations.


\bibliographystylerefs{apalike}
\bibliographyrefs{bibliography}

\newpage
\begin{center}
    \textbf{\huge{Appendix}}
\end{center}
\appendix

\setcounter{lemma}{0}
\setcounter{assumption}{0}
\setcounter{theorem}{0}
\setcounter{proposition}{0}

\renewcommand{\theassumption}{A.\arabic{assumption}}
\renewcommand{\thecorollary}{A.\arabic{corollary}}
\renewcommand{\thetheorem}{A.\arabic{theorem}}

\section{Extensions and Further Results}
\subsection{Feature Engineering}
While $\tauhat_{cal}$ is no less asymptotically efficient than $\tauhat_{gOB}$ and $\tauhat_{unadj}$, there are no guarantees that $\tauhat_{cal}$ offers an improvement over \citeauthor{lin13}'s \citeyearpar{lin13} regression-adjusted estimator, $\tauhat_{lin}$, in terms of asymptotic variance.  Intuitively, if the prediction functions $\muhat_{0}$ and $\muhat_{1}$ are extremely poor predictors of the potential outcomes, then linear regression on the raw features may substantially outperform even what calibration is able to correct.  Must an experimenter decide \textit{a priori} whether to use $\tauhat_{cal}$ or $\tauhat_{lin}$ when she seeks an estimator of $\SATE$ that is certain to be no less efficient than $\tauhat_{unadj}$?  In fact, there is an estimator which is non-inferior to both $\tauhat_{cal}$ and $\tauhat_{lin}$.

In the calibration algorithm of the main text pseudo-feature vectors were defined as $\Tilde{x}_{i} = \left(\muhat_{0}(x_{i}), \muhat_{1}(x_{i})\right)^{\T}$.  Instead, take $\ddot{x}_{i} = \left(f(x_{i}),\muhat_{0}(x_{i}), \muhat_{1}(x_{i})\right)^{\T}$ where $f:\R^{k} \rightarrow \R^{\ell}$ for some fixed $\ell$.  The function $f$ adds an additional $\ell$ features to $\tilde{x}_{i}$; perhaps incorporating tailored feature engineering guided by domain knowledge.  Define $\tauhat_{cal2}$ to be the calibrated Oaxaca-Blinder estimator based upon $\ddot{x}_{i}$ instead of $\Tilde{x}_{i}$; i.e., incorporate the additional features $f(x_{i})$ in the second-stage linear regression of calibration.

\begin{theorem}\label{supp thm: improves cal and lin}
    Assume the regularity conditions of Theorem~\ref{supp thm: reg adj does no harm}.  So long as the random vectors $\left(f(x_{i}),\muhat_{0}(x_{i}), \muhat_{1}(x_{i})\right)^{\T}$ are sufficiently regular a central limit theorem applies to $N^{1/2}\left(\tauhat_{cal2} - \SATE\right)$ and it is non-inferior to both $N^{1/2}\left(\tauhat_{cal} - \SATE\right)$ and $N^{1/2}\left(\tauhat_{lin} - \SATE\right)$ using the engineered features $f(x_{i})$.
\end{theorem}

\begin{corollary}
    Take $f(x_{i}) = x_{i}$; the resulting estimator $\tauhat_{cal2}$ is asymptotically no less efficient than $\tauhat_{cal}$ and the standard regression adjusted estimator of treatment effect, $\tauhat_{lin}$. 
\end{corollary}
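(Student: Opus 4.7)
The plan is to deduce the corollary as a direct specialization of Theorem~\ref{supp: thm: improves cal and lin}, so that essentially no new substantive work beyond routine bookkeeping is required.

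First, I would take $f(x_i) = x_i$ so that the augmented pseudo-feature vector in the definition of $\tauhat_{cal2}$ becomes $\ddot{x}_i = (x_i, \hat{\mu}_0(x_i), \hat{\mu}_1(x_i))^\T$. I would then verify that the ``sufficiently regular'' hypothesis of Theorem~\ref{supp: thm: improves cal and lin} on $\ddot{x}_i$ reduces, in this case, to the combination of the regularity already imposed on $\hat{\mu}_0, \hat{\mu}_1$ by Assumptions \ref{asm: stability}--\ref{asm: bounded fourth moment} together with the standard limiting mean/covariance and bounded fourth moment conditions on $x_i$ that underlie Lin's (2013) central limit theorem for $\tauhat_{lin}$. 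These latter conditions on $x_i$ are the natural companion assumptions whenever one wishes to compare against $\tauhat_{lin}$, so nothing genuinely new is being assumed.

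Next, I would observe that when $f(x_i) = x_i$ the object called ``$\tauhat_{lin}$ using the engineered features $f(x_i)$'' in the conclusion of Theorem~\ref{supp: thm: improves cal and lin} is literally the standard regression-adjusted estimator $\tauhat_{lin}$, since Lin's recipe applied to the engineered feature $x_i$ with treatment-by-covariate interactions coincides term for term with Lin's recipe applied to the original covariates. With these two identifications in place, both non-inferiority claims in the corollary follow by invoking Theorem~\ref{supp: thm: improves cal and lin}.

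Conceptually, the mechanism is that the column space spanned by $(1, x_i, \hat{\mu}_0(x_i), \hat{\mu}_1(x_i))$ contains both $(1, x_i)$ alone and $(1, \hat{\mu}_0(x_i), \hat{\mu}_1(x_i))$ alone; the second-stage OLS projection in the definition of $\tauhat_{cal2}$ therefore inherits the variance reductions of both $\tauhat_{lin}$ and $\tauhat_{cal}$ asymptotically. The only nontrivial work sits in Theorem~\ref{supp: thm: improves cal and lin} itself, where one must control the randomness that the additional derived features introduce into the calibration regression; for the corollary, the work is confined to the specialization and the identification of Lin's estimator on identity features with $\tauhat_{lin}$.
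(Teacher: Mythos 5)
Your proposal matches the paper's treatment: the corollary is obtained exactly as you describe, by specializing Theorem~\ref{supp: thm: improves cal and lin} to $f(x_i)=x_i$, noting (as in the paper's remarks) that the regularity on $\ddot{x}_i$ then reduces to the standard moment conditions on the raw covariates plus Assumptions~\ref{asm: stability}--\ref{asm: bounded fourth moment}, and identifying Lin's estimator on the identity features with $\tauhat_{lin}$. No gaps; this is the same argument.
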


We include proof of Theorem~\ref{supp thm: improves cal and lin} in Section~\ref{sec: proofs} below, along with a more precise statement of regularity conditions.

\subsection{Idempotence}
Let $\mathcal{O}$ denote the set of Oaxaca-Blinder estimators for $\muhat_{0}$ and $\muhat_{1}$ satisfying Assumptions~\ref{supp asm: stability} and \ref{supp asm: simple realizations} of the main text.  Calibration can be thought of as a mapping $\varphi: \mathcal{O} \rightarrow \mathcal{O}$ wherein ${\tauhat_{gOB} \xmapsto{\varphi} \tauhat_{cal}}$; repeated iterations of $\varphi$ induce dynamics on $\mathcal{O}$. A natural question is: do repeated applications of $\varphi$ provide additional improvements in terms of asymptotic efficiency?  
\begin{theorem}\label{supp thm: idempotent}
    $\varphi$ is an idempotent map on $\mathcal{O}$, i.e., $\varphi \circ \varphi = \varphi$.
\end{theorem}
Theorem~\ref{supp thm: idempotent} demonstrates a desirable feature: since $\varphi$ is idempotent, all of the improvement possible through calibration is achieved in one application of $\varphi$.
\begin{proof}
    This proof uses the same line of reasoning as that of Proposition~\ref{prop: lin is special case}, which we include below in Section~\ref{sec: proofs}.  See Remark~\ref{rem: idempotent} for the details of this argument. 
\end{proof}

\section{Regularity Conditions}
In this section we lay out regularity conditions on the potential outcomes and covariates that are sufficient for analyzing the asymptotic distributions of the estimators encountered in the main text.  Finite population inference asymptotics are taken with respect to a sequence of probability spaces which vary with the size of the finite population, $N$.  For each $N$, there are deterministic potential outcomes and covariates for each of the $N$ individuals; randomness enters the model only through the treatment allocation, $Z$.  Our results center around completely randomized experiments; i.e., $Z \sim \text{Unif}(\Omega_{CRE})$.  A basic requirement is that the completely randomized experiments are not asymptotically degenerate.
\begin{assumption}[Non-degeneracy]\label{supp asm: non-degen sampling limit}
	The proportion of treated units, $n_{1} / N$, limits to $p \in (0, 1)$ as $N \rightarrow \infty$.
\end{assumption}

Our main set of assumptions concern the prediction functions $\muhat_{0}$ and $\muhat_{1}$.  These assumptions play into the asymptotic analysis of generalized Oaxaca-Blinder estimators.  The following two regularity conditions match those of \citet{generalizedOB}.

\renewcommand{\theassumption}{\arabic{assumption}}
\setcounter{assumption}{0}
\begin{assumption}[Stability]\label{supp asm: stability}
    There exists a deterministic sequence of functions $\{\overdotmu_{1}^{(N)}\}_{N \in \N}$ such that
    \begin{equation*}
        \left(\frac{1}{N}\sum_{i = 1}^{N}||\overdotmu_{1}^{(N)}(x_{i}) - \muhat_{1}(x_{i})||^{2} \right)^{1/2} = o_{P}(1). 
    \end{equation*}
    The left-hand-side of the formula above satisfies the properties of a norm on functions; this norm is denoted $||\cdot||_{N}$ and so an equivalent statement is that $\left|\left|\muhat_{1} - \overdotmu_{1}^{(N)}\right|\right|_{N} = o_{P}(1)$.
    
    We assume that an analogous sequence, $\{\overdotmu_{0}^{(N)}\}_{N \in \N}$, exists for $\muhat_{0}$.  For notational simplicity we drop the superscripted index and write $\overdotmu_{0}$ and $\overdotmu_{1}$, but the dependence upon $N$ remains an important background detail.  
\end{assumption}

An instructive example of this assumption in practice is when $\muhat_{0}$ and $\muhat_{1}$ are derived via linear regression as in \citet{lin13}.  The deterministic sequence $\{\overdotmu_{1}^{(N)}\}_{N \in \N}$ can be taken as the {population-level} ordinary least squares linear predictor of treated outcome given covariates.  

\begin{assumption}[Vanishing Error Process]\label{app asm: error process vanishes}
    For a function $f: \R^{k} \rightarrow \R$ define\footnote{We use the notation $\G_{N, z}$ to follow that of \citet{generalizedOB}, but the stochastic process $\{\G_{N, z}(\cdot)\}$ indexed by functions $f$ has been implicitly examined elsewhere in the literature, e.g., \citet{rothe}.}
    \begin{equation*}
        \G_{N, z}(f) = N^{-1/2}\sum_{i = 1}^{N}\left(\frac{\indicatorFunction{Z_{i} = z}f(x_{i})}{n_{z}/N} - f(x_{i}) \right).
    \end{equation*}
    Assume that, for $z \in \{0, 1\}$, the error stochastic process $\left|\G_{N, z}(\overdotmu_{z}) -  \G_{N, z}(\muhat_{z})\right|$ vanishes in probability; formally $\left|\G_{N, z}(\overdotmu_{z}) -  \G_{N, z}(\muhat_{z})\right| = o_{P}\left(1\right)$.
\end{assumption}

We include the prediction unbiasedness assumption of \citet{generalizedOB} in order to rigorously discuss asymptotic results for $\tauhat_{gOB}$.

\begin{assumption}[Prediction Unbiasedness]\label{supp asm: prediction unbiasedness}
    For $z=0,1$,   $\sum_{i:Z_i=z}\hat{\mu}_z(x_i) = \sum_{i: Z_i=z}y_i(Z_i)$.
\end{assumption}

The next assumptions constrain the sets of potential outcomes and imputed responses so that central limit behaviour holds for the estimators considered in the main text.  Asymptotic theory for finite population inference builds upon combinatorial analogues for classical probability theory results; see \citet{madowCLT}, \citet{erdosRenyiCLT}, \citet{hajekCLT}, and \citet{hoeffdingCLT} among many others.  \citet{FiniteCLT} developed results for finite population central limit theorems in causal inference; the regularity conditions of their work have become standard in the literature and form the basis for our regularity conditions.  The conditions below naturally generalize those of \citet{lin13} and \citet{freedmanRegAdj2, freedmanRegAdj1} to the nonlinear imputation context of \citet{generalizedOB}.  

\begin{assumption}[Limiting Means and Variances]\label{supp asm: means and covs stabilize}
	  The mean vector and covariance matrix of $(y_i(0), y_i(1), \dot{\mu}_0(x_i), \dot{\mu}_0(x_i))^\T$ have limiting values.  For instance, for ${z=0,1}$ there exists a limiting value $\bar{y}(z)_\infty$ such that $\lim_{N\rightarrow \infty} N^{-1}\sum_{i=1}^Ny_i(z) = \bar{y}(z)_\infty$.
\end{assumption}

\begin{assumption}[Bounded Fourth Moments]\label{supp asm: bounded fourth moment}
 There exists some $C < \infty$ for which, for all ${z = 0,1}$ and all $N$,
		$N^{-1}\sum_{i = 1}^{N}\{y_{i}(z)\}^{4} < C$ and $N^{-1}\sum_{i = 1}^{N}\left\{\dot{\mu}_z(x_i)\right\}^{4} < C.$
\end{assumption}

For comparing $\tauhat_{cal}$ to $\tauhat_{lin}$ we will, at times, need to constrain the covariates $x_{i}$ directly so that \citeauthor{lin13}'s regressions have appropriate asymptotic properties.  For such applications we make the assumption:

\begin{assumption}\label{supp asm: means and covs stabilize for ORIGINAL COVARIATES}
    The mean vector and covariance matrix of $(y_i(0), y_i(1), x_i^{\T})^\T$ have limiting values.  Furthermore, the bounded fourth moment assumption applies componentwise to the raw covariates, i.e., $N^{-1}\sum_{i = 1}^{N}\left\{x_{ij}\right\}^{4} < C.$ where $x_{ij}$ denotes the $j$th coordinate of $x_{i}$.
\end{assumption}

We use the ``bar" notation to denote the mean, i.e., $\bar{{y}}(z) = N^{-1}\sum_{i = 1}^{N}y_{i}(z)$.  Let $\Sigma_{y(0)}$ denote the finite population covariance of the control outcomes, $\Sigma_{y(1)}$ denote its treated analogue, and $\Sigma_{y(z)x}$ the finite population covariance matrix of the joint vectors $(y_{i}(z), x_{i}^{\T})^{\T}$.  In light of this notation, Assumptions~\ref{supp asm: means and covs stabilize} and \ref{supp asm: means and covs stabilize for ORIGINAL COVARIATES} imply that  $\lim_{N \rightarrow \infty}\bar{{y}}(z) = \bar{{y}}_{\infty}(z)$, $\lim_{N \rightarrow \infty}\Sigma_{y(z)} = \Sigma_{y(z),\infty}$, $\lim_{N \rightarrow \infty}\Sigma_{y(z)x} = \Sigma_{y(z)x,\infty}$ for $z \in \{0, 1\}$, etc.  The limiting covariance matrices are assumed to be positive definite.



In the proofs we maintain the implicit assumption that the prediction functions $\muhat_{0}$ and $\muhat_{1}$ are non-collinear and asymptotically almost surely non-constant when evaluated over the set $\{x_{i}\}_{i = 1}^{N}$.  We make the same assumption for $\overdotmu_{0}$ and $\overdotmu_{1}$. Extensions to the rank deficient case are straightforward and are discussed in Section \ref{sec: rank deficient}. Rank deficiency can occur in practice: for instance, if both $\muhat_0$ and $\muhat_1$ are linear regressions and $x_i$ is scalar, then $\overdotmu_0$ and $\overdotmu_1$ are perfectly collinear. Importantly, handling the rank-deficient case does not require additional assumptions, but would complicate the notation used in the proofs. 


For comparison with the assumptions of \citet{generalizedOB} and \citet{rothe} we include the following entropy-based assumption.
\begin{assumption}[Typically Simple Realizations]\label{supp asm: simple realizations}
    There exists a sequence of sets of functions $\tsrC$, which may vary with $N$, such that the random function $\muhat_{0}$ falls into this class asymptotically almost surely.  Formally, $\Prob{\muhat_{0} \in \tsrC} \rightarrow 1$.  Furthermore, the sets of functions are ``small" in the sense that
    \begin{equation*}
        \int_{0}^{1}\sup_{N}\sqrt{\log  \mathscr{N}(\tsrC,||\cdot||_{N}, s)}\, ds < \infty
    \end{equation*}
    where $\mathscr{N}(\tsrC,||\cdot||_{N}, s)$ is the $s$-covering number of $\tsrC$ under the metric induced by $||\cdot||_{N}$.
    An analogous statement holds for $\muhat_{1}$ with a sequence of sets $\tsrT$.
\end{assumption}

The inequality in Assumption~\ref{supp asm: simple realizations} is common in central limit theorems for stochastic processes: it integrates the square root of the maximal entropy without bracketing of $\mathscr{A}_{N, z}$; see \citet[Chapter 11]{ledouxTalagrand}, \citet[Chapter 2]{VanDerVaartWellner}, and \citet{maximalInequalityVDVWellner} for background on the theory and uses of covering numbers and entropy bounds.  The particular upper bound on the region of integration in Assumption~\ref{supp asm: simple realizations} is unimportant, as shown in Lemma~\ref{lem: radius is arbitrary}.  As a result, we say Assumption~\ref{supp asm: simple realizations} holds so long as there is any $D > 0$ for which $\int_{0}^{D}\sup_{N}\sqrt{\log  \mathscr{N}(\mathscr{A}_{N, z},||\cdot||_{N}, s)}\, ds < \infty$ for both $z \in \{0, 1\}$.

\section{Helpful Technical Results}
\renewcommand{\thelemma}{A.\arabic{lemma}}

\begin{lemma}\label{lem: handy variance equation}
    Let $\{a_{i}\}_{i = 1}^{N}$ and $\{b_{i}\}_{i = 1}^{N}$ be two sets of fixed scalars.  Let $Z \sim \textit{Unif}(\Omega_{CRE})$.  The variance of
    \begin{equation}\label{eqn: DiM}
        \frac{1}{n_{1}}\sum_{i = 1}^{N}Z_{i}a_{i} - \frac{1}{n_{0}}\sum_{i = 1}^{N}(1 - Z_{i})b_{i}
    \end{equation}
    is 
    \begin{equation*}
        \frac{n_{0}n_{1}}{N} \cdot \frac{1}{N - 1}\sum_{i = 1}^{N}\left(\frac{a_{i}}{n_{1}} + \frac{b_{i}}{n_{0}} - \overline{\frac{a}{n_{1}} + \frac{b}{n_{0}}} \right)^{2}
    \end{equation*}
    where $\overline{v} = N^{-1}\sum_{j = 1}^{N}v_{i}$ for $v \in \R^{N}$.
\end{lemma}
\begin{proof}
    We start with a simple algebraic manipulation:
    \begin{equation*}
        \frac{1}{n_{1}}\sum_{i = 1}^{N}Z_{i}a_{i} - \frac{1}{n_{0}}\sum_{i = 1}^{N}(1 - Z_{i})b_{i} = \sum_{i = 1}^{N}Z_{i}\left( \frac{a_{i}}{n_{1}} + \frac{b_{i}}{n_{0}} \right) - \sum_{i = 1}^{N}\frac{b_{i}}{n_{0}}.
    \end{equation*}
    Since the second term on the right is constant with respect to $Z$ it plays no part in the variance of \eqref{eqn: DiM} so
    \begin{equation*}
        \Variance{\frac{1}{n_{1}}\sum_{i = 1}^{N}Z_{i}a_{i} - \frac{1}{n_{0}}\sum_{i = 1}^{N}(1 - Z_{i})b_{i}} = \Variance{\sum_{i = 1}^{N}Z_{i}\left( \frac{a_{i}}{n_{1}} + \frac{b_{i}}{n_{0}} \right)}.
    \end{equation*}
    The term on the right is the variance of the population total for drawing a sample of size $n_{1}$ without replacement from the set 
    \begin{equation*}
        \left\{\left(\frac{a_{1}}{n_{1}} + \frac{b_{1}}{n_{0}}\right), \ldots, \left(\frac{a_{N}}{n_{1}} + \frac{b_{N}}{n_{0}} \right)\right\}.
    \end{equation*}
    This variance is well understood in the survey-sampling community; see, for instance, \citet{bootstrapInSurveys}.  Specifically, 
    \begin{equation*}
        \Variance{\sum_{i = 1}^{N}Z_{i}\left( \frac{a_{i}}{n_{1}} + \frac{b_{i}}{n_{0}} \right)} = \frac{n_{0}n_{1}}{N} \cdot \frac{1}{N - 1}\sum_{i = 1}^{N}\left(\frac{a_{i}}{n_{1}} + \frac{b_{i}}{n_{0}} - \overline{\frac{a}{n_{1}} + \frac{b}{n_{0}}} \right)^{2}
    \end{equation*}
\end{proof}
\begin{lemma}\label{lem: regression minimizes asymptotic variance of DiM}
    Consider any two scalars, $w_{0}, v_{0} \in \R$ and any two vectors $W, V \in \R^{\ell}$.  For a set of features $\{\chi_{i}\}_{i = 1}^{N} \subset \R^{\ell}$ define the ``residuals''
    \begin{align*}
        r_{i}(0) &= y_{i}(0) - \left(w_{0} + W^{T}\chi_{i} \right)\\
        r_{i}(1) &= y_{i}(1) - \left(v_{0} + V^{T}\chi_{i} \right).
    \end{align*}
    \begin{enumerate}[i)]
        \item \label{itm: first point}The variance of 
        \begin{equation}\label{eqn: residual DiM}
            \frac{1}{n_{1}}\sum_{i = 1}^{N}Z_{i}r_{i}(1) - \frac{1}{n_{0}}\sum_{i = 1}^{N}(1 - Z_{i})r_{i}(0)
        \end{equation}
        is minimized at
        \begin{align}
            (w_{0}^{*}, W^{*}) &\in \argmin_{w_{0}, W}\left[\sum_{i = 1}^{N} \left\{y_{i}(0) - \left(w_{0} + W^{T}\chi_{i}\right) \right\}^{2}\right]\label{eqn: control reg},\\
            (v_{0}^{*}, V^{*}) &\in \argmin_{v_{0}, V}\left[\sum_{i = 1}^{N} \left\{y_{i}(1) - \left(v_{0} + V^{T}\chi_{i}\right) \right\}^{2}\right]\label{eqn: treated reg}.
        \end{align}
        \item \label{itm: second point} Let $\mathcal{X} = [\chi_{1}, \ldots, \chi_{N}]^{T}$.  The variance of \eqref{eqn: residual DiM} achieves a strict global minimum at $w_{0}^{*}$, $v_{0}^{*}$, $W^{*}$, and $V^{*}$ when the matrix $\mathcal{X}^{T}\mathcal{X}$ is nonsingular.
    \end{enumerate}
    
\end{lemma}
\begin{proof}
    By Lemma~\ref{lem: handy variance equation}, the variance of \eqref{eqn: residual DiM} is proportional to
    \begin{multline*}
        \sum_{i = 1}^{N}\Bigg(\frac{y_{i}(1) - \left(v_{0} + V^{T}\chi_{i} \right)}{n_{1}} + \frac{y_{i}(0) - \left(w_{0} + W^{T}\chi_{i} \right)}{n_{0}} - \\ \overline{\frac{y(1) - \left(v_{0} + V^{T}\chi \right)}{n_{1}} + \frac{y(0) - \left(w_{0} + W^{T}\chi \right)}{n_{0}}} \Bigg)^{2}.
    \end{multline*}
    Rearranging terms gives that the variance of \eqref{eqn: residual DiM} is proportional to
    \begin{multline}\label{eqn: variance as OLS residuals var}
        \sum_{i = 1}^{N}\Bigg(\left(\frac{y_{i}(1)}{n_{1}} + \frac{y_{i}(0)}{n_{0}}\right) - \left(\left(\frac{v_{0} }{n_{1}} + \frac{w_{0}}{n_{0}}\right) + \left(\frac{V}{n_{1}} + \frac{W}{n_{0}}\right)^{T}\chi_{i} \right) - \\ \overline{\left(\frac{y(1)}{n_{1}} + \frac{y(0)}{n_{0}}\right) - \left(\left(\frac{v_{0} }{n_{1}} + \frac{w_{0}}{n_{0}}\right) + \left(\frac{V}{n_{1}} + \frac{W}{n_{0}}\right)^{T}\chi \right)} \Bigg)^{2}.
    \end{multline}
    Since $w_{0}^{*}$ and $W^{*}$ are the intercept and slope, respectively, of the ordinary least squares regression of $y_{i}(0)$ on $\chi_{i}$ it follows that $n_{0}^{-1}w_{0}^{*}$ and $n_{0}^{-1}W^{*}$ are the intercept and slope, respectively, of the ordinary least squares regression of $n_{0}^{-1}y_{i}(0)$ on $\chi_{i}$.  Likewise, $n_{1}^{-1}v_{0}^{*}$ and $n_{1}^{-1}V^{*}$ are the intercept and slope, respectively, of the ordinary least squares regression of $n_{1}^{-1}y_{i}(1)$ on $\chi_{i}$.
    
    Consider now the regression of $n_{1}^{-1}y_{i}(1) + n_{0}^{-1}y_{i}(0)$ on $\chi_{i}$.  Writing the design matrix of this regression as $\mathcal{X}$, the ordinary least squares slope is
    \begin{align*}
        (\mathcal{X}^{T}\mathcal{X})^{-1}\mathcal{X}^{T}\left(\frac{y(1)}{n_{1}} + \frac{y(0)}{n_{0}} \right) &= (\mathcal{X}^{T}\mathcal{X})^{-1}\mathcal{X}^{T}\frac{y(1)}{n_{1}} + (\mathcal{X}^{T}\mathcal{X})^{-1}\mathcal{X}^{T}\frac{y(0)}{n_{0}} \\
        &= \frac{V^{*}}{n_{1}} + \frac{W^{*}}{n_{0}}.
    \end{align*}
    The second equality holds for the following reason: the first term on the right is the slope coefficient in the regression of $n_{1}^{-1}y_{i}(1)$ on $\chi_{i}$ and so it must match $n_{1}^{-1}V^{*}$, similar reasoning applies to the second term.\footnote{In the rank-deficient case that $\mathcal{X}^{T}\mathcal{X}$ is not invertible, these slope terms are not uniquely defined; however, picking a canonical pseudoinverse, e.g., the Moore-Penrose pseudoinverse, obviates this concern.}  Likewise, it follows that the intercept of the regression of $n_{1}^{-1}y_{i}(1) + n_{0}^{-1}y_{i}(0)$ on $\chi_{i}$ is given by $n_{1}^{-1}v_{0}^{*} + n_{0}^{-1}w_{0}^{*}$.  In total, the argument above implies that the ordinary least squares regression predictor of ${n_{1}^{-1}y_{i}(1) + n_{0}^{-1}y_{i}(0)}$ based upon $\chi_{i}$ is
    \begin{equation*}
        \left(\left(\frac{v^{*}_{0} }{n_{1}} + \frac{w^{*}_{0}}{n_{0}}\right) + \left(\frac{V^{*}}{n_{1}} + \frac{W^{*}}{n_{0}}\right)^{T}\chi_{i} \right).
    \end{equation*}
    
    Equation \eqref{eqn: variance as OLS residuals var} is simply the variance of the residuals from attempting to predict ${n_{1}^{-1}y_{i}(1) + n_{0}^{-1}y_{i}(0)}$ via
    \begin{equation*}
        \left(\left(\frac{v_{0} }{n_{1}} + \frac{w_{0}}{n_{0}}\right) + \left(\frac{V}{n_{1}} + \frac{W}{n_{0}}\right)^{T}\chi_{i} \right).
    \end{equation*}
    Since ordinary least squares linear regression minimizes the variance of the residuals, it follows that \eqref{eqn: variance as OLS residuals var} is minimized at $w^{*}_{0}$, $v^{*}_{0}$, $W^{*}$, and $V^{*}$.  Consequently, the variance of \eqref{eqn: residual DiM} is minimized at $w^{*}_{0}$, $v^{*}_{0}$, $W^{*}$, and $V^{*}$, as required to show Part~\ref{itm: first point}.
    
    The result of Part~\ref{itm: second point} follows from the uniqueness of the global minima in the full-rank regression problems \eqref{eqn: control reg} and \eqref{eqn: treated reg}.
\end{proof}

\delineate

\begin{lemma}\label{lem: consistency of OLS coefficients}
    Consider the quantities
    \begin{align*}
        \hat{\beta} &= (\hat{\beta}_{0}, \hat{\beta}_{1}) = \argmin_{\beta_{0}, \beta_{1}} \left[\sum_{i \st Z_{i} = 1} \left\{y_{i}(Z_{i}) - \left(\beta_{0} + \beta_{1}^{T}\begin{bmatrix}\muhat_{0}(x_{i}) \\ \muhat_{1}(x_{i}) \end{bmatrix}\right) \right\}^{2}\right],\\
        \dot{\beta} &= (\dot{\beta}_{0}, \dot{\beta}_{1}) = \argmin_{\beta_{0}, \beta_{1}}\left[\sum_{i = 1}^{N} \left\{y_{i}(1) - \left(\beta_{0} + \beta_{1}^{T}\begin{bmatrix}\dot{\mu}_{0}(x_{i}) \\ \dot{\mu}_{1}(x_{i}) \end{bmatrix}\right) \right\}^{2}\right].
    \end{align*}
Then $||\hat{\beta} - \dot{\beta}||_{2} = o_{P}(1)$ under Assumptions \ref{supp asm: stability} - \ref{supp asm: bounded fourth moment}.\footnote{This lemma generalizes Lemma 7 of \citet{generalizedOB} and also incorporates {both} the imputed outcomes.}
\end{lemma}
\begin{proof}
    Define the design matrices
    \begin{equation*}
        \hat{U}_{1} = \begin{bmatrix}
                        1 & \muhat_{0}(x_{i_1}) & \muhat_{1}(x_{i_1}) \\
                        \vdots & \vdots & \vdots \\
                        1 & \muhat_{0}(x_{i_{n_{1}}}) & \muhat_{1}(x_{i_{n_{1}}})
        \end{bmatrix} \; \text{and} \; \dot{U}_{1} = \begin{bmatrix}
                        1 & \dot{\mu}_{0}(x_{1}) & \dot{\mu}_{1}(x_{1}) \\
                        \vdots & \vdots & \vdots \\
                        1 & \dot{\mu}_{0}(x_{N}) & \dot{\mu}_{1}(x_{N})
        \end{bmatrix}
    \end{equation*}
    where $i_{1}, \ldots, i_{n_1}$ are the indices of the treated units.  Standard ordinary least squares regression theory gives closed-form solutions for $\hat{\beta}$ and $\dot{\beta}$ in terms of $\hat{U}_{1}$ and $\dot{U}_{1}$, respectively, via
    \begin{equation*}
        \hat{\beta} = \left(  \hat{U}_{1}^{\T} \hat{U}_{1} \right)^{-1} \hat{U}_{1}^{\T}\begin{bmatrix}y_{i_{1}}(1) \\ \vdots \\ y_{i_{n_1}}(1) \end{bmatrix} \; \text{and} \; \dot{\beta} = \left(  \dot{U}_{1}^{\T} \dot{U}_{1} \right)^{-1} \dot{U}_{1}^{\T}\begin{bmatrix}y_{1}(1) \\ \vdots \\ y_{N}(1) \end{bmatrix}.
    \end{equation*}
    Conveniently rewriting these regression coefficients by multiplying by one gives
        \begin{equation*}
        \hat{\beta} = \left(\frac{1}{n_{1}}\hat{U}_{1}^{\T} \hat{U}_{1} \right)^{-1} \frac{1}{n_{1}} \hat{U}_{1}^{\T}\begin{bmatrix}y_{i_{1}}(1) \\ \vdots \\ y_{i_{n_1}}(1) \end{bmatrix} \; \text{and} \; \dot{\beta} = \left( \frac{1}{N} \dot{U}_{1}^{\T} \dot{U}_{1} \right)^{-1}\frac{1}{N} \dot{U}_{1}^{\T}\begin{bmatrix}y_{1}(1) \\ \vdots \\ y_{N}(1) \end{bmatrix}.
    \end{equation*}
    
    We first show that 
    \begin{equation}\label{eqn: second part of reg coeffs converges}
        \left| \left| \frac{1}{n_{1}} \hat{U}_{1}^{\T}\begin{bmatrix}y_{i_{1}}(1) \\ \vdots \\ y_{i_{n_1}}(1) \end{bmatrix} - \frac{1}{N} \dot{U}_{1}^{\T}\begin{bmatrix}y_{1}(1) \\ \vdots \\ y_{N}(1) \end{bmatrix} \right| \right|_{2} = o_{P}(1).
    \end{equation}
    These are vectors in $\R^{3}$.  To show \eqref{eqn: second part of reg coeffs converges} we show that the difference in each coordinate is vanishing.  The first coordinate is 
    \begin{equation*}
        \left| \frac{1}{n_{1}}\sum_{Z_{i} = 1} y_{i}(1) - \frac{1}{N}\sum_{i = 1}^{N} y_{i}(1) \right|
    \end{equation*}
    which converges in probability to zero by the weak law of large numbers for finite populations; see for instance \citet[Lemma A.1]{lin13}.  The second term is
    \begin{equation}\label{eqn: second term}
        \left| \frac{1}{n_{1}}\sum_{Z_{i} = 1}\muhat_{0}(x_{i}) y_{i}(1) - \frac{1}{N}\sum_{i = 1}^{N}\dot{\mu}_{0}(x_{i}) y_{i}(1)\right|.
    \end{equation}
    This term vanishes in probability by an application of the triangle inequality, the Cauchy-Schwarz inequality, and a finite population law of large numbers.  The details proceed analogously to those in the proof of Lemma 7 (on page 35) of the arXiv version of \citet{generalizedOB}\footnote{The arXiv version of \citet{generalizedOB} can be found at \url{https://arxiv.org/pdf/2004.11615.pdf}.} except we use $\muhat_{0}(\cdot)$ in place of $\exp(\hat{\theta}_{0}^{\T} \cdot)$; this presents no problem because we have assumed the stability of $\muhat_{0}$.  The third term is 
    \begin{equation*}
        \left| \frac{1}{n_{1}}\sum_{Z_{i} = 1}\muhat_{1}(x_{i}) y_{i}(1) - \frac{1}{N}\sum_{i = 1}^{N}\dot{\mu}_{1}(x_{i}) y_{i}(1)\right|
    \end{equation*}
    which vanishes for exactly the same reason as the second term.  In total, we have shown that \eqref{eqn: second part of reg coeffs converges} holds.
    
    Next, we turn attention to showing that
    \begin{equation}\label{eqn: frobenius norm vanishes}
        \left|\left| \frac{1}{n_{1}}\hat{U}_{1}^{\T} \hat{U}_{1}  -  \frac{1}{N} \dot{U}_{1}^{\T} \dot{U}_{1} \right| \right|_{F} = o_{P}(1).
    \end{equation}
    The matrix in \eqref{eqn: frobenius norm vanishes} is a 3-by-3 symmetric matrix; the term $\frac{1}{n_{1}}\hat{U}_{1}^{\T} \hat{U}_{1}$ takes the form
    \begingroup
        \renewcommand*{\arraystretch}{2} 
        \begin{equation*}
            \begin{bmatrix}\frac{1}{n_{1}}\sum_{Z_{i} = 1}1 & \frac{1}{n_{1}}\sum_{Z_{i} = 1}\muhat_{0}(x_{i}) & \frac{1}{n_{1}}\sum_{Z_{i} = 1}\muhat_{1}(x_{i})\\
            * & \frac{1}{n_{1}}\sum_{Z_{i} = 1}\muhat_{0}(x_{i})^{2} & \frac{1}{n_{1}}\sum_{Z_{i} = 1}\muhat_{0}(x_{i})\muhat_{1}(x_{i})\\
            * & * & \frac{1}{n_{1}}\sum_{Z_{i} = 1}\muhat_{1}(x_{i})^{2}
            \end{bmatrix}
        \end{equation*}
    \endgroup
    where stars indicate symmetry.  The matrix $\frac{1}{N} \dot{U}_{1}^{\T} \dot{U}_{1}$ has an analogous construction where $\overdotmu$ replaces $\muhat$, sums $\sum_{i =1}^{N}$ replace $\sum_{Z_{i} = 1}$, and $N$ replaces $n_{1}$.
    
    In the matrix of \eqref{eqn: frobenius norm vanishes}, the top-left corner element is $\left|\frac{1}{n_{1}}\sum_{Z_{i} = 1}1 - \frac{1}{N}\sum_{i = 1}^{N}1 \right|$ which is deterministically zero.  The remaining two diagonal elements are
    \begin{align*}
        &\left|\frac{1}{n_{1}}\sum_{Z_{i} = 1}\muhat_{0}(x_{i})^{2} - \frac{1}{N}\sum_{i = 1}^{N}\dot{\mu}_{0}(x_{i})^{2} \right|,\\
        &\left|\frac{1}{n_{1}}\sum_{Z_{i} = 1}\muhat_{1}(x_{i})^{2} - \frac{1}{N}\sum_{i = 1}^{N}\dot{\mu}_{1}(x_{i})^{2} \right|.
    \end{align*}
    As before, the convergence of each of these terms in probability to zero is guaranteed by following the argument used to prove that the denominator terms of Lemma 7 from \citet{generalizedOB} vanish.  The only difference is that $\muhat_{z}(\cdot)$ stands in place of $\exp(\hat{\theta}_{z}^{\T} \cdot)$, but this presents no obstacle as we have assumed that the prediction functions $\muhat_{0}(\cdot)$ and $\muhat_{1}(\cdot)$ are stable.  
    
    By the symmetry of the matrix in \eqref{eqn: frobenius norm vanishes}, it suffices to show that 
     \begin{align}
        &\left|\frac{1}{n_{1}}\sum_{Z_{i} = 1}\muhat_{z}(x_{i}) - \frac{1}{N}\sum_{i = 1}^{N}\dot{\mu}_{z}(x_{i}) \right| = o_{P}(1) \quad \text{for } z \in \{0, 1\},\label{eqn: predicted mean vs mudot mean}\\
        &\left|\frac{1}{n_{1}}\sum_{Z_{i} = 1}\muhat_{1}(x_{i})\muhat_{0}(x_{i}) - \frac{1}{N}\sum_{i = 1}^{N}\dot{\mu}_{1}(x_{i})\dot{\mu}_{0}(x_{i}) \right| = o_{P}(1)\label{eqn: cross term}.
    \end{align}
    The same argument used to show that \eqref{eqn: second term} vanishes in probability applies to \eqref{eqn: predicted mean vs mudot mean}; replacing each $y_{i}(1)$ in that argument gives the desired result.  To show, \eqref{eqn: cross term} first add zero to the left-hand-side
    \begin{multline*}
        \left|\frac{1}{n_{1}}\sum_{Z_{i} = 1}\muhat_{1}(x_{i})\muhat_{0}(x_{i}) - \frac{1}{N}\sum_{i = 1}^{N}\dot{\mu}_{1}(x_{i})\dot{\mu}_{0}(x_{i}) \right| = \Bigg|\frac{1}{n_{1}}\sum_{Z_{i} = 1}\muhat_{1}(x_{i})\muhat_{0}(x_{i}) - \frac{1}{n_{1}}\sum_{Z_{i} = 1}\dot{\mu}_{1}(x_{i})\dot{\mu}_{0}(x_{i})  +\\ \frac{1}{n_{1}}\sum_{Z_{i} = 1}\dot{\mu}_{1}(x_{i})\dot{\mu}_{0}(x_{i}) - \frac{1}{N}\sum_{i = 1}^{N}\dot{\mu}_{1}(x_{i})\dot{\mu}_{0}(x_{i}) \Bigg|.
    \end{multline*}
    Applying the triangle inequality for the right-hand-side then bounds above by
    \begin{equation*}
        \underbrace{\Bigg|\frac{1}{n_{1}}\sum_{Z_{i} = 1}\left(\muhat_{1}(x_{i})\muhat_{0}(x_{i}) - \dot{\mu}_{1}(x_{i})\dot{\mu}_{0}(x_{i})\right)\Bigg|}_{\text{Term 1}}  + \underbrace{\Bigg| \frac{1}{n_{1}}\sum_{Z_{i} = 1}\dot{\mu}_{1}(x_{i})\dot{\mu}_{0}(x_{i}) - \frac{1}{N}\sum_{i = 1}^{N}\dot{\mu}_{1}(x_{i})\dot{\mu}_{0}(x_{i}) \Bigg|}_{\text{Term 2}}.
    \end{equation*}
    Just as in \citet[Lemma 7]{generalizedOB} the Cauchy-Schwarz inequality and the stability of $\muhat_{0}(\cdot)$ and $\muhat_{1}(\cdot)$ imply that Term 1 vanishes in probability, and the finite population law of large numbers implies that Term 2 vanishes.  In total, we have shown \eqref{eqn: frobenius norm vanishes}.
    
    The matrix inverse map $M \mapsto M^{-1}$ is continuous over the set of positive definite matrices with the metric defined by the Frobenious norm.  Assuming that the continuous mapping theorem applies \citep[Chapter 18]{asymptoticStats_vdv} (i.e., assume that the matrices $\frac{1}{n_{1}}\hat{U}_{1}^{\T} \hat{U}_{1}$ and $\frac{1}{N} \dot{U}_{1}^{\T} \dot{U}_{1}$ have their minimum eigenvalue bounded away from zero
    ) then \eqref{eqn: frobenius norm vanishes} implies that
    \begin{equation}\label{eqn: inverses also vanish}
        \left|\left| \left(\frac{1}{n_{1}}\hat{U}_{1}^{\T} \hat{U}_{1}\right)^{-1}  -  \left(\frac{1}{N} \dot{U}_{1}^{\T} \dot{U}_{1}\right)^{-1} \right| \right|_{F} = o_{P}(1).
    \end{equation}
    
    Finally, combining \eqref{eqn: second part of reg coeffs converges} with \eqref{eqn: inverses also vanish} 
    yields that  $||\hat{\beta} - \dot{\beta}||_{2} = o_{P}(1)$ as desired.
\end{proof}
Lemma~\ref{lem: consistency of OLS coefficients} applies to the ordinary least squares linear regression coefficients computed within the treated group.  The same logic applies to the ordinary least squares linear regression coefficients computed within the control group. 

\begin{remark}\label{rem: lemma also applies to the second calibrated estimator}
    Suppose that the design matrices used in the proof of Lemma~\ref{lem: consistency of OLS coefficients} were replaced with
     \begin{equation*}
        \hat{U}_{1} = \begin{bmatrix}
                        1 & \muhat_{0}(x_{i_1}) & \muhat_{1}(x_{i_1}) & f(x_{i_1})^{\T}\\
                        \vdots & \vdots & \vdots & \vdots \\
                        1 & \muhat_{0}(x_{i_{n_{1}}}) & \muhat_{1}(x_{i_{n_{1}}}) & f(x_{i_{n_{1}}})^{\T}
        \end{bmatrix} \; \text{and} \; \dot{U}_{1} = \begin{bmatrix}
                        1 & \dot{\mu}_{0}(x_{1}) & \dot{\mu}_{1}(x_{1}) & f(x_{1})^{\T}\\
                        \vdots & \vdots & \vdots & \vdots\\
                        1 & \dot{\mu}_{0}(x_{N}) & \dot{\mu}_{1}(x_{N}) & f(x_{N})^{\T}\\
        \end{bmatrix}.
    \end{equation*}
    As long as the vectors $(\dot{\mu}_{0}(x_{1}), \dot{\mu}_{1}(x_{1}), f(x_{1})^{\T})^{\T}$ satisfy the regularity conditions applied originally (Assumption~\ref{supp asm: means and covs stabilize} and Assumption~\ref{supp asm: bounded fourth moment}) they are amenable to the law of large numbers proofs used in the componentwise analyses for the proof of Lemma~\ref{lem: consistency of OLS coefficients}.  In particular, when $f(\cdot)$ is the identity function, this requirement reduces to Assumption~\ref{supp asm: means and covs stabilize for ORIGINAL COVARIATES} jointly with Assumptions~\ref{supp asm: means and covs stabilize} and \ref{supp asm: bounded fourth moment}.
    
    Consequently, the proof of Lemma~\ref{lem: consistency of OLS coefficients} is equally useful for showing the consistency of the ordinary least squares linear regression coefficients used in $\tauhat_{cal2}$.
\end{remark}

\delineate

\begin{lemma}\label{lem: uniformly bounded N-norm}
    Assumptions~\ref{supp asm: means and covs stabilize} and \ref{supp asm: bounded fourth moment} imply that the quantity
    \begin{equation*}
        \left|\left|\begin{bmatrix}
                                                        \overdotmu_{0} \\ 
                                                        \overdotmu_{1}
                                                    \end{bmatrix}\right|\right|_{N}
    \end{equation*}
    is bounded uniformly in $N$.
\end{lemma}
\begin{proof}
    By Assumption~\ref{supp asm: bounded fourth moment} there exists some finite constant which upper bounds 
	\begin{equation*}
		\frac{\sum_{i = 1}^{N}\left(\overdotmu_{z}(x_{i}) - N^{-1}\sum_{i = 1}^{N}\overdotmu_{z}(x_{i})\right)^{4}}{N}
	\end{equation*}
	for both $z \in \{0, 1\}$ and all $N$.	Bounded fourth central moments imply bounded second central moments, so there exists some constant which upper bounds 
	\begin{equation*}
		\frac{\sum_{i = 1}^{N}\left(\overdotmu_{z}(x_{i}) - N^{-1}\sum_{i = 1}^{N}\overdotmu_{z}(x_{i})\right)^{2}}{N}
	\end{equation*}
	for both $z \in \{0, 1\}$ and all $N$.
	Decomposing the variance into the difference between the second moment and the squared first moment yields that the quantities
	\begin{equation}\label{eqn: variance decomp}
		\frac{1}{N}\sum_{i = 1}^{N}\overdotmu_{z}(x_{i})^{2} - \left\{\frac{1}{N}\sum_{i = 1}^{N}\overdotmu_{z}(x_{i})\right\}^{2}
	\end{equation}
	are uniformly bounded above by a constant.
	
	Assumption~\ref{supp asm: means and covs stabilize} implies that the quantities $N^{-1}\sum_{i = 1}^{N}\overdotmu_{z}(x_{i})$ limit to fixed values for both $z \in \{0, 1\}$; combining this with \eqref{eqn: variance decomp} yields that $\frac{1}{N}\sum_{i = 1}^{N}\overdotmu_{z}(x_{i})^{2}$
	is uniformly bounded above by some constant.  This implies that $\frac{1}{N}\sum_{i = 1}^{N}\left(\overdotmu_{0}(x_{i})^{2} + \overdotmu_{1}(x_{i})^{2}\right)$
	is uniformly bounded.  Lastly,
	\begin{align*}
	    \frac{1}{N}\sum_{i = 1}^{N}\left(\overdotmu_{0}(x_{i})^{2} + \overdotmu_{1}(x_{i})^{2}\right) &= \frac{1}{N}\sum_{i = 1}^{N}\left|\left|\begin{bmatrix}
                            \overdotmu_{0}(x_{i}) \\ 
                            \overdotmu_{1}(x_{i})
                        \end{bmatrix}\right|\right|_{2}^{2}\\
        &= \left|\left|\begin{bmatrix}
            \overdotmu_{0} \\ 
            \overdotmu_{1}
        \end{bmatrix}\right|\right|_{N}^{2}
	\end{align*}
	which concludes the proof.
\end{proof}

\begin{lemma}\label{lem: general prediction unbiaseness in the population for superpopulation models}
    Let $\{\muhat_{z}^{(N)}\}_{N \in \N}$ be a sequence of prediction unbiased functions and assume that there exists some sequence $\{\overdotmu_{z}^{(N)}\}_{N \in \N}$ for which $||\muhat_{z}^{(N)} - \overdotmu_{z}^{(N)}||_{N}$ converges in probability to $0$ with respect to randomness in the superpopulation model (i.e., randomness in $Z$, the potential outcomes and the covariates).  Further assume that $N^{-1}\sum_{i = 1}^{N}\left(\overdotmu_{z}^{(N)}(x_{i}) - y_{i}(z) \right)^{2} = o_{P}(N)$.  Under the preceding conditions, without loss of generality we may assume that $\Expectation{\overdotmu_{z}^{(N)}(x_{i}) - y_{i}(z)} = 0$.
\end{lemma}
\begin{proof}
    Let $\{\overdotmu_{z}^{(N)}\}_{N \in \N}$ for which $||\muhat_{z}^{(N)} - \overdotmu_{z}^{(N)}||_{N}$ converges in probability to $0$ and $N^{-1}\sum_{i = 1}^{N}\left(\overdotmu_{z}^{(N)}(x_{i}) - y_{i}(z) \right)^{2} = o_{P}(n)$.  By the strong law of large numbers,
    $$
        \left| N^{-1}\sum_{i = 1}^{N}\left(\overdotmu_{z}^{(N)}(x_{i})) - y_{i}(z)\right) - \Expectation{\overdotmu_{z}^{(N)}(x_{i}) - y_{i}(z)} \right|
    $$
    converges almost surely to zero.  By the argument of Lemma~3 in the appendix of \citet{generalizedOB}, $\left| N^{-1}\sum_{i = 1}^{N}\left(\overdotmu_{z}^{(N)}(x_{i})) - y_{i}(z)\right)\right| = o_{P}(1)$.  Furthermore, $\Expectation{\overdotmu_{z}^{(N)}(x_{i}) - y_{i}(z)}$ is deterministic; so we must have that $\left|\Expectation{\overdotmu_{z}^{(N)}(x_{i}) - y_{i}(z)}\right| = o(1)$.  Consequently, we may center $\overdotmu_{z}^{(N)}$ by subtracting off $\Expectation{\overdotmu_{z}^{(N)}(x_{i}) - y_{i}(z)}$; defining
    $$
        \dot{\nu}_{z}^{(N)} = \overdotmu_{z}^{(N)} - \Expectation{\overdotmu_{z}^{(N)}(x_{i}) - y_{i}(z)}
    $$
    we retain that $||\muhat_{z}^{(N)} - \dot{\nu}_{z}^{(N)}||_{N}$ converges in probability to $0$ and $N^{-1}\sum_{i = 1}^{N}\left(\dot{\nu}_{z}^{(N)}(x_{i}) - y_{i}(z) \right)^{2} = o_{P}(n)$.
\end{proof}

    \begin{lemma}\label{lem: conditional to unconditional convergence}
        Consider a sequence of random elements $\left\{(\mathcal{A}_{N}, \mathcal{B}_{N})\right\}_{N \in \N}$ and a function $f$ for which $f(\mathcal{A}_{N}, \mathcal{B}_{N}) \in \R$ for all $N \in \N$.  If $f(a_{N}, \mathcal{B}_{N})$ converges in distribution to the random variable $\chi$ for all sequences $\{a_{N}\}_{N \in \N}$ with $a_{N}$ in some measurable set $A_{N}$ such that $\Prob{\mathcal{A}_{N} \in A_{N}} = 1$, then $f(\mathcal{A}_{N}, \mathcal{B}_{N})$ converges in distribution to $\chi$.
    \end{lemma}
    \begin{proof}
        For notation, let $\mathcal{L}(X)$ denote the law of a random variable $X$ and let $\mathcal{L}(X \given Y)$ denote the conditional law of $X$ given $Y$.  The \textit{bounded Lipschitz metric} is a metric on the space of probability measures;
        \begin{equation*}
            d(F, G) := \sup_{f \in \mathcal{F}_{BL}}\left|\int f\,dF - \int f\,dG \right|,
        \end{equation*}
        where $\mathcal{F}_{BL}$ is the class of $1$-Lipschitz functions mapping into $[-1, 1]$.

        The bounded Lipschitz metric metrizes weak convergence of probability measures \citet[Theorem 1.12.4]{VanDerVaartWellner}.  Consequently, the fact that $f(a_{N}, \mathcal{B}_{N})$ converges in distribution to the random variable $\chi$ for $\mathcal{A}_{N}$-almost all $\{a_{N}\}_{N \in \N}$
        \begin{equation*}
            d\left(\mathcal{L}\left(f(a_{N}, \mathcal{B}_{N})\right), \mathcal{L}(\chi)\right) \rightarrow 0
        \end{equation*}
        for $\mathcal{A}_{N}$-almost all $\{a_{N}\}_{N \in \N}$.  Equivalently, the random variable $d\left(\mathcal{L}\left(f(\mathcal{A}_{N}, \mathcal{B}_{N}) \given \mathcal{A}_{N}\right), \mathcal{L}(\chi)\right)$ converges almost surely to $0$ with respect to randomness in $\mathcal{A}_{N}$.  As almost sure convergence implies convergence in probability, $d\left(\mathcal{L}\left(f(\mathcal{A}_{N}, \mathcal{B}_{N}) \given \mathcal{A}_{N}\right), \mathcal{L}(\chi)\right)$ converges in probability to $0$ with respect to randomness in $\mathcal{A}_{N}$.  Theorem 4.1 of \citet{lowDim_HighDim} then implies that $d\left(\mathcal{L}\left(f(\mathcal{A}_{N}, \mathcal{B}_{N})\right), \mathcal{L}(\chi)\right)$ converges to $0$.  Finally, again using that the bounded Lipschitz metric metrizes weak convergence we conclude that unconditionally $f(\mathcal{A}_{N}, \mathcal{B}_{N})$ converges in distribution to $\chi$.
    \end{proof}

\section{Error Processes, Asymptotic Linearity, and Calibration}
For a function $f: \R^{k} \rightarrow \R$ \citet{generalizedOB} define 
\begin{equation*}
    \G_{N, z}(f) = N^{-1/2}\sum_{i = 1}^{N}\left(\frac{\indicatorFunction{Z_{i} = z}f(x_{i})}{n_{z}/N} - f(x_{i}) \right).
\end{equation*}

The collection $\left\{\G_{N, z}\right\}_{f \in F}$ forms a stochastic process indexed by functions $f$ ranging over some family of functions $F$.  Showing that $\left|\G_{N, z}(\overdotmu_{z}) -  \G_{N, z}(\muhat_{z})\right|$ decays quickly in probability is crucial for the central limit theorems undergirding asymptotic inference for imputation-based estimators.  This general principle is true across finite population, fixed covariate, and superpopulation models.
The first consequence of Assumption~\ref{app asm: error process vanishes} is that, under mild regularity conditions, the error process for the calibrated estimator $\muhat_{OLS, z}$ vanishes at the same rate.  Without loss of generality -- for the sake of readability -- we focus only on the prediction equations in the treated group $(z = 1)$ and so we follow the notation of Lemma~\ref{lem: consistency of OLS coefficients}.

Define
\begin{equation*}
    \overdotmu_{OLS, 1}(\cdot) = \dot{\beta}_{0} + \dot{\beta}_{1}^{\T}\begin{bmatrix}\overdotmu_{0}(\cdot) \\ \overdotmu_{1}(\cdot) \end{bmatrix}.
\end{equation*}

\renewcommand{\theproposition}{A.\arabic{proposition}}
\setcounter{proposition}{0}

\begin{proposition}\label{prop: error proces vanishing is inherited by consistent OLS}
    For $z \in \{0, 1\}$ suppose that the random variables $\hat{\beta}_{0}$ and $\hat{\beta}_{1}$ converge in probability to fixed values $\dot{\beta}_{0}$ and $\dot{\beta}_{1}$, respectively.  Under Assumption~\ref{app asm: error process vanishes} it immediately holds that
    \begin{equation*}
        \left|\G_{N, 1}(\overdotmu_{OLS, 1}) -  \G_{N, 1}(\muhat_{OLS, 1})\right| = o_{P}\left(1\right).
    \end{equation*}
    Of course, an analogous result holds for the control quantities as well.
\end{proposition}
\begin{proof}
    Recall the definition 
    \begin{align*}
        \muhat_{OLS, 1}(\cdot) = \hat{\beta}_{0} + \hat{\beta}_{1}^{\T}\begin{bmatrix}\muhat_{0}(\cdot) \\ \muhat_{1}(\cdot) \end{bmatrix}.
    \end{align*}
    Notice that $\G_{N, 1}(\cdot)$ is linear in its argument in the sense that 
    \begin{align*}
        \G_{N, 1}(f + g) &= \G_{N, 1}(f) + \G_{N, 1}(g),\\
        \G_{N, 1}(cf) &= c\G_{N, 1}(f) \text{ for } c\in \R,
    \end{align*}
    so the quantity $ \left|\G_{N, 1}(\overdotmu_{OLS, 1}) -  \G_{N, 1}(\muhat_{OLS, 1})\right|$ decomposes naturally across the linear combination of terms in $\overdotmu_{OLS, 1}$ and $\muhat_{OLS, 1}$.  The assumption of convergence in probability of $\hat{\beta}_{0}$ and $\hat{\beta}_{1}$ to $\dot{\beta}_{0}$ and $\dot{\beta}_{1}$, respectively, combined with Slutsky's theorem and Assumption~\ref{app asm: error process vanishes} concludes the proof.
\end{proof}

In its most abstract sense, the typical analysis of an imputation-based estimator proceeds by writing the estimator in terms of the difference in means of ``residuals" $\left\{\dot{\epsilon}_{i}(1) = y_{i}(1) - \overdotmu_{1}(x_{i})\right\}_{i = 1}^{N}$ and $\left\{\dot{\epsilon}_{i}(0) = y_{i}(0) - \overdotmu_{0}(x_{i})\right\}_{i = 1}^{N}$, possibly some correction factor, and an error term related to $\G_{N, z}(\overdotmu_{z}) -  \G_{N, z}(\muhat_{z})$.  For example, in the finite population context of \citet{generalizedOB}
\begin{multline*}
    \tauhat_{gOB} - \SATE = \underbrace{\frac{1}{n_{1}}\sum_{i \st Z_{i} = 1}\dot{\epsilon}_{i}(1) - \frac{1}{n_{0}}\sum_{i \st Z_{i} = 1}\dot{\epsilon}_{i}(0)}_{\text{Difference in Residual Means}} + \\
    \underbrace{N^{-1/2}\left( \G_{N, 1}(\overdotmu_{1}) -  \G_{N, 1}(\muhat_{1})\right) - N^{-1/2}\left( \G_{N, 0}(\overdotmu_{0}) -  \G_{N, 0}(\muhat_{0})\right)}_{\text{Error Term}}.
\end{multline*}
Likewise, in the context of \citet{rothe}
\begin{multline*}
    \tauhat_{rothe} - \PATE = \underbrace{\frac{1}{n_{1}}\sum_{i \st Z_{i} = 1}\dot{\epsilon}_{i}(1) - \frac{1}{n_{0}}\sum_{i \st Z_{i} = 1}\dot{\epsilon}_{i}(0)}_{\text{Difference in Residual Means}} + \\
    \underbrace{\frac{1}{N}\sum_{i = 1}^{N}\left(\overdotmu_{1}(x_{i}) - \overdotmu_{0}(x_{i})\right) }_{\text{Prediction Unbiasedness Correction Term}} - \PATE + \\ 
    \underbrace{N^{-1/2}\left( \G_{N, 1}(\overdotmu_{1}) -  \G_{N, 1}(\muhat_{1})\right) - N^{-1/2}\left( \G_{N, 0}(\overdotmu_{0}) -  \G_{N, 0}(\muhat_{0})\right)}_{\text{Error Term}}.
\end{multline*}
We provide a formal definition of $\tauhat_{rothe}$ and a detailed analysis of calibrating $\tauhat_{rothe}$ in Section~\ref{sec: semiparametric efficiency}.  

Under Assumption~\ref{app asm: error process vanishes} these error terms are $o_{P}(N^{-1/2})$ and so asymptotic analyses of such estimators can safely ignore the error terms even after scaling by $N^{1/2}$; consequently central limit theorems hold for estimators such as $N^{1/2}(\tauhat_{gOB} - \SATE)$ and $N^{1/2}(\tauhat_{rothe} - \PATE)$ under mild conditions.

\section{Main Proofs}\label{sec: proofs}

In the main text, we remark upon the special case of $\tauhat_{cal}$ when the prediction functions $\muhat_{0}$ and $\muhat_{1}$ are the linear regression prediction functions of outcome based upon covariates, fitted separately in the treated and control groups.  We now formalize this and provide proof.  The identity $\tauhat_{gOB} = \tauhat_{lin}$ in this case is well known; see for instance \citet{missingDataPerspective} or \citet{generalizedOB} and the identity $\tauhat_{GBcal} = \tauhat_{lin}$ follows from similar logic to that presented below.

\begin{proposition}\label{prop: lin is special case}
    Take the original prediction functions to be the linear predictors $\muhat_{1}$ and $\muhat_{0}$ derived from separate regressions in the treated and control groups, respectively; the calibrated estimator $\tauhat_{cal}$ yields exactly $\tauhat_{lin}$.
\end{proposition}
\begin{proof}
    For $z \in \{0, 1\}$ the original linear prediction function $\muhat_{z}$ is defined as ${\muhat_{z}(x) = \beta_{z}^{\T}x + \gamma_{z}}$ where $\beta_{z}$ and $\gamma_{z}$ are the solutions to the $L_{2}$-norm empirical risk minimization problem
    \begin{equation}\label{eqn: first stage OLS}
        \left(\beta_{z}, \gamma_{z}\right) = \argmin_{\substack{\beta_z \in \R^{k} \\ \gamma_{z} \in \R}}\left[ \sum_{i \st Z_{i} = z} \left(y_{i}(z) - \beta_z^{\T}x_{i} - \gamma_{z} \right)^{2}\right].
    \end{equation}
    Let $\Tilde{x}_{i} = \left(\muhat_{0}(x_{i}), \muhat_{1}(x_{i})\right)^{\T}$ .  The calibrated predictor $\muhat_{OLS, z}$ is defined similarly as $\muhat_{OLS, z}(\Tilde{x}) = \beta_{OLS, z}^{\T}\Tilde{x} + \gamma_{OLS, z}$ for
    \begin{equation}\label{eqn: second stage OLS}
        (\beta_{OLS, z}, \gamma_{OLS, z}) = \argmin_{\substack{B_z \in \R^{2}\\ \eta_{z} \in \R}} \left[ \sum_{i \st Z_{i} = z} \left(y_{i}(z) - B_z^{\T}\Tilde{x}_{i} - \eta_{z} \right)^{2}\right].
    \end{equation}
    Writing $B_z \in \R^{2}$ as the vector $(a_z, b_z)^{\T}$ we expand the objective function of \eqref{eqn: second stage OLS} via
    \begin{align}
            \sum_{i \st Z_{i} = z} \left(y_{i}(z) - B_{z}^{\T}\Tilde{x}_{i} - \eta_{z}\right)^{2} &= \sum_{i \st Z_{i} = z} \Big\{y_{i}(z) - \left(a_z\muhat_{0}(x_{i}) +  b_z\muhat_{1}(x_{i}) + \eta_{z}\right) \Big\}^{2}\nonumber\\
        &= \sum_{i \st Z_{i} = z} \Bigg[y_{i}(z) - \Bigg\{a_z\beta_{0}^{\T}x_{i} + b_{z}\beta_{1}^{\T}x_{i} +  (a_z\gamma_{0} + b_{z}\gamma_{1} + \eta_{z})\Bigg\} \Bigg]^{2}\nonumber\\
        &= \sum_{i \st Z_{i} = z} \Bigg\{y_{i}(z) - \left(a_z\beta_{0} + b_z\beta_{1}\right)^{\T}x_{i} - (a_z\gamma_{0} + b_z\gamma_{1} + \eta_{z}) \Bigg\}^{2}\nonumber.
    \end{align}
    Since $a_z\gamma_{0} + b_z\gamma_{1} + \eta_{z}$ ranges over all of $\R$ just as $\eta_{z}$ does, by a change of variables we can write $\muhat_{OLS, z}(\Tilde{x})$ directly as a function of the original covariates.  This reduces to $\textnormal{\ss}_{OLS, z}^{\T}x + \gamma_{OLS, z}$ derived from the solution of
    \begin{equation}\label{eqn: reformulated second stage}
        \left( \textnormal{\ss}_{OLS, z}, \gamma_{OLS, z} \right) =  \argmin_{\substack{\textnormal{\ss}_z \in \text{Span}(\beta_{0}, \beta_{1}) \\ \eta_{z} \in \R}} \left[ \sum_{i \st Z_{i} = z} \left\{y_{i}(z) - \textnormal{\ss}_z^{\T}x_{i} - \eta_{z} \right\}^{2}\right].
    \end{equation}
    where $\text{Span}(\beta_{0}, \beta_{1})$ denotes the linear subspace of $\R^{k}$ generated by $\beta_{0}$ and $\beta_{1}$.
    
    Trivially $\beta_{0}, \beta_{1} \in \text{Span}(\beta_{0}, \beta_{1})$ and, by construction, these offer the solutions to the unconstrained optimization problems of \eqref{eqn: first stage OLS} for $z=0$ and $z=1$, respectively; thus $\beta_{0}$ and $\beta_{1}$ offer feasible optimal solutions to the constrained problem \eqref{eqn: reformulated second stage} as well for $z=0$ and $z=1$, respectively.  Consequently, $\textnormal{\ss}_{OLS, z} = \beta_{z}$ and $\gamma_{OLS, z} = \gamma_{z}$ for $z \in \{0, 1\}$.  Thus, $\muhat_{OLS, 0}(\Tilde{x}_{i}) = \muhat_{0}(x_{i})$ and likewise $\muhat_{OLS, 1}(\Tilde{x}_{i}) = \muhat_{1}(x_{i})$.  From this it follows that $\tauhat_{cal} = \tauhat_{lin}$.
\end{proof}

\begin{remark}\label{rem: idempotent}
    The proof of Proposition~\ref{prop: lin is special case} amounts to demonstrating that when the original prediction algorithms $\muhat_{0}$ and $\muhat_{1}$ are affine functions of the features $x_{i}$ the optimality of the ordinary least squares regression coefficients implies that $\tauhat_{cal}$ must match $\tauhat_{lin}$.  
    
    Suppose that one calibrates $\muhat_{0}$ and $\muhat_{1}$ to form the estimator $\tauhat_{cal}$ and the predictors $\muhat_{OLS, 0}$ and $\muhat_{OLS, 1}$.  In light of Theorem~\ref{supp thm: idempotent}, imagine that one calibrates again: in other words, one forms new predictors $\muhat_{OLS2, 0}$ and $\muhat_{OLS2, 1}$ by regressing upon $\left(\muhat_{OLS, 0}(x_{i}), \muhat_{OLS, 1}(x_{i})\right)^{\T}$ in the treated and control groups separately.  Mirroring the logic of the proof above, we can write $\muhat_{OLS2, 0}$ and $\muhat_{OLS2, 1}$ directly as affine functions of the original prediction functions $\left(\muhat_{0}(x_{i}), \muhat_{1}(x_{i})\right)^{\T}$ and the optimality of the original linear regressions used to form $\muhat_{OLS, 0}$ and $\muhat_{OLS, 1}$ implies that 
    \begin{equation*}
        \muhat_{OLS, 0}(x_{i}) = \muhat_{OLS2, 0}(x_{i}); \quad \muhat_{OLS, 1}(x_{i}) = \muhat_{OLS2, 1}(x_{i}).
    \end{equation*}
    This implies the idempotence of calibration.
\end{remark}

\delineate

\renewcommand*{\theproposition}{A.\arabic{proposition}}
For the purpose of easy computation, we present the following proposition.  Section~\ref{sec: code} below provides further implementation details.

\begin{proposition}\label{prop: fully imputation}
    The calibrated estimator can be written exclusively in terms of the predicted values: ${\tauhat_{cal} = N^{-1}\sum_{i = 1}^{N}\left\{\muhat_{OLS, 1}(x_{i}) - \muhat_{OLS, 0}(x_{i}) \right\}}$. 
\end{proposition}
\begin{proof}
    This follows from Proposition~\ref{prop: linear calibration is prediction unbiased}; which we present below.
\end{proof}

\begin{proposition}\label{prop: linear calibration is prediction unbiased}
    With probability one, the prediction functions $\muhat_{OLS, 0}$ and $\muhat_{OLS, 1}$ satisfy
     \begin{equation}\label{eqn: pred unbiased}
        \sum_{i= 1}^{N}\indicatorFunction{Z_{i} = z}\muhat_{OLS, z}(x_{i}) = \sum_{i= 1}^{N}\indicatorFunction{Z_{i} = z}y_{i}(z).
    \end{equation}
    In the terminology of \citet[Definition 1]{generalizedOB} $\muhat_{OLS, 0}$ and $\muhat_{OLS, 1}$ are {prediction unbiased}.
\end{proposition}
\begin{proof}
    This is a direct consequence of the first order optimality conditions for the intercept term in the regressions defining $\muhat_{OLS, 0}$ and $\muhat_{OLS, 1}$.
\end{proof}

\begin{remark}
    Since the proof of Proposition~\ref{prop: linear calibration is prediction unbiased} relied on the first-order optimality condition for the intercept term in the regressions defining $\muhat_{OLS, 0}$ and $\muhat_{OLS, 1}$ the logic of the proof applies equally well when $\tilde{x}_{i}$ is replaced with $\ddot{x}_{i}$, thereby easily incorporating feature engineering.  
\end{remark}

\delineate 

In what follows we focus attention on the resulting prediction equations in the treated group ($z=1$) after linear calibration, and will at times introduce quantities with the dependence on $z$ suppressed for readability. The proofs for the control group are analogous. Recall from before the quantities
\begin{align}
    \hat{\beta} &= (\hat{\beta}_{0}, \hat{\beta}_{1}) = \argmin_{\beta_{0}, \beta_{1}} \left[\sum_{i \st Z_{i} = 1} \left\{y_{i}(Z_{i}) - \left(\beta_{0} + \beta_{1}^{T}\begin{bmatrix}\muhat_{0}(x_{i}) \\ \muhat_{1}(x_{i}) \end{bmatrix}\right) \right\}^{2}\right],\label{eqn: useful notation sample beta}\\
    \dot{\beta} &= (\dot{\beta}_{0}, \dot{\beta}_{1}) = \argmin_{\beta_{0}, \beta_{1}}\left[\sum_{i = 1}^{N} \left\{y_{i}(1) - \left(\beta_{0} + \beta_{1}^{T}\begin{bmatrix}\dot{\mu}_{0}(x_{i}) \\ \dot{\mu}_{1}(x_{i}) \end{bmatrix}\right) \right\}^{2}\right].\label{eqn: useful notation ``population`` beta}
\end{align}
The quantities $\hat{\beta}_{0}$ and $\hat{\beta}_{1}$ are the ordinary least squares linear regression intercept and slope coefficients, respectively, for the sample regression of treated outcomes on the imputed values $\muhat_{0}(x_{i})$ and $\muhat_{1}(x_{i})$.  Consequently $\hat{\beta}$ can be computed based upon the observed data.  In contrast, $\dot{\beta}_{0}$ and $\dot{\beta}_{1}$ are the population-level ordinary least squares regression intercept and slope coefficients, respectively, for the regression of all treated potential outcomes upon the ``imputed" values $\overdotmu_{0}(x_{i})$ and $\overdotmu_{1}(x_{i})$.  Importantly, $\dot{\beta}$ is not generally computable from the observed data for two reasons:
\begin{enumerate}
    \item The minimization problem defining $\dot{\beta}$ requires knowledge of the treated potential outcomes for each individual, not just those who received treatment.
    \item In practice, the functions $\overdotmu_{0}(x_{i})$ and $\overdotmu_{1}(x_{i})$ are frequently unknown.  For instance, they can often be taken to be solutions to population-level risk minimization procedures; see \citet{generalizedOB} for some examples of this.
\end{enumerate}

Define $\overdotmu_{OLS, 1}(\cdot)$ to be $\dot{\beta}_{0} + \dot{\beta}_{1}^{\T}\begin{bmatrix}\dot{\mu}_{0}(\cdot) \\ \dot{\mu}_{1}(\cdot) \end{bmatrix}$.  


\begin{proposition}\label{prop: stability inherited}
    If the original prediction functions $\muhat_{0}$ and $\muhat_{1}$ are stable in the sense of Assumption~\ref{supp asm: stability}, then under Assumptions~\ref{supp asm: means and covs stabilize} and \ref{supp asm: bounded fourth moment} the calibrated prediction functions $\muhat_{OLS, 0}$ and $\muhat_{OLS, 1}$ are also stable.
\end{proposition}
\begin{proof}
    Our proof focuses on $\muhat_{OLS, 1}$ and uses the notation of \eqref{eqn: useful notation sample beta} and \eqref{eqn: useful notation ``population`` beta}; the proof for $\muhat_{OLS, 0}$ follows the same logic, but requires exchanging control quantities with treated quantities in the obvious places.
    
    To show stability of $\muhat_{OLS, 1}$ we will show that
    \begin{equation*}
        \frac{1}{N}\sum_{i = 1}^{N}\Big|\muhat_{OLS, 1}(x_{i}) - \overdotmu_{OLS, 1}(x_{i})  \Big|^{2} = o_{P}(1).
    \end{equation*}
    This amounts to examining
    \begin{equation}\label{eqn: initial stability equation}
        \frac{1}{N}\sum_{i = 1}^{N}\left|\left(\hat{\beta}_{0} + \hat{\beta}_{1}^{\T}\begin{bmatrix}\muhat_{0}(x_{i}) \\ \muhat_{1}(x_{i}) \end{bmatrix} \right)- \left(\dot{\beta}_{0} + \dot{\beta}_{1}^{\T}\begin{bmatrix}\dot{\mu}_{0}(x_{i}) \\ \dot{\mu}_{1}(x_{i}) \end{bmatrix} \right)  \right|^{2}.
    \end{equation}
    Clearly this quantity is non-negative, so our objective is to provide an $o_{P}(1)$ upper bound.  By the triangle inequality \eqref{eqn: initial stability equation} is upper bounded by
    \begin{equation*}
                \frac{1}{N}\sum_{i = 1}^{N}\left\{\left|\hat{\beta}_{0} - \dot{\beta}_{0} \right| + \left| \hat{\beta}_{1}^{\T}\begin{bmatrix}\muhat_{0}(x_{i}) \\ \muhat_{1}(x_{i}) \end{bmatrix} - \dot{\beta}_{1}^{\T}\begin{bmatrix}\dot{\mu}_{0}(x_{i}) \\ \dot{\mu}_{1}(x_{i}) \end{bmatrix} \right|\right\}^{2}.
    \end{equation*}
    
    By Lemma~\ref{lem: consistency of OLS coefficients} the term $\left|\hat{\beta}_{0} - \dot{\beta}_{0} \right| = o_{P}(1)$ and so  \eqref{eqn: initial stability equation} is upper bounded by
    \begin{equation}\label{eqn: cleared away the first term in upper bound}
                \frac{1}{N}\sum_{i = 1}^{N}\left\{o_{P}(1) + \left| \hat{\beta}_{1}^{\T}\begin{bmatrix}\muhat_{0}(x_{i}) \\ \muhat_{1}(x_{i}) \end{bmatrix} - \dot{\beta}_{1}^{\T}\begin{bmatrix}\dot{\mu}_{0}(x_{i}) \\ \dot{\mu}_{1}(x_{i}) \end{bmatrix} \right|\right\}^{2}.
    \end{equation}
    In light of \eqref{eqn: cleared away the first term in upper bound} we turn our focus to the term 
    \begin{align}
        \left| \hat{\beta}_{1}^{\T}\begin{bmatrix}\muhat_{0}(x_{i}) \\ \muhat_{1}(x_{i}) \end{bmatrix} - \dot{\beta}_{1}^{\T}\begin{bmatrix}\dot{\mu}_{0}(x_{i}) \\ \dot{\mu}_{1}(x_{i}) \end{bmatrix} \right| &= \left| \hat{\beta}_{1}^{\T}\begin{bmatrix}\muhat_{0}(x_{i}) \\ \muhat_{1}(x_{i}) \end{bmatrix} - \hat{\beta}_{1}^{\T}\begin{bmatrix}\overdotmu_{0}(x_{i}) \\ \overdotmu_{1}(x_{i}) \end{bmatrix} +  \hat{\beta}_{1}^{\T}\begin{bmatrix}\overdotmu_{0}(x_{i}) \\ \overdotmu_{1}(x_{i}) \end{bmatrix} - \dot{\beta}_{1}^{\T}\begin{bmatrix}\dot{\mu}_{0}(x_{i}) \\ \dot{\mu}_{1}(x_{i}) \end{bmatrix} \right|\nonumber\\
        &\leq \left| \hat{\beta}_{1}^{\T}\begin{bmatrix}\muhat_{0}(x_{i}) \\ \muhat_{1}(x_{i}) \end{bmatrix} - \hat{\beta}_{1}^{\T}\begin{bmatrix}\overdotmu_{0}(x_{i}) \\ \overdotmu_{1}(x_{i}) \end{bmatrix} \right| + \left| \hat{\beta}_{1}^{\T}\begin{bmatrix}\overdotmu_{0}(x_{i}) \\ \overdotmu_{1}(x_{i}) \end{bmatrix} - \dot{\beta}_{1}^{\T}\begin{bmatrix}\dot{\mu}_{0}(x_{i}) \\ \dot{\mu}_{1}(x_{i}) \end{bmatrix} \right|\nonumber\\
        &\leq \left|\left|\hat{\beta}_{1} \right|\right|_{2}\left|\left| \begin{bmatrix}\muhat_{0}(x_{i}) \\ \muhat_{1}(x_{i}) \end{bmatrix} - \begin{bmatrix}\overdotmu_{0}(x_{i}) \\ \overdotmu_{1}(x_{i}) \end{bmatrix} \right|\right|_{2} + \left| \left| 
        \begin{bmatrix}\dot{\mu}_{0}(x_{i}) \\ \dot{\mu}_{1}(x_{i}) \end{bmatrix}
       \right|\right|_{2}\left|\left| \hat{\beta}_{1}^{\T} - \dot{\beta}_{1}^{\T} \right|\right|_{2}\nonumber\\
        &\leq \left|\left|\hat{\beta}_{1} \right|\right|_{2}\left|\left| \begin{bmatrix}\muhat_{0}(x_{i}) \\ \muhat_{1}(x_{i}) \end{bmatrix} - \begin{bmatrix}\overdotmu_{0}(x_{i}) \\ \overdotmu_{1}(x_{i}) \end{bmatrix} \right|\right|_{2} + \left| \left| 
        \begin{bmatrix}\dot{\mu}_{0}(x_{i}) \\ \dot{\mu}_{1}(x_{i}) \end{bmatrix}
       \right|\right|_{2}o_{P}(1)\nonumber\\
        &\leq O_{P}(1)\left|\left| \begin{bmatrix}\muhat_{0}(x_{i}) \\ \muhat_{1}(x_{i}) \end{bmatrix} - \begin{bmatrix}\overdotmu_{0}(x_{i}) \\ \overdotmu_{1}(x_{i}) \end{bmatrix} \right|\right|_{2} + \left| \left| 
        \begin{bmatrix}\dot{\mu}_{0}(x_{i}) \\ \dot{\mu}_{1}(x_{i}) \end{bmatrix}
       \right|\right|_{2}o_{P}(1)\label{eqn: decoupled bound}.
    \end{align}
    The second line follows from the triangle inequality, the third line from the Cauchy-Schwarz inequality, and the fourth line from Lemma~\ref{lem: consistency of OLS coefficients}.  The last line follows from $\left|\left|\hat{\beta}_{1} \right|\right|_{2} = O_{P}(1)$.  To see why this is the case, notice that by standard theory for ordinary least squares linear regression, under Assumptions~\ref{supp asm: means and covs stabilize} and \ref{supp asm: bounded fourth moment}, $\dot{\beta}_{1}$ converges to a fixed vector in $\R^{2}$; so the consistency of $\hat{\beta}_{1}$ implies that $\left|\left|\hat{\beta}_{1} \right|\right|_{2} = O_{P}(1)$.
    
    Combining \eqref{eqn: cleared away the first term in upper bound} with \eqref{eqn: decoupled bound} gives that \eqref{eqn: initial stability equation} is upper bounded by
    \begin{equation}\label{eqn: upper bound stage 1}
        \frac{1}{N}\sum_{i = 1}^{N}\left\{o_{P}(1) + O_{P}(1)\left|\left| \begin{bmatrix}\muhat_{0}(x_{i}) \\ \muhat_{1}(x_{i}) \end{bmatrix} - \begin{bmatrix}\overdotmu_{0}(x_{i}) \\ \overdotmu_{1}(x_{i}) \end{bmatrix} \right|\right|_{2} + \left| \left| 
        \begin{bmatrix}\dot{\mu}_{0}(x_{i}) \\ \dot{\mu}_{1}(x_{i}) \end{bmatrix}
       \right|\right|_{2}o_{P}(1)\right\}^{2}.
    \end{equation}
    Applying the inequality $2a^{2} + 2b^{2} \geq (a + b)^{2}$ twice yields that \eqref{eqn: upper bound stage 1} is bounded above by
    \begin{multline*}
        \frac{4}{N}\sum_{i = 1}^{N}\left[\left\{O_{P}(1)\left|\left| \begin{bmatrix}\muhat_{0}(x_{i}) \\ \muhat_{1}(x_{i}) \end{bmatrix} - \begin{bmatrix}\overdotmu_{0}(x_{i}) \\ \overdotmu_{1}(x_{i}) \end{bmatrix} \right|\right|_{2}\right\}^{2} + \left\{\left| \left| 
        \begin{bmatrix}\dot{\mu}_{0}(x_{i}) \\ \dot{\mu}_{1}(x_{i}) \end{bmatrix}
       \right|\right|_{2}o_{P}(1)\right\}^{2} + o_{P}(1)\right] = \\
         \underbrace{\frac{O_{P}(1)}{N}\sum_{i = 1}^{N}\left|\left| \begin{bmatrix}\muhat_{0}(x_{i}) \\ \muhat_{1}(x_{i}) \end{bmatrix} - \begin{bmatrix}\overdotmu_{0}(x_{i}) \\ \overdotmu_{1}(x_{i}) \end{bmatrix} \right|\right|_{2}^{2}}_{\text{Term 1}} + \underbrace{\frac{o_{P}(1)}{N}\sum_{i = 1}^{N} \left| \left| 
        \begin{bmatrix}\dot{\mu}_{0}(x_{i}) \\ \dot{\mu}_{1}(x_{i}) \end{bmatrix}
        \right|\right|_{2}^{2}}_{\text{Term 2}} + o_{P}(1).
    \end{multline*}
    
    Term 1 is vanishing in probability by the stability of $\muhat_{0}$ and $\muhat_{1}$.  By Assumptions~\ref{supp asm: means and covs stabilize} and \ref{supp asm: bounded fourth moment}, 
    \begin{equation*}
        \frac{1}{N}\sum_{i = 1}^{N} \left| \left| 
        \begin{bmatrix}\dot{\mu}_{0}(x_{i}) \\ \dot{\mu}_{1}(x_{i}) \end{bmatrix}
        \right| \right|_{2}^{2}
    \end{equation*}
    is limits to a constant, so Term 2 vanishes in probability as well.  In total, we have established that \ref{eqn: initial stability equation} is $o_{P}(1)$ which concludes the proof.
\end{proof}

\begin{remark}\label{rem: stability inherited with feature engineering}
    If Assumptions~\ref{supp asm: means and covs stabilize} and \ref{supp asm: bounded fourth moment} apply to the feature engineered pseudo-covariates $(f(x_{i})^{\T}, \overdotmu_{0}(x_{i}), \overdotmu_{1}(x_{i}))^{\T}$ then the same logic applies to the calibration regressions performed in the formation of $\tauhat_{cal2}$.  Consequently, the calibrated prediction functions used for the feature-engineered estimator $\tauhat_{cal2}$ are prediction unbiased (in the sense of \citet[Definition 1]{generalizedOB}) and stable.
\end{remark}

\delineate




\setcounter{theorem}{0}
\renewcommand*{\thetheorem}{\arabic{theorem}}
We restate the theorems of the main text with the precise regularity conditions required.

\begin{theorem}\label{supp thm: reg adj does no harm}
    Suppose that the completely randomized sampling is non-degenerate (Assumption~\ref{supp asm: non-degen sampling limit}).  If the original prediction functions $\muhat_{0}$ and $\muhat_{1}$ are stable with vanishing error processes (Assumptions~\ref{supp asm: stability} and \ref{app asm: error process vanishes}) and satisfy Assumptions~\ref{supp asm: means and covs stabilize} and \ref{supp asm: bounded fourth moment}, then $\tauhat_{cal}$ is consistent, obeys a central limit theorem, and is asymptotically no less efficient than $\tauhat_{unadj}$.

\end{theorem}
\begin{proof}
    Proposition~\ref{prop: linear calibration is prediction unbiased} establishes that $\muhat_{OLS, 0}$ and $\muhat_{OLS, 1}$ are prediction unbiased (in the sense of \citet[Definition 1]{generalizedOB}).  Proposition~\ref{prop: stability inherited} implies the stability of the prediction functions $\muhat_{OLS, 0}$ and $\muhat_{OLS, 1}$.  Proposition~\ref{prop: error proces vanishing is inherited by consistent OLS} implies that $\muhat_{OLS, 0}$ and $\muhat_{OLS, 1}$ satisfy the vanishing error process assumption.  Thus, the argument of Theorem~2 of \citet{generalizedOB} guarantees the consistency of $\tauhat_{cal}$ as long as
    \begin{equation}
      MSE_{N}(z)= \frac{1}{N}\sum_{i = 1}^{N}\left(y_{i}(z) - \overdotmu_{OLS, z}(x_{i})\right)^{2} = o(N) \quad \text{for } z \in \{0, 1\}
    \end{equation}
    a requirement which is met by Assumptions~\ref{supp asm: means and covs stabilize} and \ref{supp asm: bounded fourth moment}.
    
    The consistency of $\tauhat_{cal}$ does not rely upon whether or not any of the regression models $\muhat_{0}$, $\muhat_{1}$, $\muhat_{OLS, 0}$, and $\muhat_{OLS, 1}$ are ``well-specified" with respect to the data-generating process that gave rise to the potential outcomes and covariates.  In other words, consistency is derived without any special regard for knowing how the data came to be.

    Furthermore, 
    since Proposition~\ref{prop: error proces vanishing is inherited by consistent OLS} implies that $\muhat_{OLS, 0}$ and $\muhat_{OLS, 1}$ have vanishing error processes  the argument of \citet[Theorem~3]{generalizedOB} implies that $\tauhat_{cal}$ has an asymptotically linear reformulation:
    \begin{align}
      \frac{1}{N}\sum_{i = 1}^{N}\left(\hat{y}_{i}(1) - y_{i}(1) \right) &= \frac{1}{n_{1}}\sum_{i \st Z_{i} = 1}\underbrace{\left(y_{i}(1) - \overdotmu_{OLS, 1}(x_i)\right)}_{\dot{\epsilon}_{i}(1)} + o_{P}(N^{-1/2})\label{eqn: linear reformulation treated},\\
      \frac{1}{N}\sum_{i = 1}^{N}\left(\hat{y}_{i}(0) - y_{i}(0) \right) &= \frac{1}{n_{0}}\sum_{i \st Z_{i} = 0}\underbrace{\left(y_{i}(0) - \overdotmu_{OLS, 0}(x_i)\right)}_{\dot{\epsilon}_{i}(0)} + o_{P}(N^{-1/2})\label{eqn: linear reformulation control}.
    \end{align}
    
    Equations~\eqref{eqn: linear reformulation treated} and \eqref{eqn: linear reformulation control} show that the calibrated estimator $\tauhat_{cal} - \SATE$ is equivalent, up to an error of $o_{P}(N^{-1/2})$, to the difference in means estimator for a population where the potential outcomes are given by the residuals $\dot{\epsilon}_{i}(1)$ and $\dot{\epsilon}_{i}(0)$ instead of the original potential outcomes. By standard central limit theorems for the difference in means, e.g., \citet{FiniteCLT}, it follows that under the conditions of Corollary 1 in \citet{generalizedOB} 
    \begin{equation*}
        N^{1/2}\left(\frac{\tauhat_{cal} - \SATE}{\sigma_{N}}\right) \convD \Normal{0}{1},
    \end{equation*}
    where
    \begin{equation}\label{eqn: squared studenizing factor for cal}
        \sigma^{2}_{N} = \left(\frac{1}{n_{1}} \right)MSE_{N}(1) + \left(\frac{1}{n_{0}} \right)MSE_{N}(0) - \frac{1}{N(N-1)}\sum_{i = 1}^{N}\left(\dot{\epsilon}_{i}(1) - \dot{\epsilon}_{i}(0)\right)^{2}.
    \end{equation}
    Assumptions~\ref{supp asm: means and covs stabilize} and \ref{supp asm: bounded fourth moment} are sufficient conditions for Corollary 1 of \citet{generalizedOB}.

    What remains to be shown is that $\tauhat_{cal}$ is asymptotically no less efficient than $\tauhat_{unadj}$.  In other words, we must show that
    \begin{equation}\label{ineq: non-inferior}
        \lim_{N \rightarrow \infty} N\sigma_{N}^{2} \leq \frac{\Sigma_{y(1), \infty}}{p} + \frac{\Sigma_{y(0), \infty}}{1 - p} - \Sigma_{\tau, \infty}
    \end{equation}
    where $\Sigma_{\tau, \infty}$ is the limiting variance of $\{\tau_{i} = y_{i}(1) - y_{i}(0)\}_{i = 1}^{N}$.  The limit of $N\sigma_{N}^{2}$ is guaranteed to exist due to Assumption~\ref{supp asm: means and covs stabilize}.
    
    Consider forming Lin's regression adjusted estimator by regressing the observed outcomes $y_{i}(Z_{i})$ on the deterministic features $\left(\overdotmu_{0}(x_{i}), \overdotmu_{1}(x_{i}) \right)^{\T}$ separately in the treated and control groups, imputing counterfactuals based upon the corresponding regressions, and then taking the difference in means across the imputed populations.  Call this estimator $\hat{T}_{lin}$.  $\hat{T}_{lin}$ differs from $\tauhat_{cal}$ in that the features used for regression are not those estimated from the sample, i.e., $\left(\muhat_{0}(x_{i}), \muhat_{1}(x_{i}) \right)^{\T}$.  Nonetheless, Theorem~1 of \citet{lin13} gives that the asymptotic variance of $N^{1/2}\left(\hat{T}_{lin} - \SATE\right)$ is $\lim_{N \rightarrow \infty} N\sigma_{N}^{2}$.  Corollary~1 of \citet{lin13} then implies the non-inferiority of $\hat{T}_{lin}$ relative to the unadjusted difference in means, which is equivalent to the inequality of \ref{ineq: non-inferior}.  This indirectly shows the non-inferiority of $\tauhat_{cal}$ relative to the unadjusted difference in means $\tauhat_{unadj}$.

\end{proof}

\delineate

\begin{theorem}\label{supp thm: calibration beats the gOB}
Under the regularity conditions of Theorem~\ref{supp thm: reg adj does no harm} and for a given set of prediction functions $\hat{\mu}_0$ and $\hat{\mu}_1$, the calibrated estimator $N^{1/2}\left(\tauhat_{cal} - \SATE\right)$ has an asymptotic variance that is no larger than that of both $N^{1/2}\left(\tauhat_{gOB} - \SATE\right)$ and $N^{1/2}\left(\tauhat_{GBcal} - \SATE\right)$.
\end{theorem}
\begin{proof}
    By Theorem~3 of \citet{generalizedOB}, $N^{1/2}\left(\tauhat_{gOB} - \SATE\right)$ differs by $o_{P}(1)$ from the difference in means statistic
    \begin{equation}\label{eqn: residual formulation of gOB}
        \frac{1}{n_{1}}\sum_{i \st Z_{i} = 1}\dot{\epsilon}_{i}(1) - \frac{1}{n_{0}}\sum_{i \st Z_{i} = 0}\dot{\epsilon}_{i}(0)
    \end{equation}
    where $\dot{\epsilon}_{i}(z) = y_{i}(z) - \overdotmu_{z}(x_{i})$ for $z \in \{0, 1\}$.
    
    In contrast, by \eqref{eqn: linear reformulation treated} and \eqref{eqn: linear reformulation control}, $N^{1/2}\left(\tauhat_{cal} - \SATE\right)$ differs by $o_{P}(1)$ from the difference in means statistic
    \begin{equation}\label{eqn: residual formulation of cal}
        \frac{1}{n_{1}}\sum_{i \st Z_{i} = 1}\dot{\epsilon}_{i}(1) - \frac{1}{n_{0}}\sum_{i \st Z_{i} = 0}\dot{\epsilon}_{i}(0).
    \end{equation}
    
    Unpacking the notation of \eqref{eqn: residual formulation of gOB}
    \begin{align}
        N^{1/2}\left(\tauhat_{gOB} - \SATE\right) &= \frac{1}{n_{1}}\sum_{i \st Z_{i} = 1}\left(y_{i}(1) - \overdotmu_{1}(x_{i})\right) - \frac{1}{n_{0}}\sum_{i \st Z_{i} = 0}\left(y_{i}(0) - \overdotmu_{0}(x_{i})\right) + o_{P}(1)\nonumber\\
        &= \begin{multlined}[t][10.5cm] \frac{1}{n_{1}}\sum_{i \st Z_{i} = 1}\left(y_{i}(1) - \mathbf{e}_{2}^{\T}\begin{bmatrix} \overdotmu_{0}(x_{i}) \\ \overdotmu_{1}(x_{i}) \end{bmatrix}\right) - \\ \frac{1}{n_{0}}\sum_{i \st Z_{i} = 0}\left(y_{i}(0) - \mathbf{e}_{1}^{\T}\begin{bmatrix} \overdotmu_{0}(x_{i}) \\ \overdotmu_{1}(x_{i}) \end{bmatrix}\right) + o_{P}(1) \label{eqn: linear prediction decomp of gOB}
        \end{multlined}
    \end{align}
    where $\mathbf{e}_{i}$ is the $i$\textsuperscript{th} standard basis vector.  The same logic applied to \eqref{eqn: residual formulation of cal} yields that
    \begin{align}
        N^{1/2}\left(\tauhat_{cal} - \SATE\right) &= \frac{1}{n_{1}}\sum_{i \st Z_{i} = 1}\left(y_{i}(1) - \overdotmu_{OLS, 1}(x_{i})\right) - \frac{1}{n_{0}}\sum_{i \st Z_{i} = 0}\left(y_{i}(0) - \overdotmu_{OLS, 0}(x_{i})\right) + o_{P}(1)\nonumber\\
        &= \begin{multlined}[t][10.5cm] \frac{1}{n_{1}}\sum_{i \st Z_{i} = 1}\left\{y_{i}(1) - \left(\dot{\beta}_{0} + \dot{\beta}_{1}^{\T}\begin{bmatrix} \overdotmu_{0}(x_{i}) \\ \overdotmu_{1}(x_{i}) \end{bmatrix}\right)\right\} - \\ \frac{1}{n_{0}}\sum_{i \st Z_{i} = 0}\left\{y_{i}(0) - \left( \dot{B}_{0} + \dot{B}_{1}^{\T}\begin{bmatrix} \overdotmu_{0}(x_{i}) \\ \overdotmu_{1}(x_{i}) \end{bmatrix}\right)\right\} + o_{P}(1) \label{eqn: linear prediction decomp of cal}
        \end{multlined}
    \end{align}
    where $\dot{\beta}_{0}$ is the intercept of $\overdotmu_{OLS, 1}(\cdot)$,  $\dot{\beta}_{1}$ is the slope of $\overdotmu_{OLS, 1}(\cdot)$, and $(\dot{B}_{0}, \dot{B}_{1})$ are defined similarly for $\overdotmu_{OLS, 0}(\cdot)$.
    
    
    
    Notice that both \eqref{eqn: linear prediction decomp of gOB} and \eqref{eqn: linear prediction decomp of cal} are, up to an $o_{P}(1)$ difference, of the form
    \begin{equation}\label{eqn: general linear prediction form}
         \frac{1}{n_{1}}\sum_{i \st Z_{i} = 1}\left\{y_{i}(1) - \left( \beta_{0} + \beta_{1}^{\T}\begin{bmatrix} \overdotmu_{0}(x_{i}) \\ \overdotmu_{1}(x_{i}) \end{bmatrix}\right)\right\} - \\ \frac{1}{n_{0}}\sum_{i \st Z_{i} = 0}\left\{y_{i}(0) - \left( B_{0} + B_{1}^{\T}\begin{bmatrix} \overdotmu_{0}(x_{i}) \\ \overdotmu_{1}(x_{i}) \end{bmatrix}\right)\right\},
    \end{equation}
    with the only difference being that in \eqref{eqn: linear prediction decomp of gOB} we take $\left(\beta_{1}, B_{1}\right) = \left(\mathbf{e}_{2}, \mathbf{e}_{1}\right)$ and $(\beta_{0}, B_{0}) = (0,0)$ whereas in \eqref{eqn: linear prediction decomp of cal} we take $\left(\beta_{1}, B_{1}\right) = \left(\dot{\beta}_{1}, \dot{B}_{1}\right)$ and $(\beta_{0}, B_{0}) = (\dot{\beta}_{0}, \dot{B}_{0})$.  Since, the $o_{P}(1)$ term contributes nothing to the asymptotic variance of either quantity of interest, we neglect it in the remainder of our analysis.  
    
    By the argument presented in Section~4.1 of \citet{lin13} and Lemma~\ref{lem: regression minimizes asymptotic variance of DiM} the variance of \eqref{eqn: general linear prediction form} is minimized when $\left(\beta_{0}, \beta_{1}, B_{0}, B_{1}\right)$ is taken to be the population ordinary least squares linear regression intercepts and slopes.  This is exactly the case for \eqref{eqn: linear prediction decomp of cal}, whereas \eqref{eqn: linear prediction decomp of gOB} presents a feasible, but not necessarily optimal, solution to the ordinary least squares problem.  Consequently, the asymptotic variance of  $N^{1/2}\left(\tauhat_{cal} - \SATE\right)$ does not exceed that of $N^{1/2}\left(\tauhat_{gOB} - \SATE\right)$ and the inequality is strict when the population least squares problem has a unique optimal solution which does not degenerate to $|\tauhat_{cal} - \tauhat_{gOB}| = o_{p}(N^{-1/2})$.
    
    The same style of proof also works to show that  the asymptotic variance of  $N^{1/2}\left(\tauhat_{cal} - \SATE\right)$ does not exceed that of $N^{1/2}\left(\tauhat_{GBcal} - \SATE\right)$.
\end{proof}

\delineate

\renewcommand{\thetheorem}{A.\arabic{theorem}}
Here we restate Theorem~\ref{supp thm: improves cal and lin} with precise regularity conditions and provide its proof.
\setcounter{theorem}{0}
\begin{theorem}
    Assume the regularity conditions of Theorem~\ref{supp thm: reg adj does no harm}.  So long as the random vectors $\left(f(x_{i})^{\T},\muhat_{0}(x_{i}), \muhat_{1}(x_{i})\right)^{\T}$ satisfy Assumptions~\ref{supp asm: means and covs stabilize} and \ref{supp asm: bounded fourth moment}, a central limit theorem applies to $N^{1/2}\left(\tauhat_{cal2} - \SATE\right)$ and it is non-inferior to both $N^{1/2}\left(\tauhat_{cal} - \SATE\right)$ and $N^{1/2}\left(\tauhat_{lin} - \SATE\right)$ using the engineered features $f(x_{i})$.
\end{theorem}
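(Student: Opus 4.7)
The plan is to replicate the four-stage template used for Theorem~\ref{supp: thm: reg adj does no harm}, with each ingredient swapped for its counterpart on the enlarged pseudo-feature vector $\ddot{x}_{i} = (f(x_{i})^{\T},\muhat_{0}(x_{i}),\muhat_{1}(x_{i}))^{\T}$. The hypothesis that Assumptions~\ref{supp: asm: means and covs stabilize} and \ref{supp: asm: bounded fourth moment} hold for the engineered features is exactly what is needed to invoke Remarks~\ref{supp: rem: lemma also applies to the second calibrated estimator}, \ref{supp: rem: stability inherited with feature engineering}, and \ref{supp: rem: simple realizations extend to feature engineering}, which already assert that Lemma~\ref{supp: lem: consistency of OLS coefficients}, Proposition~\ref{supp: prop: stability inherited}, and Proposition~\ref{supp: prop: typically simple realizations are inherited} extend to the calibration regression including $f(x_{i})$.

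First I would observe that $\muhat_{OLS2, z}$ is prediction unbiased by the first-order optimality condition for the intercept, exactly as in Proposition~\ref{supp: prop: linear calibration is prediction unbiased}. Second, the consistency of the augmented regression coefficients follows from Remark~\ref{supp: rem: lemma also applies to the second calibrated estimator}. Third, stability of $\muhat_{OLS2, z}$ follows from Remark~\ref{supp: rem: stability inherited with feature engineering}: rerun the proof of Proposition~\ref{supp: prop: stability inherited} after decomposing the augmented linear predictor into its $(\muhat_{0},\muhat_{1})$-block and its $f(\cdot)$-block, noting that the new cross-term $||\dot{\beta}_{2}^{(N)}||_{2}\cdot\left(N^{-1}\sum_{i}||f(x_{i})||_{2}^{2}\right)^{1/2}$ is controlled uniformly in $N$ by Assumption~\ref{supp: asm: bounded fourth moment}. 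Fourth, typically simple realizations for $\muhat_{OLS2, z}$ follow from Remark~\ref{supp: rem: simple realizations extend to feature engineering}: the covering-number bound for the augmented function class acquires a factor $(1+c/s)^{\ell}$ from the unit ball of $\beta_{2}\in\R^{\ell}$, whose logarithm remains integrable. With these four ingredients, Theorems~2 and 3 and Corollary~1 of \citet{generalizedOB} yield consistency, asymptotic linearity, and asymptotic Gaussianity of $N^{1/2}(\tauhat_{cal2}-\taubar)$ in terms of residuals $\dot{\epsilon}_{i}^{(2)}(z) = y_{i}(z) - \overdotmu_{OLS2, z}(x_{i})$.

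For the non-inferiority conclusion I would follow the proof of Theorem~\ref{supp: thm: calibration beats the gOB}. Each of $N^{1/2}(\tauhat_{cal2}-\taubar)$, $N^{1/2}(\tauhat_{cal}-\taubar)$, and Lin's regression-adjusted estimator built on the engineered features $f(x_{i})$ equals, up to $o_{P}(1)$, a difference-in-means statistic of residuals from some linear predictor in $\ddot{x}_{i}$: $\tauhat_{cal}$ corresponds to restricting the coefficients on $f(x_{i})$ to zero, Lin's estimator on $f(x_{i})$ corresponds to restricting the coefficients on $(\muhat_{0}(x_{i}),\muhat_{1}(x_{i}))$ to zero, and $\tauhat_{cal2}$ corresponds to the unrestricted population-level OLS. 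The variance-minimization argument in Section~4.1 of \citet{lin13} then shows that the unrestricted choice minimizes the asymptotic variance across all such linear predictors, yielding both non-inferiority claims simultaneously. The main obstacle is the bookkeeping in Steps~3 and 4, where Propositions~\ref{supp: prop: stability inherited} and \ref{supp: prop: typically simple realizations are inherited} must be rerun with three coordinate blocks rather than two; however, because $f(x_{i})$ is deterministic the only genuinely new probabilistic input is $||\hat{\beta}_{2}-\dot{\beta}_{2}^{(N)}||_{2} = o_{P}(1)$, and the remaining uniform bounds are supplied by Lemma~\ref{supp: lem: uniformly bounded N-norm} applied to the enlarged joint vector $(f(x_{i})^{\T},\overdotmu_{0}(x_{i}),\overdotmu_{1}(x_{i}))^{\T}$.
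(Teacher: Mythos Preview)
Your proposal is correct and follows essentially the same route as the paper: establish prediction unbiasedness, stability, and typically simple realizations for $\muhat_{OLS2,z}$ via the feature-engineered Remarks~\ref{supp: rem: lemma also applies to the second calibrated estimator}, \ref{supp: rem: stability inherited with feature engineering}, and \ref{supp: rem: simple realizations extend to feature engineering}, then invoke \citet{generalizedOB} for the CLT, and finally argue non-inferiority by recognizing that $\tauhat_{cal}$ and $\tauhat_{lin}$ correspond to restricted choices of coefficients in the augmented regression while $\tauhat_{cal2}$ corresponds to the unrestricted population OLS. One small notational slip: in your non-inferiority paragraph the asymptotically linear representations involve predictors in the \emph{deterministic} vector $(f(x_{i})^{\T},\overdotmu_{0}(x_{i}),\overdotmu_{1}(x_{i}))^{\T}$, not in $\ddot{x}_{i}$ itself, since stability has already replaced $\muhat_{z}$ by $\overdotmu_{z}$; this is the framing the paper uses when it identifies the asymptotic variance of $\tauhat_{cal2}$ with that of the auxiliary Lin estimator $\hat{T}_{f,\overdotmu}$.
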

\begin{proof}
    The mechanics of proving consistency and central limit behavior for $\tauhat_{cal2}$ mirror those of the proof for Theorem~\ref{supp thm: reg adj does no harm}.  Stability follows from Proposition~\ref{prop: stability inherited} using the adaptation in Remarks~\ref{rem: stability inherited with feature engineering} and \ref{rem: lemma also applies to the second calibrated estimator}.  Vanishing of the error processes for the prediction functions of $\tauhat_{cal2}$ follows from modifying Proposition~\ref{prop: error proces vanishing is inherited by consistent OLS} in the natural way to account for the additional features $f(x_{i})$.
    
    
    The asymptotic variance of $\tauhat_{cal2}$ is the same as the asymptotic variance of Lin's regression adjusted estimator which regresses upon $(f(x_{i})^{\T}, \overdotmu_{0}(x_{i}), \overdotmu_{1}(x_{i}))$; call this estimator $\hat{T}_{f, \overdotmu}$.  Likewise, the asymptotic variance of $\tauhat_{cal}$ is the same as the same as the asymptotic variance of the regression adjusted estimator which regresses upon $(\overdotmu_{0}(x_{i}), \overdotmu_{1}(x_{i}))$; in the proof of Theorem~\ref{supp thm: reg adj does no harm} we called this estimator $\hat{T}_{lin}$.  By Lemma~\ref{lem: regression minimizes asymptotic variance of DiM} the inclusion of the additional features $f(x_{i})$ guarantees that $\hat{T}_{f, \overdotmu}$ is non-inferior to $\hat{T}_{lin}$.  Similarly, the inclusion of the additional features $(\overdotmu_{0}(x_{i}), \overdotmu_{1}(x_{i}))$ guarantees that $\hat{T}_{f, \overdotmu}$ is non-inferior to $\tauhat_{lin}$ using the engineered features $f(x_{i})$.  Both of these statements rest upon the observation that the asymptotic variance of the regression adjusted estimator is never increased by including additional features; this follows easily from Lemma~\ref{lem: regression minimizes asymptotic variance of DiM} by observing that regressions without the added features are equivalent to regressions with the extra features but with coefficients restricted to zero for such features.
\end{proof}


\section{Implementation}\label{sec: code}
In Algorithm~\ref{alg: cal} we include pseudocode for the construction of $\tauhat_{cal}$ in order to facilitate easy implementation.

		\begin{algorithm}[H] 
			\SetAlgoLined
			\KwIn{An observed treatment allocation $Z$, with observed responses $y_{1}(Z_{1}), \ldots, y_{N}(Z_{N})$ and covariates $x_{1}, \ldots, x_{N}$.}
			\KwResult{The calibrated estimator $\tauhat_{cal}$.}
			\textbf{Step 1: Initial predictions}\\
			Train the initial prediction algorithms $\muhat_{0}$ and $\muhat_{1}$\;
			\For{$i = 1, \ldots, N$}{
			    Impute the outcomes $\muhat_{0}(x_{i})$ and $\muhat_{1}(x_{i})$.
			}

            \textbf{Step 2: Calibrate prediction functions}\\
            \For{$z \in \{0, 1\}$}{
                Let $\muhat_{OLS, z}$ be the linear regression of $y_{i}(Z_{i})$ on $(\muhat_{0}(x_{i}), \muhat_{1}(x_{i}))^{\T}$ in the group $\{i \st Z_{i} = z\}$.
            }
			\Return{ $\tauhat_{cal} = N^{-1}\sum_{i = 1}^{N}\left\{\muhat_{OLS, 1}(x_{i}) - \muhat_{OLS, 0}(x_{i}) \right\}$}
			\caption{Computation of the calibrated Oaxaca-Blinder estimator.}\label{alg: cal}
		\end{algorithm}
Additionally, code written in \texttt{R} is available at \url{https://github.com/PeterLCohen/OaxacaBlinderCalibration} to demonstrate computation of $\tauhat_{cal}$ and $\tauhat_{cal2}$ and to provide examples of their use on real data sets.

\section{Rank-Deficiency}\label{sec: rank deficient}
Throughout the proofs above we worked under the assumption that regressions, both those defining $\hat{\beta}$ and $\dot{\beta}$, are not rank-deficient.  From the perspective of \eqref{eqn: useful notation sample beta} and \eqref{eqn: useful notation ``population`` beta} this assumption guaranteed unique solutions to the empirical risk minimization problems undergirding ordinary least squares regression.  If instead these regressions had linear dependencies between their features, these minimization problems would have uncountably many solutions. This would present a mathematical impediment to the style of proofs used in the preceding sections, but would not invalidate the general results.

In the case that the regression problems \eqref{eqn: useful notation sample beta} and \eqref{eqn: useful notation ``population`` beta} have uncountably many solutions, there is no uniquely identified choice of $\hat{\beta}^{(N)}$ or $\dot{\beta}^{(N)}$ to make the consistency statement ${||\hat{\beta}^{(N)} - \dot{\beta}^{(N)}||_{2} = o_{P}(1)}$ make sense.  However, the predictions $\muhat_{OLS, z}(x_{i})$ and $\overdotmu_{OLS, z}(x_{i})$ are still well-defined for all $x_{i}$.  Because of this, the residuals $\dot{\epsilon}_{i}(z)$ remain well defined; and so intuition derived from the asymptotically linear expansions \eqref{eqn: linear reformulation treated} and \eqref{eqn: linear reformulation control} leads one to expect no change to our main results.

In order to rectify the proofs for stability, vanishing error processes
, and non-inferiority to accord with the uncountable solution-space to the ordinary least squares regression problems \eqref{eqn: useful notation sample beta} and \eqref{eqn: useful notation ``population`` beta} one can pick a canonical representative from the class of optimal solutions.  For this, the linear regression coefficients given by the Moore-Penrose pseudoinverse \citep{penrose_1955} suffice; see \citet[Chapter 5.5]{matrixComputation} for the general relationship between rank-deficient least squares and matrix pseudoinverses.  Specifically, let the Moore-Penrose pseudoinverse of a matrix $M$ be $M^{+}$.  Then take the canonical regression coefficients to be 
\begin{equation*}
    \hat{\beta}^{+} = \left(  \hat{U}_{1}^{\T} \hat{U}_{1} \right)^{+} \hat{U}_{1}^{\T}\begin{bmatrix}y_{i_{1}}(1) \\ \vdots \\ y_{i_{n_1}}(1) \end{bmatrix} \; \text{and} \; \dot{\beta}^{+} = \left(  \dot{U}_{1}^{\T} \dot{U}_{1} \right)^{+} \dot{U}_{1}^{\T}\begin{bmatrix}y_{1}(1) \\ \vdots \\ y_{N}(1) \end{bmatrix}.
\end{equation*}

The proofs for stability, vanishing error processes
, and non-inferiority follow through as before but with $\hat{\beta}^{+}$ and  $\dot{\beta}^{+}$ replacing $\hat{\beta}$ and  $\dot{\beta}$, respectively.  A minor technical consideration is required: the continuous mapping theorem must be applicable as it was in Lemma~\ref{lem: consistency of OLS coefficients}.  Unlike the matrix inverse map $M \mapsto M^{-1}$, the Moore-Penrose pseudoinverse map $M \mapsto M^{+}$ is not a continuous map over the positive semidefinite cone under the metric induced by the Frobenius norm \citep{discontinuityMPinverse}.  The discontinuity of $M \mapsto M^{+}$ can be resolved by stipulating that the there exists some $\aleph \in \N$ such that (almost surely) $\rank\left(\hat{U}_{1}^{\T} \hat{U}_{1} \right)$ and $\rank\left(\dot{U}_{1}^{\T} \dot{U}_{1} \right)$ are equal to a common constant for all populations of size $N \geq \aleph$ \citep{discontinuityResolved}; then the continuous mapping theorem for general metric spaces of \citet[Theorem 18.11]{asymptoticStats_vdv} applies as needed to carry through the rest of the proofs in the same style as before.  

\color{black}
\section{Variance Estimation and Inference under the Finite Population Model}
    The asymptotic Gaussianity of $N^{1/2}(\tauhat_{cal} - \SATE)$ facilitates inference for the sample average treatment effect using $\hat{\tau}_{cal}$.  To proceed, we provide an asymptotically conservative variance estimator for $\tauhat_{cal}$.  Define 
    \begin{align*}
        \hat{\Sigma}_{z} &:= \frac{1}{n_{z} - 1}\sum_{i \st Z_{i} = z}\left(y_{i}(z) - \muhat_{z}(x_{i})\right)^{2},\\
        \hat{V} &= \frac{\hat{\Sigma}_{1}}{n_{1}} + \frac{\hat{\Sigma}_{0}}{n_{0}}.
    \end{align*}
    By Theorem 4 of \citet{generalizedOB}, $\hat{V}$ provides an asymptotically conservative estimate of the variance of the generalized Oaxaca-Blinder estimator using imputation functions $\muhat_{0}$ and $\muhat_{1}$; their result holds for the finite population model and accounts only for variability arising from the stochasticity of the treatment allocation process.  Consequently, the variance of the calibrated estimator, which is a particular form of generalized Oaxaca-Blinder estimator, is estimated by
    \begin{align}
        \hat{\Sigma}_{z, cal} &:= \frac{1}{n_{z} - 1}\sum_{i \st Z_{i} = z}\left(y_{i}(z) - \muhat_{OLS, z}(x_{i})\right)^{2},\nonumber\\
        \hat{V}_{cal} &= \frac{\hat{\Sigma}_{1, cal}}{n_{1}} + \frac{\hat{\Sigma}_{0, cal}}{n_{0}}.\label{eqn: fin pop variance estimator}
    \end{align}
    In a finite population $\hat{V}_{cal}$ is potentially asymptotically conservative relative to the true variance of $\tauhat_{cal}$, and so asymptotically valid -- though potentially conservative -- confidence intervals may be formed based upon the usual normal approximation.  In Section~\ref{sec: Linear Calibration in Alternative Models} we discuss variance estimators which account for stochasticity in the potential outcomes themselves. 
\color{black}

\section{Linear Calibration in Alternative Models}\label{sec: Linear Calibration in Alternative Models}
\subsection{Superpopulation and Fixed-Covariate Models}
The asymptotic non-inferiority of linearly calibrated generalized Oaxaca-Blinder estimators is not tied to the finite population model. Here we detail two alternative common models for which linear calibration guarantees asymptotic non-inferiority of generalized Oaxaca-Blinder estimators: the superpopulation model and the fixed-covariate model.  In the superpopulation model the vector of potential outcomes and covariates  $(y_{i}(0), y_{i}(1), x_{i})$ is an independent and identically distributed draw from some fixed distribution for each $i \in \{1, \ldots, N\}$.  In the fixed-covariate model the $i$th unit's covariates $x_{i}$ are deterministic while the potential outcomes $(y_{i}(0), y_{i}(1))$ are independent draws from some conditional distribution given $x_{i}$.  In the superpopulation framework, the targeted estimand is the population average treatment effect (\textsc{PATE}), $\PATE = E[y_i(1) - y_i(0)]$.  Under the fixed covariate model, the estimand of interest is the conditional average treatment effect (\textsc{CATE}), $\CATE = N^{-1}\sum_{i=1}^NE[y_i(1)-y_i(0)\mid x_i]$.  In both of these models linear calibration functions exactly the same as in the finite population model: estimate the nonlinear prediction functions $\muhat_{0}$ and $\muhat_{1}$ on the observed data and subsequently form \citeauthor{lin13}'s linearly-adjusted estimator using the imputed values $\muhat_{0}(x_{i})$ and $\muhat_{1}(x_{i})$ in lieu of the covariates $x_{i}$.  The natural analogs of Theorem~1 hold so long as the required assumptions are modified to reflect the chosen probabilistic framework.

\subsection{Linear Calibration for the Population Average Treatment Effect}\label{sec: superpop}
Under the superpopulation model, Assumption~\ref{supp asm: stability} (Stability) remains unchanged.  Likewise Assumption~\ref{app asm: error process vanishes} stays the same, though the stochastic process now inherits randomness from the potential outcomes and covariates in addition to the randomness in treatment allocation. Assumptions~\ref{supp asm: means and covs stabilize} and \ref{supp asm: bounded fourth moment} require simple modifications to adapt to randomness in the potential outcomes and covariates.

\begin{assumption}[Superpopulation Limiting Means and Variances]\label{supp asm: superpop means and covs stabilize}
	  The mean vector and covariance matrix of $(y_i(0), y_i(1), \dot{\mu}_0(x_i), \dot{\mu}_0(x_i))^\T$ have limiting values.  For instance, for ${z=0,1}$ there exists a limiting value $\bar{y}(z)_\infty$ such that $\lim_{N\rightarrow \infty} \Expectation{y_i(z)} = \bar{y}(z)_\infty$.
\end{assumption}

\begin{assumption}[Superpopulation Bounded Fourth Moments]\label{supp asm: superpop bounded fourth moment}
     There exists some $C < \infty$ for which, for all ${z = 0,1}$ and all $N$, $\Expectation{y_{i}(z)^{4}} < C$ and $\Expectation{\dot{\mu}_z(x_i)^{4}} < C.$
\end{assumption}

In the superpopulation case, the definition of $\dot{\beta}$ given in Lemma~\ref{lem: consistency of OLS coefficients} requires a minor change to
\begin{align*}
    \dot{\beta} &= (\dot{\beta}_{0}, \dot{\beta}_{1}) = \argmin_{\beta_{0}, \beta_{1}}\Expectation{\left\{y_{i}(1) - \left(\beta_{0} + \beta_{1}^{T}\begin{bmatrix}\dot{\mu}_{0}(x_{i}) \\ \dot{\mu}_{1}(x_{i}) \end{bmatrix}\right) \right\}^{2}}.
\end{align*}
Then $||\hat{\beta} - \dot{\beta}||_{2} = o_{P}(1)$ by the standard consistency of sample ordinary least squares regression coefficients under Assumptions~\ref{supp asm: superpop means and covs stabilize} and \ref{supp asm: superpop bounded fourth moment}; this consistency holds even without assuming the truth of a linear model \citep[Proposition 7]{modelsAsApprox}.

\begin{lemma}[Asymptotically Linear Expansions around the \textsc{SATE}]\label{lem: asymptotic linear expansion around SATE}
    Under Assumptions~\ref{supp asm: non-degen sampling limit}, \ref{supp asm: stability}, and \ref{app asm: error process vanishes} the random variable $N^{-1}\sum_{i = 1}^{N}\left(\muhat_{z}(x_{i}) - y_{i}(z)\right)$ is asymptotically linear in the sense that
    \begin{equation*}
        \frac{1}{N} \sum_{i = 1}^{N}\left(\muhat_{z}(x_{i}) - y_{i}(z)\right) = \frac{1}{n_{z}}\sum_{i \st Z_{i} = z}\dot{\epsilon}_{i}(z) + o_{p}(N^{-1/2})
    \end{equation*}
    where $\dot{\epsilon}_{i}(z) = y_{i}(z) - \overdotmu_{z}(x_i)$.
\end{lemma}
\begin{proof}
    By the exact same reasoning as in \citet[Proof of Theorem 3]{generalizedOB}, rewrite $N^{-1}\sum_{i = 1}^{N}\left(\muhat_{z}(x_{i}) - y_{i}(z)\right) $ as
    \begin{equation*}
        \frac{1}{n_{z}}\sum_{i \st Z_{i} = z}\dot{\epsilon}_{i}(z) + \frac{1}{N}\Bigg(\underbrace{\sum_{i = 1}^{N}\left(\frac{Z_{i}\overdotmu_{z}(x_{i})}{(n_{z} / N)} - \overdotmu_{z}(x_{i})\right)}_{N^{1/2}\G_{N, z}(\overdotmu_{z})} - \underbrace{\sum_{i = 1}^{N}\left(\frac{Z_{i}\muhat_{z}(x_{i})}{(n_{z} / N)} - \muhat_{z}(x_{i})\right)}_{N^{1/2}\G_{N, z}(\muhat_{z})} \Bigg).
    \end{equation*}
    Consequently, the desired result holds so long as $\left|\G_{N, z}(\overdotmu_{z}) -  \G_{N, z}(\muhat_{z})\right| = o_{P}\left(1\right)$; this holds by Assumption~\ref{app asm: error process vanishes}.
\end{proof}

\begin{theorem}\label{supp thm: superpopulation clt for taucal}
    Under the superpopulation model, subject to Assumptions~\ref{supp asm: non-degen sampling limit}, \ref{supp asm: stability}, \ref{app asm: error process vanishes}, \ref{supp asm: superpop means and covs stabilize} and \ref{supp asm: superpop bounded fourth moment}, $N^{1/2}\left(\tauhat_{cal} - \PATE\right)$ obeys a central limit theorem.
\end{theorem}
\begin{proof}
    We start with an algebraic decomposition:
    \begin{align*}
        N^{1/2}\left(\tauhat_{cal} - \PATE\right) &= N^{1/2}\left(\frac{1}{N}\sum_{i = 1}^{N}\left(\muhat_{OLS,1}(x_{i}) - \muhat_{OLS,0}(x_{i}) \right) - \PATE\right)\\
        &= N^{1/2}\Bigg(\frac{1}{N}\sum_{i = 1}^{N}\left(\muhat_{OLS,1}(x_{i}) - \muhat_{OLS,0}(x_{i}) \right) - \frac{1}{N}\sum_{i = 1}^{N}\left(y_{i}(1) - y_{i}(0) \right) +\\
        &\quad\quad \frac{1}{N}\sum_{i = 1}^{N}\left(y_{i}(1) - y_{i}(0) \right) - \PATE\Bigg)\\
        &= N^{1/2}\Bigg(\frac{1}{N}\sum_{i = 1}^{N}\left(\muhat_{OLS,1}(x_{i}) - y_{i}(1) \right) - \frac{1}{N}\sum_{i = 1}^{N}\left(\muhat_{OLS,0}(x_{i}) - y_{i}(0) \right) +\\
        &\quad\quad \frac{1}{N}\sum_{i = 1}^{N}\left(y_{i}(1) - y_{i}(0) \right) - \PATE\Bigg)\\
    \end{align*}
    By Lemma~\ref{lem: asymptotic linear expansion around SATE} this final term equals
    \begin{equation*}
        N^{1/2}\Bigg(\frac{1}{N}\sum_{i = 1}^{N}Z_{i}Nn_{1}^{-1}\dot{\epsilon}_{i}(1) - \frac{1}{N}\sum_{i = 1}^{N}(1 - Z_{i})Nn_{0}^{-1}\dot{\epsilon}_{i}(0) + \frac{1}{N}\sum_{i = 1}^{N}\left(y_{i}(1) - y_{i}(0) \right) - \PATE\Bigg) + o_{P}(1)
    \end{equation*}
    where $\dot{\epsilon}_{i}(z) = y_{i}(z) - \overdotmu_{OLS,z}(x_{i})$.  By the reasoning of Lemma~\ref{lem: general prediction unbiaseness in the population for superpopulation models} we stipulate that $\Expectation{\dot{\epsilon}_{i}(z)} = 0$ for $z \in \{0, 1\}$.
    Rearranging the formula above yields
    \begin{equation*}
        N^{1/2}\Bigg(\frac{1}{N}\sum_{i = 1}^{N}Z_{i}N\left( n_{1}^{-1}\dot{\epsilon}_{i}(1) + n_{0}^{-1}\dot{\epsilon}_{i}(0) \right) - \frac{1}{N}\sum_{i = 1}^{N}Nn_{0}^{-1}\dot{\epsilon}_{i}(0) + \frac{1}{N}\sum_{i = 1}^{N}\left(y_{i}(1) - y_{i}(0) \right) - \PATE\Bigg) + o_{P}(1)
    \end{equation*}

    Consider the random variable 
    \begin{equation*}
        \varphi(Z_{i}) := Z_{i}N\left( n_{1}^{-1}\dot{\epsilon}_{i}(1) + n_{0}^{-1}\dot{\epsilon}_{i}(0) \right) - Nn_{0}^{-1}\dot{\epsilon}_{i}(0) + \left(y_{i}(1) - y_{i}(0) \right);
    \end{equation*}
    to prove the desired result it suffices to show a central limit theorem for $\varphi(Z_{i})$.  To do this we will start by conditioning upon the event $Z = \tilde{z}$ for some $\tilde{z} \in \Omega_{CRE}$; we will show that a central limit theorem applies to $\varphi(\tilde{z}_{i})$ almost surely with respect to the conditioning event $Z = \tilde{z} \in \Omega_{CRE}$; then we will leverage this conditional central limit theorem to deduce that an unconditional central limit theorem applies to $\varphi(Z_{i})$.

    Condition on the event $Z = \tilde{z}$ for some $\tilde{z} \in \Omega_{CRE}$ and recall that $Z_{i}$ is independent of $(y_{i}(0), y_{i}(1), x_{i})$ so this conditioning changes nothing of the distribution of the $\dot{\epsilon}_{i}(z)$ and the $y_{i}(z)$ for all $i \in \{1, \ldots, N\}$ and $z \in \{0, 1\}$.  We examine the conditional expectation $\Expectation{\varphi(Z_{i}) \given Z = \tilde{z}}$.

    \begin{align*}
        \Expectation{\varphi(Z_{i}) \given Z = \tilde{z}} &= \Expectation{Z_{i}N\left( n_{1}^{-1}\dot{\epsilon}_{i}(1) + n_{0}^{-1}\dot{\epsilon}_{i}(0) \right)\given Z = \tilde{z}} - \Expectation{Nn_{0}^{-1}\dot{\epsilon}_{i}(0)\given Z = \tilde{z}} +\\
        &\hspace{1in}  \Expectation{\left(y_{i}(1) - y_{i}(0) \right)\given Z = \tilde{z}}\\
        &=\tilde{z}_{i}\underbrace{\Expectation{N\left( n_{1}^{-1}\dot{\epsilon}_{i}(1) + n_{0}^{-1}\dot{\epsilon}_{i}(0) \right)}}_{ 0} - \underbrace{\Expectation{Nn_{0}^{-1}\dot{\epsilon}_{i}(0)}}_{0} +   \underbrace{\Expectation{\left(y_{i}(1) - y_{i}(0) \right)}}_{ \PATE}\\
        &= \PATE.
    \end{align*}

    The random variables $\varphi(Z_{1}), \ldots, \varphi(Z_{n})$ are a conditionally independent family of random variables given the event $Z = \tilde{z}$; however, they are not identically distributed.  For all $i$ such that $\tilde{z}_{i} = 0$ the random variable $\varphi(\tilde{z}_{i})$ is equal in distribution to $ - Nn_{0}^{-1}\dot{\epsilon}_{i}(0) + \left(y_{i}(1) - y_{i}(0) \right)$.  For the $i$ such that $\tilde{z}_{i} = 1$ the random variable $\varphi(\tilde{z}_{i})$ is equal in distribution to $N\left( n_{1}^{-1}\dot{\epsilon}_{i}(1) + n_{0}^{-1}\dot{\epsilon}_{i}(0) \right) - Nn_{0}^{-1}\dot{\epsilon}_{i}(0) + \left(y_{i}(1) - y_{i}(0) \right)$.  Motivated by this, we study
   \begin{align*}
       N^{-1/2}\left(\sum_{i = 1}^{N}\varphi(\tilde{z}_{i}) - \PATE \right) &= N^{-1/2}\left(\sum_{i = 1}^{N}\varphi(\tilde{z}_{i}) - \Expectation{\varphi(Z_{i}) \given Z = \tilde{z}} \right)\\
       &=\left(N^{-1/2}n_{0}^{1/2}\right)n_{0}^{-1/2}\left(\sum_{i \st \tilde{z}_{i} = 0}\varphi(\tilde{z}_{i}) - \Expectation{\varphi(Z_{i}) \given Z = \tilde{z}} \right) +\\
       &\hspace{1in}\left(N^{-1/2}n_{1}^{1/2}\right)n_{1}^{-1/2}\left(\sum_{i \st \tilde{z}_{i} = 1}\varphi(\tilde{z}_{i}) - \Expectation{\varphi(Z_{i}) \given Z = \tilde{z}} \right)
   \end{align*}
    The first term on the right-hand-side is independent from the second term on the right-hand-side; furthermore:
    \begin{itemize}
        \item The first term is a $n_{0}^{-1/2}$-scaled sum of $n_{0}$ independent terms each equal in distribution to $ - Nn_{0}^{-1}\dot{\epsilon}_{i}(0) + \left(y_{i}(1) - y_{i}(0) \right)$;
        \item The second term is a $n_{1}^{-1/2}$-scaled sum of $n_{1}$ independent terms each equal in distribution to $N\left( n_{1}^{-1}\dot{\epsilon}_{i}(1) + n_{0}^{-1}\dot{\epsilon}_{i}(0) \right) - Nn_{0}^{-1}\dot{\epsilon}_{i}(0) + \left(y_{i}(1) - y_{i}(0) \right)$.
    \end{itemize}

    Recall that $n_{1}/N \rightarrow p$ and $n_{0}/N \rightarrow (1 - p)$; then apply the Lindeberg–L\'{e}vy central limit theorem \citep[Theorem 3.4.1]{durrett} to the two terms $n_{0}^{-1/2}\left(\sum_{i \st \tilde{z}_{i} = 0}\varphi(\tilde{z}_{i}) - \Expectation{\varphi(Z_{i}) \given Z = \tilde{z}} \right)$ and $n_{1}^{-1/2}\left(\sum_{i \st \tilde{z}_{i} = 1}\varphi(\tilde{z}_{i}) - \Expectation{\varphi(Z_{i}) \given Z = \tilde{z}} \right)$ separately.  Denote the limiting variance of the first term by $s_{0}$ and the limiting variance of the second term by $s_{1}$ and notice that these quantities do not depend on the particular choice of $\tilde{z}$.  Then, regardless of $\tilde{z}$'s value we have that $N^{-1/2}\left(\sum_{i = 1}^{N}\varphi(\tilde{z}_{i}) - \Expectation{\varphi(Z_{i}) \given Z = \tilde{z}} \right)$ converges weakly to the random variable $(1 - p)^{-1/2}A + p^{-1/2}B$ where $A \sim \Normal{0}{s_{0}}$ independent of $B \sim \Normal{0}{s_{1}}$.  By the independence of $A$ and $B$ we have that, for all $\tilde{z} \in \Omega_{CRE}$,
    \begin{equation*}
        N^{-1/2}\left(\sum_{i = 1}^{N}\varphi(\tilde{z}_{i}) - \Expectation{\varphi(Z_{i}) \given Z = \tilde{z}} \right) \text{ converges in distribution to } \Normal{0}{\frac{s_{0}}{1 - p} + \frac{s_{1}}{p}}.
    \end{equation*}

    In other words, $N^{-1/2}\left(\sum_{i = 1}^{N}\varphi(Z_{i}) - \PATE \right)$ conditional upon $Z$ converges weakly to $\Normal{0}{(1-p)^{-1}s_{0} + p^{-1}s_{1}}$ almost surely with respect to randomness in $Z$.  Consequently, by Lemma~\ref{lem: conditional to unconditional convergence}, $N^{-1/2}\left(\sum_{i = 1}^{N}\varphi(Z_{i}) - \PATE \right)$ converges in distribution to $\Normal{0}{(1-p)^{-1}s_{0} + p^{-1}s_{1}}$ unconditionally on $Z$.  Unwinding the definition of $\varphi(Z_{i})$ establishes that $N^{1/2}\left(\tauhat_{cal} - \PATE\right)$ obeys a central limit theorem.

\end{proof}

    \begin{theorem}\label{supp thm: superpopulation noninferiority}
        Assume the conditions of Theorem~\ref{supp thm: superpopulation clt for taucal} hold.  Further suppose that the original prediction functions $\muhat_{0}$ and $\muhat_{1}$ are prediction unbiased.  Then $N^{1/2}(\tauhat_{cal} - \PATE)$, $N^{1/2}(\tauhat_{GBcal} - \PATE)$, $N^{1/2}(\tauhat_{gOB} - \PATE)$, and $N^{1/2}(\tauhat_{unadj} - \PATE)$ all obey central limit theorems and $N^{1/2}(\tauhat_{cal} - \PATE)$ has asymptotic variance no greater than any of the other three.
    \end{theorem}
    \begin{proof}
        The central limit theorem for $N^{1/2}(\tauhat_{cal} - \PATE)$ is proved in Theorem~\ref{supp thm: superpopulation clt for taucal}.  The proof for the central limit theorems of $N^{1/2}(\tauhat_{GBcal} - \PATE)$ follows a similar arc to that of Theorem~\ref{supp thm: superpopulation clt for taucal}; the proof of the central limit theorem for $N^{1/2}(\tauhat_{gOB} - \PATE)$ also follows similar reasoning, but relies upon the assumption that the original prediction functions $\muhat_{0}$ and $\muhat_{1}$ are prediction unbiased.  The assumption that the original prediction functions $\muhat_{0}$ and $\muhat_{1}$ are prediction unbiased is not needed for $N^{1/2}(\tauhat_{cal} - \PATE)$ and $N^{1/2}(\tauhat_{GBcal} - \PATE)$ since these two calibration procedures automatically confer prediction unbiasedness due to the first-order optimality conditions of linear regression.  Finally, the central limit theorem for $N^{1/2}(\tauhat_{unadj} - \PATE)$ is a classical consequence of the Lyapunov central limit theorem.

        By the law of total variance, for any measurable event $\mathcal{F}$, the variance of $N^{1/2}(\tauhat_{cal} - \PATE)$ decomposes as
        \begin{multline*}
            \Variance{N^{1/2}(\tauhat_{cal} - \PATE)} = \Expectation{\Variance{N^{1/2}(\tauhat_{cal} - \PATE) \given \mathcal{F}}} +\\ \Variance{\Expectation{N^{1/2}(\tauhat_{cal} - \PATE) \given \mathcal{F}}}.
        \end{multline*}
        Taking $\mathcal{F}$ to be the event $\{(y_{i}(0), y_{i}(1), x_{i}) = (\mathbf{y}_{i}(0), \mathbf{y}_{i}(1), \mathbf{x}_{i})  \, i = 1, \ldots, N\}$ yields the decomposition of the variance of $N^{1/2}(\tauhat_{cal} - \PATE)$ into the expected variance of $N^{1/2}(\tauhat_{cal} - \PATE)$ given a fixed finite population and the variance of the expectation of $N^{1/2}(\tauhat_{cal} - \PATE)$ given that fixed finite population.  Since $\tauhat_{cal} - \SATE$ conditioned on $\mathcal{F}$ has mean on the order of $o_{P}(N^{-1/2})$ it follows that $\lim_{N \rightarrow \infty}\Variance{N^{1/2}(\tauhat_{cal} - \PATE)}$ equals
        \begin{equation}\label{eqn: decomp of variance}
             \underbrace{\lim_{N \rightarrow \infty}\Expectation{\Variance{N^{1/2}(\tauhat_{cal} - \PATE) \given \mathcal{F}}}}_{\text{Term 1}} + \underbrace{\lim_{N \rightarrow \infty}\Variance{N^{1/2}(\SATE - \PATE) }}_{\text{Term 2}}.
        \end{equation}
        Term 1 is the limiting expected finite population variance of $N^{1/2}(\tauhat_{cal} - \PATE)$ while Term 2 is the limiting variance of $N^{1/2}(\SATE - \PATE)$.  In other words, Term 2 is the  asymptotic $N$-scaled variance of the \textsc{SATE}\, around the \textsc{PATE}.  The same variance decomposition idea applies to $N^{1/2}(\tauhat_{GBcal} - \PATE)$, $N^{1/2}(\tauhat_{gOB} - \PATE)$, and $N^{1/2}(\tauhat_{unadj} - \PATE)$; for each different estimator the form of Term 1 adapts to the particular estimator at hand but Term 2 is exactly the same.  Thus, the differences in asymptotic variance under the superpopulation model are controlled only by the differences in finite population variance.  Consequently, the finite population analysis already performed implies that $N^{1/2}(\tauhat_{cal} - \PATE)$ has asymptotic variance no greater than that of $N^{1/2}(\tauhat_{GBcal} - \PATE)$, $N^{1/2}(\tauhat_{gOB} - \PATE)$, and $N^{1/2}(\tauhat_{unadj} - \PATE)$.
    \end{proof}
    
    \color{black}
    The decomposition of variance in \eqref{eqn: decomp of variance} points to an important deficiency of $\hat{V}_{cal}$ of \eqref{eqn: fin pop variance estimator}: while the variance estimator $\hat{V}_{cal}$ of \eqref{eqn: fin pop variance estimator} is guaranteed to be asymptotically conservative in the finite population model, it need not be valid if additional randomness is incorporated into the data generating process.  Indeed, if there is randomness in the potential outcomes themselves, $N\hat{V}_{cal}$ may converge in probability to a constant which is strictly smaller than the limiting variance of $N^{1/2}(\tauhat_{cal} - N^{-1}\sum_{i = 1}^{N}\Expectation{y_{i}(1) - y_{i}(0)})$.  Fortunately, \eqref{eqn: decomp of variance} also suggests a simple rectification of this anti-conservativeness.  By asymptotic linearity and the usual finite population decomposition of variance \citep{decompTreatmentEffectVar}, we have that
    \begin{align*}
        \lim_{N \rightarrow \infty}&\Expectation{\Variance{N^{1/2}(\tauhat_{cal} - \PATE) \given \mathcal{F}}} = \frac{\Sigma_{\dot{\epsilon}(1), \infty}}{p} + \frac{\Sigma_{\dot{\epsilon}(0), \infty}}{1 - p} - \Sigma_{\delta, \infty},\\
        \Sigma_{\dot{\epsilon}(z), \infty} &= \lim_{N \rightarrow \infty} \frac{1}{N - 1}\sum_{i = 1}^{N}\Expectation{\left(\dot{\epsilon}_{i}(z) - \frac{1}{N}\sum_{j = 1}^{N}\Expectation{\dot{\epsilon}_{j}(z)}  \right)^{2}} \quad \text{for } z \in \{0, 1\},\\
        \Sigma_{\delta, \infty} &= \lim_{N \rightarrow \infty} \frac{1}{N - 1}\sum_{i = 1}^{N}\Expectation{\left(\left(\dot{\epsilon}_{i}(1) - \dot{\epsilon}_{i}(0)\right) - \frac{1}{N}\sum_{j = 1}^{N}\Expectation{\dot{\epsilon}_{j}(1) - \dot{\epsilon}_{j}(0)}  \right)^{2}}.
    \end{align*}
    Furthermore,
    \begin{align*}
        \lim_{N \rightarrow \infty}&\Variance{N^{1/2}(\SATE - \PATE) } = \Sigma_{\tau, \infty},\\
        \Sigma_{\tau, \infty} &= \lim_{N \rightarrow \infty}\frac{1}{N - 1}\sum_{i = 1}^{N}\Expectation{\left(\left(y_{i}(1) - y_{i}(0)\right) - \frac{1}{N}\sum_{j = 1}^{N}\Expectation{y_{j}(1) - y_{j}(0)}  \right)^{2}}.
    \end{align*}
    Combining the two previous observations with \eqref{eqn: decomp of variance} yields that
    \begin{equation*}
        \lim_{N \rightarrow \infty}\Variance{N^{1/2}(\tauhat_{cal} - \PATE)} = \frac{\Sigma_{\dot{\epsilon}(1), \infty}}{p} + \frac{\Sigma_{\dot{\epsilon}(0), \infty}}{1 - p} - \Sigma_{\delta, \infty} + \Sigma_{\tau, \infty}.
    \end{equation*}
    
    By the classical partitioning of variance in linear regression the total sum of squares is the sum of the residual sum of squares and the explained sum of squares; in our context this means that 
    \begin{align*}
        \Sigma_{\tau, \infty} &= \Sigma_{\delta, \infty} + \Sigma_{fitted, \infty},\\
        \Sigma_{fitted, \infty} &= \lim_{N \rightarrow \infty}\frac{1}{N - 1}\sum_{i = 1}^{N}\Expectation{\left(\left(\overdotmu_{OLS, 1}(x_{i}) - \overdotmu_{OLS, 0}(x_{i})\right) - \frac{1}{N}\sum_{j = 1}^{N}\Expectation{\overdotmu_{OLS, 1}(x_{j}) - \overdotmu_{OLS, 0}(x_{j})} \right)^{2}}.
    \end{align*}
    In total, we have that 
    \begin{equation*}
        \lim_{N \rightarrow \infty}\Variance{N^{1/2}(\tauhat_{cal} - \PATE)} = \frac{\Sigma_{\dot{\epsilon}(1), \infty}}{p} + \frac{\Sigma_{\dot{\epsilon}(0), \infty}}{1 - p} + \Sigma_{fitted, \infty}.
    \end{equation*}
    
    Consequently, to adapt the variance estimator of \eqref{eqn: fin pop variance estimator} to superpopulation inference, one must incorporate the variance of the predicted treatment effects; this forms the superpopulation variance estimator
    \begin{multline}
        \hat{V}_{cal, sup} = \frac{\hat{\Sigma}_{1, cal}}{n_{1}} + \frac{\hat{\Sigma}_{0, cal}}{n_{0}} + \\
        \frac{1}{N - 1}\sum_{i = 1}^{N}\left(\left(\muhat_{OLS, 1}(x_{i}) - \muhat_{OLS, 0}(x_{i})\right) - \frac{1}{N}\sum_{j = 1}^{N}\left(\muhat_{OLS, 1}(x_{j}) - \muhat_{OLS, 0}(x_{j})\right) \right)^{2}.\label{eqn: superpop variance estimator}
    \end{multline}

    Notice that the decomposition of variance into $p^{-1}\Sigma_{\dot{\epsilon}(1), \infty} + (1 - p)^{-1}\Sigma_{\dot{\epsilon}(0), \infty} + \Sigma_{fitted, \infty}$ relies upon the orthogonality of residuals $y_{i}(z) - \overdotmu_{z}(x_{i})$ and predicted values $\overdotmu_{z}(x_{i})$ due to the first order optimality condition of ordinary least squares linear regression, so such a decomposition is not generally applicable to uncalibrated estimators; this provides another attractive property for calibrated estimators.\footnote{For a generic superpopulation variance estimator in the context of uncalibrated imputation estimators see \citet[Section 3.3]{rothe}.}
    \color{black}

\subsection{Linear Calibration for the Conditional Average Treatment Effect}

We start by presenting several highly general regularity conditions; further on we present an example of a generative model which satisfies the required regularity conditions.  Our regularity conditions in the fixed-covariate model are presented with respect to conditioning upon the potential outcomes in addition to the already implicitly determined covariates.  For each population of size $N$ with deterministic covariates $\{x_{i}\}_{i = 1}^{N}$, consider conditioning upon some realization of the potential outcomes 
\begin{equation}\label{eqn: fixed cov conditioning on outcomes}
    \left\{(y_{i}(0), y_{i}(1)) = (\mathbf{y}_{i}(0), \mathbf{y}_{i}(1))  \, i = 1, \ldots, N\right\}.
\end{equation}  

\begin{assumption}[Fixed-Covariate Limiting Means and Variances]\label{supp asm: fixed cov means and covs stabilize}
    For all conditioning events of the form \eqref{eqn: fixed cov conditioning on outcomes} except for a set of measure zero under the fixed covariate model we require that for ${z=0,1}$ there exists a limiting value $\bar{y}(z)_\infty$ such that $\lim_{N\rightarrow \infty} N^{-1}\sum_{i = 1}^{N}y_i(z) = \bar{y}(z)_\infty$.  Likewise, for almost all conditioning events of the form \eqref{eqn: fixed cov conditioning on outcomes} there exists a common limiting positive semidefinite matrix $\Sigma$ such that
    \begin{multline}\label{eqn: conditional cov mat limit}
        \lim_{N \rightarrow \infty}\frac{1}{N - 1}\sum_{i = 1}^{N}\left(\begin{bmatrix}
        y_{i}(0) \\ y_{i}(1) \\ \overdotmu_{0}(x_{i}) \\ \overdotmu_{1}(x_{i})
        \end{bmatrix} - N^{-1}\sum_{j = 1}^{N}\begin{bmatrix}
        y_{j}(0) \\ y_{j}(1) \\ \overdotmu_{0}(x_{j}) \\ \overdotmu_{1}(x_{j})
        \end{bmatrix} \right) \tensor \\
         \left(\begin{bmatrix}
        y_{i}(0) \\ y_{i}(1) \\ \overdotmu_{0}(x_{i}) \\ \overdotmu_{1}(x_{i})
        \end{bmatrix} - N^{-1}\sum_{j = 1}^{N}\begin{bmatrix}
        y_{j}(0) \\ y_{j}(1) \\ \overdotmu_{0}(x_{j}) \\ \overdotmu_{1}(x_{j})
        \end{bmatrix} \right)  = \Sigma.
    \end{multline}
    
\end{assumption}

\begin{assumption}[Fixed-Covariate Bounded Fourth Moments]\label{supp asm: fixed cov bounded fourth moment}
    For all conditioning events of the form \eqref{eqn: fixed cov conditioning on outcomes} except for a set of measure zero under the fixed covariate model we require that for ${z=0,1}$ there exists some $C < \infty$ such that for all $N \in \N$, $N^{-1}\sum_{i = 1}^{N}y_{i}(z)^{4} < C$ and $N^{-1}\sum_{i = 1}^{N}\overdotmu_z(x_i)^{4} < C.$

\end{assumption}

In the fixed-covariate case, the definition of $\dot{\beta}$ given in Lemma~\ref{lem: consistency of OLS coefficients} requires a minor change to
\begin{align*}
    \dot{\beta} &= (\dot{\beta}_{0}, \dot{\beta}_{1}) = \argmin_{\beta_{0}, \beta_{1}}\sum_{i = 1}^{N}\Expectation{\left\{y_{i}(1) - \left(\beta_{0} + \beta_{1}^{T}\begin{bmatrix}\dot{\mu}_{0}(x_{i}) \\ \dot{\mu}_{1}(x_{i}) \end{bmatrix}\right) \right\}^{2} \; \bigg| \; x_{i}}.
\end{align*}
Assumptions~\ref{supp asm: fixed cov means and covs stabilize} and \ref{supp asm: fixed cov bounded fourth moment} facilitate consistency of ordinary least squares coefficients in the fixed covariate model.

\begin{lemma}[Lindeberg Condition]\label{lem: Lindeberg condition}
    Under Assumptions~\ref{supp asm: fixed cov means and covs stabilize} and \ref{supp asm: fixed cov bounded fourth moment}, the potential outcomes $(y_{i}(0), y_{i}(1))$ given $x_{i}$ jointly satisfy the conditions of Lindeberg's central limit theorem.
\end{lemma}
\begin{proof}
    Define $s_{N}^{2}(z) = \sum_{i =1}^{N}\Variance{y_{i}(z) \given x_{i}}$ where $\Variance{y_{i}(z) \given x_{i}}$ denotes the variance of $y_{i}(z)$ given the covariates $x_{i}$.  We will show that Lyapounov's condition \citep[Equation 11.12]{TestingStatHyp} holds at $\delta = 2$ for the potential outcomes; formally, for $z \in \{0, 1\}$ and $\delta = 2$
    \begin{equation}\label{eqn: lyapounov condition for CATE}
        \lim_{N \rightarrow \infty}\frac{1}{s_{N}^{2 + \delta}(z)}\sum_{i = 1}^{N}\Expectation{\left|y_{i}(z) \right|^{2 + \delta} \given x_{i}} = 0.
    \end{equation}
    
    Rewrite \eqref{eqn: lyapounov condition for CATE} as
    \begin{equation*}
        \lim_{N \rightarrow \infty}\underbrace{\frac{N}{s_{N}^{2 + \delta}(z)}}_{\text{Term 1}}\cdot\underbrace{\frac{1}{N}\sum_{i = 1}^{N}\Expectation{\left|y_{i}(z) \right|^{2 + \delta} \given x_{i}}}_{\text{Term 2}}.
    \end{equation*}
    Term 2 is bounded above by $C$ for all $N$ by Assumption~\ref{supp asm: fixed cov bounded fourth moment} and Kolmogorov's strong law of large numbers for non-identically distributed sequences \textcolor{black}{\citep[Section 10.7]{feller68}}, so it suffices to show that Term 1 vanishes as $N \rightarrow \infty$.  By Assumption~\ref{supp asm: fixed cov means and covs stabilize} and Kolmogorov's strong law, $N^{-1}\sum_{i = 1}^{N}\Variance{y_{i} \given x_{i}}$ limits to a positive constant which we denote $\Sigma_{y(z)}$.\footnote{The use of Kolmogorov's strong law can be replaced by the bounded convergence theorem for both the arguments pertaining to Term 1 and Term 2; Assumption~\ref{supp asm: fixed cov bounded fourth moment} establishes the required bounds.}  Consequently, $Ns_{N}^{-2} \rightarrow \Sigma_{y(z)}^{-1} > 0$ and $s_{N}^{2} = \Theta(N)$.  From this, it is immediate that $Ns_{N}^{-(2+ \delta)} \rightarrow 0$ as $N \rightarrow \infty$ for $\delta = 2$.  In total, this establishes \eqref{eqn: lyapounov condition for CATE}.  Since \eqref{eqn: lyapounov condition for CATE} is sufficient for the Lindeberg condition \citep[Page 427]{TestingStatHyp}, the result follows. 
\end{proof}
\begin{remark}
    Since the residuals $(\dot{\epsilon}_{i}(0), \dot{\epsilon}_{i}(1))$ defined by $\dot{\epsilon}_{i}(z) = y_{i}(z) - \overdotmu_{z}(x_i)$ are deterministic translations of the potential outcomes $(y_{i}(0), y_{i}(1))$ in the fixed-covariate model, Lemma~\ref{lem: Lindeberg condition} immediately implies that the residuals $(\dot{\epsilon}_{i}(0), \dot{\epsilon}_{i}(1))$ jointly satisfy the conditions of Lindeberg's central limit theorem.
\end{remark}

The stability assumption is also taken in accordance with the conditioning events of \eqref{eqn: fixed cov conditioning on outcomes}.

\begin{assumption}[Stability]\label{supp asm: fixed cov conditional stability}
    For all conditioning events of the form \eqref{eqn: fixed cov conditioning on outcomes} except for a set of measure zero under the fixed covariate model there exists a deterministic sequence of functions $\{\overdotmu_{1}^{(N)}\}_{N \in \N}$ such that
    \begin{equation}\label{eqn: fixed cov conditional stability}
        \left(\frac{1}{N}\sum_{i = 1}^{N}||\overdotmu_{1}^{(N)}(x_{i}) - \muhat_{1}(x_{i})||^{2} \right)^{1/2} = o_{P}(1).
    \end{equation}
    We assume that an analogous sequence, $\{\overdotmu_{0}^{(N)}\}_{N \in \N}$, exists for $\muhat_{0}$.  
\end{assumption}
In Assumption~\ref{supp asm: fixed cov conditional stability} the randomness on both sides of \eqref{eqn: fixed cov conditional stability} is only with respect to $Z$.

\begin{remark}
    The regularity conditions Assumptions~\ref{supp asm: fixed cov means and covs stabilize} and \ref{supp asm: fixed cov bounded fourth moment} do not prescribe a particular generative model; they are working-level mathematical ingredients in our subsequent proofs.  In order to complete the picture, we detail a conventional generative model which satisfies Assumptions~\ref{supp asm: fixed cov means and covs stabilize} and \ref{supp asm: fixed cov bounded fourth moment}.
    
    For each finite population the $N$ units have covariates $\{x_{i}\}_{i = 1}^{N}$ and the potential outcomes of unit $i$ are independent of all units $j$ for $j \neq i$.  The pair of potential outcomes $(y_{i}(0), y_{i}(1))$ are distributed according the the conditional distribution $P_{x_{i}}$.  Consequently, for any two individuals who exactly match on their covariates, their potential outcome pairs are independent and identically distributed.  Formally, $x_{i} = x_{j}$ implies that $(y_{i}(0), y_{i}(1))$ is equal in distribution to $(y_{j}(0), y_{j}(1))$.  Let expectations under the distribution $P_{x_{i}}$ be denoted as $\text{E}_{x_{i}}$; similarly, denote variances by $\text{var}_{x_{i}}$. 
    
    Assumption~\ref{supp asm: fixed cov means and covs stabilize} codifies the need for a strong law of large numbers for the means and variances of the realized populations under the fixed covariate model.  Assumption~\ref{supp asm: fixed cov means and covs stabilize} is highly general, but can be implied by moment conditions on the potential outcomes themselves.  In particular suppose that $\lim_{N \rightarrow \infty}N^{-1}\sum_{i = 1}^{N}\text{E}_{x_{i}}[y_{i}(z)]$ and $\lim_{N \rightarrow \infty}N^{-1}\sum_{i = 1}^{N}\text{var}_{x_{i}}(y_{i}(z))$ exist and are finite.  This is sufficient to guarantee the first condition of Assumption~\ref{supp asm: fixed cov means and covs stabilize}; the proof proceeds by application of Kolmogorov's strong law of large numbers for non-identically distributed sequences.  Similar reasoning establishes that if there exists a limiting positive semidefinite matrix $\Sigma$ such that
    \begin{multline*}
        \lim_{N \rightarrow \infty}\frac{1}{N - 1}\sum_{i = 1}^{N}\text{E}_{x_{i}}\Bigg[\left(\begin{bmatrix}
        y_{i}(0) \\ y_{i}(1) \\ \overdotmu_{0}(x_{i}) \\ \overdotmu_{1}(x_{i})
        \end{bmatrix} - N^{-1}\sum_{j = 1}^{N}\text{E}_{x_{j}}\left[\begin{bmatrix}
        y_{j}(0) \\ y_{j}(1) \\ \overdotmu_{0}(x_{j}) \\ \overdotmu_{1}(x_{j})
        \end{bmatrix} \right] \right)  \\
        \tensor \left(\begin{bmatrix}
        y_{i}(0) \\ y_{i}(1) \\ \overdotmu_{0}(x_{i}) \\ \overdotmu_{1}(x_{i})
        \end{bmatrix} - N^{-1}\sum_{j = 1}^{N}\text{E}_{x_{j}}\left[\begin{bmatrix}
        y_{j}(0) \\ y_{j}(1) \\ \overdotmu_{0}(x_{j}) \\ \overdotmu_{1}(x_{j})
        \end{bmatrix} \right] \right)\Bigg] = \Sigma
    \end{multline*}
    and sufficient control of higher order moments (e.g., coordinate-wise fourth moments) is assumed then the second condition of Assumption~\ref{supp asm: fixed cov means and covs stabilize} again follows by Kolmogorov's law of large numbers applied now to the entries of the matrix in \eqref{eqn: conditional cov mat limit}.  
    The moment condition
    \begin{equation*}
        N^{-1}\sum_{i = 1}^{N}\text{E}_{x_{i}}\left[y_{i}(z)^{4}\right] \rightarrow c_{z} \text{ for } z \in \{0, 1\}
    \end{equation*}
    for a constant $c_{z}$ and the requirement that Kolmogorov's condition \citep[Eqn. 7.2]{feller68} holds for the random variables $y_{i}(z)$ establishes Assumption~\ref{supp asm: fixed cov bounded fourth moment} by Kolmogorov's law of large numbers and simultaneously serves in establishing the second condition of Assumption~\ref{supp asm: fixed cov means and covs stabilize}.
    Since the covariates are non-stochastic we make the usual assumption that $N^{-1}\sum_{i = 1}^{N}\overdotmu_z(x_i)^{4} < C$ for $z \in \{0, 1\}$; this exactly mirrors our earlier finite population analysis.
\end{remark}

\begin{lemma}[Asymptotically Linear Expansions around the \textsc{SATE}]\label{lem: asymptotic linear expansion around SATE in fixed cov model}
     Under Assumptions~\ref{supp asm: non-degen sampling limit}, \ref{app asm: error process vanishes}, and \ref{supp asm: fixed cov conditional stability} the random variable $N^{-1}\sum_{i = 1}^{N}\left(\muhat_{z}(x_{i}) - y_{i}(z)\right)$ is asymptotically linear in the sense that, for $\dot{\epsilon}_{i}(z) = y_{i}(z) - \overdotmu_{z}(x_i)$
    \begin{equation*}
        \frac{1}{N} \sum_{i = 1}^{N}\left(\muhat_{z}(x_{i}) - y_{i}(z)\right) = \frac{1}{n_{z}}\sum_{i \st Z_{i} = z}\dot{\epsilon}_{i}(z) + o_{p}(N^{-1/2}).
    \end{equation*}
    
\end{lemma}

The proof of Lemma~\ref{lem: asymptotic linear expansion around SATE in fixed cov model} mostly mirrors that of Lemma~\ref{lem: asymptotic linear expansion around SATE}; the main working ingredient is the vanishing error process argument facilitated by Assumption~\ref{app asm: error process vanishes}.\footnote{In the fixed covariate model, the error process $\G_{N, z}(\overdotmu_{z}) - \G_{N, z}(\muhat_{z})$ inherits randomness only from stochasticity in the outcomes and treatment allocation while the covariates $x_{i}$ are viewed as deterministic vectors.}

For the analysis of central limit theorems under the fixed-covariate model it is mathematically convenient to adopt the probabilistic joint-model-design framework of \citet{twoPhaseFramework}.  Consider a probability space $(\Phi, \mathscr{F}, P)$ from which we form a population of $N$ individuals with potential outcome $y_{i}(z) = \mathcal{Y}_{z}(\omega_{i})$ for $z \in \{0, 1\}$ and covariates $x_{i} = \mathcal{X}(\omega_{i})$ for $\mathcal{Y}_{z}$ and $\mathcal{X}$ measurable functions of ${\omega_{i} \in \Phi}$.  Let $\mathcal{F}_{cov} = \left\{\omega \in \Phi \st \mathcal{X}(\omega_{i}) = \mathbf{x}_{i} \text{ for } i = 1, \ldots, N \right\}$; $\mathcal{F}_{cov}$ is the event that the covariates of the $N$ individuals are given by the deterministic values $\{\mathbf{x}_{i}\}_{i = 1}^{N}$.  Let $P_{\mathcal{F}_{cov}}$ be the conditional probability measure derived from $P$ conditioned on the event $\mathcal{F}_{cov}$.  In the case that $\mathcal{F}_{cov}$ is an event of $P$-measure zero, we tacitly assume that there exists a well-defined regular conditional probability measure and take $P_{\mathcal{F}_{cov}}$ to be this conditional; see \citet[Section 7.2]{chowTeicher} for more details on this technical issue.  Inferences under the fixed-covariate model take $(\Omega, \mathscr{F}, P_{\mathcal{F}_{cov}})$ to generate the outcomes $y_{i}(z) = \mathcal{Y}_{z}(\omega_{i})$ for $z \in \{0, 1\}$ and implicitly constrain the covariates  $x_{i} = \mathcal{X}(\omega_{i}) = \mathbf{x}_{i}$ for $i = 1, \ldots, N$. 

\begin{theorem}\label{supp thm: fixed covariate clt for taucal}
    Under the fixed-covariate model, subject to Assumptions~\ref{supp asm: non-degen sampling limit}, \ref{app asm: error process vanishes}, \ref{supp asm: fixed cov means and covs stabilize}, \ref{supp asm: fixed cov bounded fourth moment}, and \ref{supp asm: fixed cov conditional stability}, $N^{1/2}\left(\tauhat_{cal} - \CATE\right)$ obeys a central limit theorem.
\end{theorem}
\begin{proof}
    We start out with the simple observation that
    \begin{equation*}
        N^{1/2}\left(\tauhat_{cal} - \CATE\right) = N^{1/2}\left(\tauhat_{cal} - \SATE\right) + N^{1/2}\left(\SATE - \CATE\right).
    \end{equation*}
    
    Leveraging the asymptotic linearity of Lemma~\ref{lem: asymptotic linear expansion around SATE in fixed cov model} the first term on the right-hand-side can be replaced with the difference in means of the residuals $\dot{\epsilon}_{i}(1)$ and $\dot{\epsilon}_{i}(0)$ plus an $o_{P}(1)$ error term:
    \begin{multline*}
        N^{1/2}\left(\tauhat_{cal} - \CATE\right) = N^{1/2}\left(\frac{1}{n_{1}}\sum_{i \st Z_{i} = 1}\dot{\epsilon}_{i}(1) - \frac{1}{n_{0}}\sum_{i \st Z_{i} = 0}\dot{\epsilon}_{i}(0) \right) + o_{P}(1) + \\ N^{1/2}\left(\SATE - \CATE\right).
    \end{multline*}

    The $o_{P}(1)$ error term has no impact on the asymptotic distributional behaviour of $N^{1/2}\left(\tauhat_{cal} - \CATE\right)$.  Thus, to show a central limit theorem for $N^{1/2}\left(\tauhat_{cal} - \CATE\right)$ it suffices to show that
    \begin{enumerate}
        \item \label{itm: two CLTs} Conditionally upon the potential outcomes, the term $N^{1/2}\left(\frac{1}{n_{1}}\sum_{i \st Z_{i} = 1}\dot{\epsilon}_{i}(1) - \frac{1}{n_{0}}\sum_{i \st Z_{i} = 0}\dot{\epsilon}_{i}(0) \right)$ converges weakly in probability to a fixed Gaussian distribution and term $N^{1/2}\left(\SATE - \CATE\right)$ obeys a central limit theorem.
        \item \label{itm: asymptotic independence} The terms $ N^{1/2}\left(\tauhat_{cal} - \SATE\right) + o_{P}(1) = N^{1/2}\left(\frac{1}{n_{1}}\sum_{i \st Z_{i} = 1}\dot{\epsilon}_{i}(1) - \frac{1}{n_{0}}\sum_{i \st Z_{i} = 0}\dot{\epsilon}_{i}(0) \right)$ and $N^{1/2}\left(\SATE - \CATE\right)$ are asymptotically independent in the sense that their limiting joint distribution is the product of the two limiting marginal distributions.
    \end{enumerate}
    
    We tackle \ref{itm: two CLTs} first.  By Lemma~\ref{lem: Lindeberg condition} and the Lindeberg central limit theorem \citep[Theorem 11.2.5]{TestingStatHyp} it follows that $N^{1/2}\left(\SATE - \CATE\right)$ converges in distribution to a Gaussian distribution; denote this limiting distribution as $\Normal{0}{s_{m}}$.  
    
    Next, we show that $N^{1/2}\left(\frac{1}{n_{1}}\sum_{i \st Z_{i} = 1}\dot{\epsilon}_{i}(1) - \frac{1}{n_{0}}\sum_{i \st Z_{i} = 0}\dot{\epsilon}_{i}(0) \right)$ converges weakly in probability to a fixed Gaussian distribution. 
    
    Under Assumption~\ref{supp asm: fixed cov conditional stability} and the assumption that $N^{-1}\sum_{i = 1}^{N}\left( \overdotmu_{z}(x_{i}) - y_{i}(z)\right)^{2} = o(N)$ as a numeric sequence for almost all conditioning events of the potential outcomes, by Lemma~3 in the appendix of \citet{generalizedOB} we can, without loss of generality, stipulate that $N^{-1}\sum_{i = 1}^{N}\dot{\epsilon}_{i}(z) = 0$ for $z \in \{0, 1\}$ almost surely with respect to the conditioning \eqref{eqn: fixed cov conditioning on outcomes}.  Under Assumptions~\ref{supp asm: fixed cov means and covs stabilize} and \ref{supp asm: fixed cov bounded fourth moment} the finite population analysis provided in Theorem~\ref{supp thm: reg adj does no harm} shows that $N^{1/2}\left(\frac{1}{N}\sum_{i = 1}^{N}Z_{i}Nn_{1}^{-1}\dot{\epsilon}_{i}(1) - \frac{1}{N}\sum_{i = 1}^{N}(1 - Z_{i})Nn_{0}^{-1}\dot{\epsilon}_{i}(0) \right)$ converges weakly to a centered Gaussian distribution with variance given by the limit of $\sigma_{N}^{2}$ defined in \eqref{eqn: squared studenizing factor for cal}.  This limit exists by Assumption~\ref{supp asm: fixed cov means and covs stabilize} and is common to all conditioning events of the form \eqref{eqn: fixed cov conditioning on outcomes} up to a set of measure zero; we denote it by $s_{d}$.  Consequently, $N^{1/2}\left(\frac{1}{n_{1}}\sum_{i \st Z_{i} = 1}\dot{\epsilon}_{i}(1) - \frac{1}{n_{0}}\sum_{i \st Z_{i} = 0}\dot{\epsilon}_{i}(0) \right)$ converges weakly in probability to a random variable with distribution $\Normal{0}{s_{d}}$.

    Finally, we turn to \ref{itm: asymptotic independence}.  By Theorem 5.1 (iii) of \citet{twoPhaseFramework} it follows that the random vector $\left(N^{1/2}\left(\tauhat_{cal} - \SATE\right) ,N^{1/2}\left(\SATE - \CATE\right) \right)$ converges in distribution to  $(\mathcal{C}, \mathcal{D}) \sim \Normal{0}{s_{d}} \tensor \Normal{0}{s_{m}}$.\footnote{The original work of \citet{twoPhaseFramework} focuses on survey-sampling; however, nothing of their result Theorem 5.1 (iii) relies upon the survey-sampling framework of having only a single potential outcome, so we apply their result to the causal inference context of multiple potential outcomes.}
    
    By the continuous mapping theorem \citep[Theorem 18.11]{asymptoticStats_vdv}, $N^{1/2}\left(\tauhat_{cal} - \SATE\right)  + N^{1/2}\left(\SATE - \CATE\right)$ converges in distribution to $\mathcal{C} + \mathcal{D}$.  Since the sum of independent Gaussian random variables is itself Gaussian we have that $N^{1/2}\left(\tauhat_{cal} - \SATE\right)  + N^{1/2}\left(\SATE - \CATE\right)$ converges in distribution to $\Normal{0}{s_{d} + s_{m}}$.  In turn, this implies that $N^{1/2}\left(\tauhat_{cal} - \CATE\right)$ converges in distribution to $\Normal{0}{s_{d} + s_{m}}$.
\end{proof}

\begin{theorem}
    Assume the conditions of Theorem~\ref{supp thm: fixed covariate clt for taucal} hold.  Further suppose that the original prediction functions $\muhat_{0}$ and $\muhat_{1}$ are prediction unbiased.  Then $N^{1/2}(\tauhat_{cal} - \CATE)$, $N^{1/2}(\tauhat_{GBcal} - \CATE)$, $N^{1/2}(\tauhat_{gOB} - \CATE)$, and $N^{1/2}(\tauhat_{unadj} - \CATE)$ all obey central limit theorems and $N^{1/2}(\tauhat_{cal} - \CATE)$ has asymptotic variance no greater than any of the other three.
\end{theorem}
\begin{proof}
    The proof thematically mirrors that of Theorem~\ref{supp thm: superpopulation noninferiority}.  The first three central limit theorems are justified by Theorem~\ref{supp thm: fixed covariate clt for taucal} and analogous variants for $N^{1/2}(\tauhat_{GBcal} - \CATE)$ and $N^{1/2}(\tauhat_{gOB} - \CATE)$.  The fourth central limit theorem is justified by Lemma~\ref{lem: Lindeberg condition} and the Lindeberg central limit theorem \citep[Theorem 11.2.5]{TestingStatHyp}.
\end{proof}

\textcolor{black}{Variance estimation and the construction of confidence intervals proceeds via the variance estimator of \eqref{eqn: fin pop variance estimator} under analogous reasoning.}

\section{Further Simulations}\label{sec: further simulations}
\subsection{An Example with Logistic Regression}\label{sec: logistic simulation}
In the main text, we provided a simulation study to demonstrate the practical benefits of our calibration procedure while simultaneously showing the risks of uncalibrated estimators.  The main text used an example based upon Poisson regression; to further highlight the concerns of using uncalibrated estimates we now provide a simulation using logistic regression.

The $s$th of $S$ data sets contains $N$ individuals upon whom an experimenter performs a completely randomized experiment with $n_{1} = \lceil pN \rceil$ treated units.  In our simulations $p = 0.8$. Each unit has a scalar covariate $x_i$, generated as independent and identically distributed draws from a Uniform random variable on $[-8, 8]$. We then generate the potential outcomes under treatment and control for each individual independently as $y_i(1)\sim Bern\{f(x_i)\}$ and ${y_i(0)\sim Bern\{-0.4*(f(x_{i}) - 1)\}}$, where $Bern(c)$ is a Bernoulli distribution with probability of success $c$.  We take
\begin{equation*}
    f(x) = \frac{\exp(-3 + 2x)}{1 + \exp(-3 + 2x)}.
\end{equation*}
Consequently, the logistic regression model is correctly specified for the potential outcomes under treatment, but incorrectly specified for those under control.

For each data set, these values are left fixed while the remaining randomness in the simulation arises only from treatment allocation.  An experimenter observes only the count data $y_{i}(Z_{i})$ and continuous covariates $x_{i}$ for each unit.  Using the observed responses after each randomization of treatment allocation, we estimate the prediction functions $\hat{\mu}_0(x_i)$ and $\hat{\mu}_1(x_i)$ via separate logistic regressions of $y_i(Z_i)$ on $x_{i}$ in the subgroups where $Z_i=0$ and $Z_i=1$, respectively. We form the difference-in-means estimator $\hat{\tau}_{unadj}$, generalized Oaxaca-Blinder estimator $\hat{\tau}_{gOB}$, the singly-calibrated estimator of \citet[Equation 8]{generalizedOB} $\hat{\tau}_{GBcal}$, and our linearly-calibrated estimator $\hat{\tau}_{cal}$.

Table~\ref{tab: logistic sampling variances} compares the averages (over $s=1,...,S$) of the ratios of the variances for the adjusted estimators to the unadjusted estimator when setting $S=1000$, $B=1000$, and varying $N$.  Qualitatively, the results of Table~\ref{tab: logistic sampling variances} mirror those of the Poisson regression simulation in the main text: uncalibrated generalized Oaxaca-Blinder estimators can fare worse than the simple difference in means and the singly-calibrated estimator $\tauhat_{GBcal}$ fails to correct this issue; however, our calibration procedure asymptotically improves upon $\hat{\tau}_{unadj}$ while leveraging the desired nonlinear model.  


\begin{table}

\def~{\hphantom{0}}
\centering
\caption{Ratios of Monte Carlo variances for $\tauhat_{cal}$, $\tauhat_{GBcal}$, and $\tauhat_{gOB}$ to the difference in means estimator $\hat{\tau}_{unadj}$ for various experiment sizes $N$.  Each variance is based upon $B = 1000$ simulated treatment allocations for a given set of potential outcomes and covariates.  Results are averaged over $S = 1000$ simulated data sets.}{
\begin{tabular}{lccccc}
                & $\hat{\text{var}}(\tauhat_{gOB}) / \hat{\text{var}}(\tauhat_{unadj})$ & $\hat{\text{var}}(\tauhat_{GBcal}) / \hat{\text{var}}(\tauhat_{unadj})$  & $\hat{\text{var}}(\tauhat_{cal}) / \hat{\text{var}}(\tauhat_{unadj})$ &  &  \\
        $N = 200$   &  1.077  &  1.076  &  1.031  &  &  \\
        $N = 500$   &  1.056  &  1.054  &  0.993  &  &  \\
        $N = 1000$  &  1.050  &  1.047  &  0.981  &  &  \\
        $N = 10000$ &  1.043  &  1.041  &  0.970  &  &
\end{tabular}}
\label{tab: logistic sampling variances}
\end{table}

\subsection{Poisson Regression Calibration in Alternative Models}
To highlight the application of calibration in superpopulation and fixed-covariate models, we recreate the Poisson regression example from the main text at all three levels of inference.  For simulations in the superpopulation, new covariates and potential outcomes are redrawn in each of the $SB$ simulated data sets.  For simulation in the fixed-covariate model new covariates are constructed in the $s$th simulation, but are held fixed -- while potential outcomes are redrawn conditional upon these covariates -- for each randomization of treatment allocation $1, \ldots, B$ in the $s$th simulation.  Each simulation is conducted with $N = 10000$.  Variances are reported for appropriately centered versions $\tauhat_{cal}$, $\tauhat_{GBcal}$, and $\tauhat_{gOB}$.  As an example, for the $s$th simulation in the fixed-covariate case we compute the ratio of the variance of $\tauhat_{cal} - \CATE^{(s)}$ and the variance of $\tauhat_{unadj} - \CATE^{(s)}$ where $\CATE^{(s)}$ denotes the conditional average treatment effect in the $s$th simulated population.  Table~\ref{tab: SATE vs CATE vs PATE sampling variances} summarizes our results.  Even in the fixed-covariate and superpopulation models, uncalibrated Oaxaca-Blinder estimators may suffer from inflated asymptotic variances relative to the unadjusted difference in means and the single calibration of $\tauhat_{GBcal}$ fails to correct the issue. Fortunately, our linear calibration procedure succeeds in all three models, as evidenced by the third column of Table~\ref{tab: SATE vs CATE vs PATE sampling variances}.  The impact of calibration is most noticeable in the finite population framework but is nonetheless profound in all three models.  

\begin{table}

\def~{\hphantom{0}}
\centering
\caption{Ratios of Monte Carlo variances for $\tauhat_{cal}$, $\tauhat_{GBcal}$, and $\tauhat_{gOB}$ to the difference in means estimator $\hat{\tau}_{unadj}$ under different generative models.  Each variance is based upon $B = 1000$ simulated treatment allocations; results are averaged over $S = 1000$ simulated data sets.}{
\begin{tabular}{lccccc}
                & $\hat{\text{var}}(\tauhat_{gOB}) / \hat{\text{var}}(\tauhat_{unadj})$ & $\hat{\text{var}}(\tauhat_{GBcal}) / \hat{\text{var}}(\tauhat_{unadj})$  & $\hat{\text{var}}(\tauhat_{cal}) / \hat{\text{var}}(\tauhat_{unadj})$ &  &  \\
        \textsc{SATE}   &  1.660  &  1.657  &  0.654  &  &  \\
        \textsc{CATE}   &  1.549  &  1.547  &  0.711  &  &  \\
        \textsc{PATE}   &  1.114  &  1.114  &  0.941  &  &
\end{tabular}
}
\label{tab: SATE vs CATE vs PATE sampling variances}
\end{table}

In the third column of Table~\ref{tab: SATE vs CATE vs PATE sampling variances} $\hat{\text{var}}(\tauhat_{cal}) / \hat{\text{var}}(\tauhat_{unadj})$ increases as one goes from inference for the \textsc{SATE}\, to the \textsc{CATE}\, and finally to the \textsc{PATE}.  This trend is a fundamental reflection of asymptotic variances in the three models under consideration.  By the law of total variance, for any measurable event $\mathcal{F}$, the variance of $\tauhat_{unadj}$ decomposes as 
\begin{equation*}
    \Variance{N^{1/2}\tauhat_{unadj}} = \Expectation{\Variance{N^{1/2}\tauhat_{unadj} \given \mathcal{F}}} + N\Variance{\Expectation{\tauhat_{unadj} \given \mathcal{F}}}.
\end{equation*}
Taking $\mathcal{F}$ to be the event $\{(y_{i}(0), y_{i}(1), x_{i}) = (\mathbf{y}_{i}(0), \mathbf{y}_{i}(1), \mathbf{x}_{i})  \, i = 1, \ldots, N\}$ yields the decomposition of the variance of $\tauhat_{unadj}$ into the expected variance of $\tauhat_{unadj}$ given a fixed finite population and the variance of the expectation of $\tauhat_{unadj}$ given that fixed finite population.  Since the difference in means is unbiased for the sample average treatment effect given $\mathcal{F}$ it follows that the second term is just the variance of the sample average treatment effect, $\Variance{\SATE}$.  This decomposition is valid regardless of the underlying model of the data: superpopulation, fixed-covariate, or finite population.  For the sake of explanation, we discuss the difference between the \textsc{SATE}\, row and the \textsc{PATE}\, row, but the same reasoning applies to the differences between the \textsc{SATE}\, and \textsc{CATE}\, rows and the \textsc{CATE}\, and \textsc{PATE}\, rows of Table~\ref{tab: SATE vs CATE vs PATE sampling variances}.

In the finite population model there is no variance of the sample average treatment effect whatsoever since it depends only on fixed values; in other words, for the top row of $N\Variance{\Expectation{\tauhat_{unadj} \given \mathcal{F}}} = 0$.  In contrast with the finite population case, under a superpopulation model the sample average treatment effect may vary; in all but the most degenerate cases $N\Variance{\SATE} > 0$.  Furthermore, under the finite population conditions of \citet{lin13} and standard fourth-moment regularity conditions in the superpopulation model $\Expectation{\Variance{N^{1/2}\tauhat_{unadj} \given \mathcal{F}}}$ has the same limit in the finite population model and the superpopulation model.

The same variance decomposition can be applied to the estimators $\tauhat_{gOB}$, $\tauhat_{GBcal}$, and $\tauhat_{cal}$.  Without loss of generality, we discuss the case of $\tauhat_{cal}$.  Since $\tauhat_{cal}$ is asymptotically unbiased in all three models, the term $N\Variance{\Expectation{\tauhat_{cal} \given \mathcal{F}}}$ limits to the $N$-scaled variance of the sample average treatment effect in all three models.  As before, under mild regularity conditions $\Expectation{\Variance{N^{1/2}\tauhat_{cal} \given \mathcal{F}}}$ has the same limit in the finite population model and the superpopulation model.  Thus, the difference in asymptotic variances between $\tauhat_{cal}$ and $\tauhat_{unadj}$ is driven by the limiting difference between $\Expectation{\Variance{N^{1/2}\tauhat_{cal} \given \mathcal{F}}}$ and $\Expectation{\Variance{N^{1/2}\tauhat_{unadj} \given \mathcal{F}}}$.  By the reasoning above, this limiting difference is the same in both the finite population model and in a superpopulation model.  Taking the results for large $N$ as reflective of their asymptotic behavior the difference between the top-right and bottom-right elements of Table~\ref{tab: SATE vs CATE vs PATE sampling variances} can be explained as: the difference between the numerator and denominator remains the same while the magnitude of both the numerator and the denominator are larger in the \textsc{PATE}\, row than in the \textsc{SATE}\, row.  Analogous reasoning applies to the \textsc{SATE}\, versus \textsc{CATE}\, rows and the \textsc{CATE}\, versus \textsc{PATE}\, rows and across the other columns as well.

\section{A Case Study on Tumor Recurrence of Bladder Cancers}

The Veterans Administration Cooperative Urological Research Group (VACURG) conducted a completely randomized clinical trial to examine the effectiveness of treatment against recurrence of bladder cancers.  Each patient enrolled in the study had superficial bladder tumors are the start of the study; the tumors were removed transurethrally before the patients were assigned to one of three treatment conditions: placebo pills, pyridoxine (vitamin $B_6$) pills, or periodic treatment with thiotepa (a chemotherapeutic agent).
The patients returned for follow-up visits and the existence of recurrent tumors observed in these follow-ups was tabulated; although -- at times -- more than one tumor was observed during a follow-up appointment the number of such tumors is not the primary object of study, only their presence or not is recorded as a binary outcome in each follow-up.  After a recurrent tumor was observed, it was removed and the treatment regimen assigned to that individual was continued.  For further details see \citet[Chapter 45]{dataBook}.  Our analysis focuses upon the placebo group, with 47 individuals, and the thiotepa treatment group, with 38 individuals.

The primary outcome of the study is the count of the number of recurrences, so Poisson regression is a natural adjustment model.  Covariate information collected at the start of the experiment includes the initial number of tumors and the diameter of the largest of these.  The number of months over which the patient attended follow-up appointments was recorded as well as the survival status of the patient at the conclusion of the study.  We control for the $\log$-number of follow-up months, the number of initial tumors, and the diameter of the largest initial tumor.  We compare the unadjusted difference in means, the uncalibrated generalized Oaxaca-Blinder estimator of \citet{generalizedOB}, the singly-calibrated estimator of \citet{generalizedOB}, and our calibrated estimator.  Table~\ref{tab: case study} displays the point estimate of treatment effect and the corresponding estimated variance. The variances displayed along the right-hand column of Table~\ref{tab: case study} demonstrate the substantial benefit of controlling for features.  Only $\tauhat_{cal}$ is guaranteed to be non-inferior to $\tauhat_{unadj}$; the performance of $\tauhat_{gOB}$ and $\tauhat_{GBcal}$ is not generally guaranteed.  Moreover, $\tauhat_{cal}$ never has asymptotic variance which exceeds that of $\tauhat_{gOB}$ and $\tauhat_{GBcal}$; this is observed even in this sample.

\begin{table}

\def~{\hphantom{0}}
\centering
\caption{Point estimates and estimated variances of $\tauhat_{unadj}$, $\tauhat_{gOB}$, and $\tauhat_{GBcal}$, and $\hat{\tau}_{cal}$ on the {VACURG} bladder tumor recurrence data set.}{
\begin{tabular}{lcc}
                            & Point Estimate    & Variance  \\
        $\tauhat_{unadj}$ 	&   -0.667                 &   0.190     \\
    $\tauhat_{gOB}$     	&   -0.775                 &   0.123     \\
    $\tauhat_{GBcal}$   	&   -0.784                 &   0.122     \\
    $\tauhat_{cal}$     	&   -0.778                 &   0.120
\end{tabular}} 
\label{tab: case study}
\end{table}

\color{black}
\section{Calibration and Semiparametric Efficiency}\label{sec: semiparametric efficiency}
In superpopulation models, a great deal of regression adjustment literature has focused upon semiparametric efficiency of estimators.  Below we include a brief survey of some of this literature and demonstrate the relationship between semiparametric efficient estimators and the calibration procedure.

\citet{Hahn98} takes a superpopulation approach to inference; his formulation aligns with the superpoplation framework of \citet{twoPhaseFramework}.  Units of the population are $N$ independent and identically distributed tuples $(y_{i}(0), y_{i}(1), x_{i})$ and the object of inference is the population average treatment effect $\Expectation{y_{1}(1) - y_{1}(0)}$.  We present his main semiparametric efficiency bound adapted to the context of completely randomized experiments below; it can be envisioned in the same light as the Cram\'{e}r-Rao bound as it provides a lower bound on the asymptotic variance of any regular estimator of $\Expectation{y_{1}(1) - y_{1}(0)}$.

\begin{theorem}[\citet{Hahn98}, Theorem 1]
    In a completely randomized experiment, the asymptotic variance of any regular estimator sequence for $\Expectation{y_{1}(1) - y_{1}(0)}$ is bounded below by
    \begin{equation}\label{eqn: hahn lower bound}
        \Expectation{\frac{\Variance{y_{i}(1) \given x_{i}}}{p} + \frac{\Variance{y_{i}(0) \given x_{i}}}{1 - p} + \left(\Expectation{y_{1}(1) - y_{1}(0) \given x_{i}} - \Expectation{y_{1}(1) - y_{1}(0)} \right)^{2}}.
    \end{equation}
\end{theorem}

Any sequence of regular estimators for $\Expectation{y_{1}(1) - y_{1}(0)}$ which achieves an asymptotic variance of \eqref{eqn: hahn lower bound} in the limit is said to be (asymptotically) \textit{semiparametric efficient}.  Write the conditional expectation of the outcomes given the covariates as
\begin{equation*}
    \condExp_{z}(x) = \Expectation{y_{i}(z) \given x_{i} = x} \quad \text{for } z \in \{0, 1\}.
\end{equation*}
Citing a result by \citet{Hahn98}, \citet{rothe} remarks that any semiparametric efficient regular estimator of $\Expectation{y_{1}(1) - y_{1}(0)}$ is necessarily of the form $N^{-1}\sum_{i = 1}^{N} \psi_{i}(\condExp_{0}, \condExp_{1}) + o_{p}(N^{-1/2})$ where
\begin{multline}\label{eqn: general form of semiparametric efficient estimator}
    \psi_{i}(\condExp_{0}, \condExp_{1}) := \left(\condExp_{1}(x_{i}) - \condExp_{0}(x_{i})\right) + \\
    \frac{Z_{i}\left(y_{i}(Z_{i}) - \condExp_{1}(x_{i})\right) }{(n_{1} / N)} - \frac{(1 - Z_{i})\left(y_{i}(Z_{i}) - \condExp_{0}(x_{i})\right) }{(n_{0} / N)}.
\end{multline}

We show that any estimator of the form $N^{-1}\sum_{i = 1}^{N} \psi_{i}(\muhat_{0}, \muhat_{1}) + o_{p}(N^{-1/2})$ obtains non-inferiority after calibration; even though $N^{-1}\sum_{i = 1}^{N} \psi_{i}(\muhat_{0}, \muhat_{1}) + o_{p}(N^{-1/2})$ originally could have been asymptotically inferior to the unadjusted difference in means.  Consequently, any estimator which has hope of semiparametric efficiency can be made non-inferior to the difference in means via our calibration procedure.  If it was the case that the original estimator sequence happened to be semiparametric efficient, then the asymptotic variance of the calibrated estimator will coincide with the semiparametric efficiency bound; however for regular estimators which do not achieve the semiparametric efficiency bound calibration automatically yields asymptotic non-inferiority to the difference in means.  We formalize this statement below.

\citet{rothe} considers estimators of the form $\tauhat_{rothe} = N^{-1}\sum_{i = 1}^{N} \psi_{i}(\muhat_{0}, \muhat_{1})$ where $\muhat_{0}$ and $\muhat_{1}$ attempt to estimate the true conditional expectation functions $\condExp_{0}$ and $\condExp_{1}$, respectively.  Consider calibrating $\tauhat_{rothe}$ by defining $\muhat_{OLS, 0}$ and $\muhat_{OLS, 1}$ according to the usual linear calibration formula of the main manuscript and forming 
\begin{equation*}
    \tauhat_{rothe, cal} = N^{-1}\sum_{i = 1}^{N} \psi_{i}(\muhat_{OLS, 0}, \muhat_{OLS, 1}).
\end{equation*}

\begin{theorem}\label{supp thm: comparison to Rothe}
    Subject to Assumptions~\ref{supp asm: stability}, \ref{app asm: error process vanishes}, \ref{supp asm: superpop means and covs stabilize}, and \ref{supp asm: superpop bounded fourth moment}:
    \begin{itemize}
        \item $\tauhat_{rothe, cal}$ is asymptotically linear in the sense that 
        \begin{equation}\label{eqn: calibrated rothe asymptotic linearity}
            N^{1/2}\left(\tauhat_{rothe, cal} - \PATE\right) = N^{-1/2}\sum_{i = 1}^{N} \left(\psi_{i}(\overdotmu_{OLS, 0}, \overdotmu_{OLS, 1}) - \PATE\right) + o_{P}(1).
        \end{equation}
        
        \item The asymptotic variance of $N^{1/2}\left(\tauhat_{rothe, cal} - \PATE\right)$ lies in the closed interval $[\Sigma_{l}, \Sigma_{u}]$ where $\Sigma_{l}$ is the semiparametric efficiency bound of \eqref{eqn: hahn lower bound} and $\Sigma_{u}$ is the asymptotic variance of the $N^{1/2}\left(\tauhat_{unadj} - \PATE\right)$.  Furthermore, the asymptotic variance of $N^{1/2}\left(\tauhat_{rothe, cal} - \PATE\right)$ is no greater than the asymptotic variance of $N^{1/2}\left(\tauhat_{rothe} - \PATE\right)$, so in the case that $\tauhat_{rothe}$ is more asymptotically precise than $\tauhat_{unadj}$ we can replace $\Sigma_{u}$ with the limiting variance of $N^{1/2}\left(\tauhat_{rothe} - \PATE\right)$ and thereby shrink the interval even further.
    \end{itemize}
\end{theorem}
\begin{proof}
    The stability assumption (Assumption~\ref{supp asm: stability}) coupled with the moment assumptions Assumptions~\ref{supp asm: superpop means and covs stabilize}, and \ref{supp asm: superpop bounded fourth moment} establish that the calibrated imputation functions $\muhat_{OLS, 0}$ and $\muhat_{OLS, 1}$ are stable themselves; the argument mirrors the finite population case as before.  Consequently, $\muhat_{OLS, 0}$ and $\muhat_{OLS, 1}$ satisfy Assumption~1 of \citet{rothe} (which is basically a variant of our stability assumption).  
    Furthermore, 
    the vanishing error process assumption (Assumption~\ref{app asm: error process vanishes}) coupled with Assumptions~\ref{supp asm: stability}, \ref{supp asm: superpop means and covs stabilize}, and \ref{supp asm: superpop bounded fourth moment} establishes
    $$
        N^{1/2}\left(\tauhat_{rothe, cal} - \PATE\right) = N^{-1/2}\sum_{i = 1}^{N} \left(\psi_{i}(\overdotmu_{OLS, 0}, \overdotmu_{OLS, 1}) - \PATE\right) + o_{P}(1).
    $$
    
    Now we turn attention to the second claim of the theorem.  The lower bound is an automatic consequence of \citet[Corollary 1]{rothe} and \citet[Theorem 1]{Hahn98}; thus we turn to the upper bound. By the earlier asymptotic linearity result (see \eqref{eqn: calibrated rothe asymptotic linearity} for the calibrated case and \citet[Theorem 1]{rothe} for the uncalibrated case), the asymptotic variance of $N^{1/2}\left(\tauhat_{rothe} - \PATE\right)$ equals that of $N^{-1/2}\sum_{i = 1}^{N} \left(\psi_{i}(\overdotmu_{0}, \overdotmu_{1}) - \PATE\right)$; consequently, it suffices to examine the asymptotic variance of $N^{-1/2}\sum_{i = 1}^{N} \left(\psi_{i}(\overdotmu_{0}, \overdotmu_{1}) - \PATE\right)$.  We leverage the law of total variance via a conditioning argument; as in Section~\ref{sec: further simulations} let $\mathcal{F}$ to be the event $\{(y_{i}(0), y_{i}(1), x_{i}) = (\mathbf{y}_{i}(0), \mathbf{y}_{i}(1), \mathbf{x}_{i})  \, i = 1, \ldots, N\}$.
    \begin{multline*}
        \Variance{N^{-1/2}\sum_{i = 1}^{N} \left(\psi_{i}(\overdotmu_{0}, \overdotmu_{1}) - \PATE\right)} = \Expectation{\Variance{N^{-1/2}\sum_{i = 1}^{N} \left(\psi_{i}(\overdotmu_{0}, \overdotmu_{1}) - \PATE\right) \,\Big|\, \mathcal{F}}} + \\
        \Variance{\Expectation{N^{-1/2}\sum_{i = 1}^{N} \left(\psi_{i}(\overdotmu_{0}, \overdotmu_{1}) - \PATE\right) \,\Big|\, \mathcal{F}}}
    \end{multline*}
    Given $\mathcal{F}$ the only randomness in $N^{-1/2}\sum_{i = 1}^{N} \left(\psi_{i}(\overdotmu_{0}, \overdotmu_{1}) - \PATE\right)$ comes from the random allocation of treatment assignment, $Z$, and since 
    \begin{equation}\label{eqn: definition of psi for generic overdotmus}
        \psi_{i}(\overdotmu_{0}, \overdotmu_{1}) = \overdotmu_{1}(x_{i}) - \overdotmu_{0}(x_{i}) + \frac{Z_{i}\left(y_{i}(Z_{i}) - \overdotmu_{1}(x_{i})\right) }{(n_{1} / N)} - \frac{(1 - Z_{i})\left(y_{i}(Z_{i}) - \overdotmu_{0}(x_{i})\right) }{(n_{0} / N)},
    \end{equation}
    it follows that
    \begin{equation*}
        \Expectation{N^{-1/2}\sum_{i = 1}^{N} \left(\psi_{i}(\overdotmu_{0}, \overdotmu_{1}) - \PATE\right) \,\Big|\, \mathcal{F}} = N^{1/2}\left(\frac{1}{N}\sum_{i = 1}^{N}(y_{i}(1) - y_{i}(0)) - \PATE \right).
    \end{equation*}
    Consequently, the term $\Variance{\Expectation{N^{-1/2}\sum_{i = 1}^{N} \left(\psi_{i}(\overdotmu_{0}, \overdotmu_{1}) - \PATE\right) \,\Big|\, \mathcal{F}}}$ has no dependence upon $\overdotmu_{0}$ and $\overdotmu_{1}$; it is just determined by the variance of the sample average treatment effect.  Formally,
    \begin{equation*}
        \Variance{\Expectation{N^{-1/2}\sum_{i = 1}^{N} \left(\psi_{i}(\overdotmu_{0}, \overdotmu_{1}) - \PATE\right) \,\Big|\, \mathcal{F}}} = \Variance{N^{1/2}\left(\SATE - \PATE \right)}.
    \end{equation*}
    
    Furthermore, by inspection of \eqref{eqn: definition of psi for generic overdotmus}, the conditional variance of $\psi_{i}(\overdotmu_{0}, \overdotmu_{1})$ given $\mathcal{F}$ is only dependent upon variability in $\frac{Z_{i}\left(y_{i}(Z_{i}) - \overdotmu_{1}(x_{i})\right) }{(n_{1} / N)} - \frac{(1 - Z_{i})\left(y_{i}(Z_{i}) - \overdotmu_{0}(x_{i})\right) }{(n_{0} / N)}$ inherited from randomness in the $Z_{i}$, so
    \begin{multline*}
        \Variance{N^{-1/2}\sum_{i = 1}^{N} \left(\psi_{i}(\overdotmu_{0}, \overdotmu_{1}) - \PATE\right) \,\Big|\, \mathcal{F}} = \\
        \Variance{N^{-1/2}\sum_{i = 1}^{N} \left(\frac{Z_{i}\left(y_{i}(Z_{i}) - \overdotmu_{1}(x_{i})\right) }{(n_{1} / N)} - \frac{(1 - Z_{i})\left(y_{i}(Z_{i}) - \overdotmu_{0}(x_{i})\right) }{(n_{0} / N)} - \PATE\right) \,\Big|\, \mathcal{F}}.
    \end{multline*}
    In total, we have shown that
    \begin{multline}\label{eqn: asymptotic variance decomp for psi}
         \Variance{N^{-1/2}\sum_{i = 1}^{N} \left(\psi_{i}(\overdotmu_{0}, \overdotmu_{1}) - \PATE\right)} = \\
         \Expectation{\Variance{N^{-1/2}\sum_{i = 1}^{N} \left(\frac{Z_{i}\left(y_{i}(Z_{i}) - \overdotmu_{1}(x_{i})\right) }{(n_{1} / N)} - \frac{(1 - Z_{i})\left(y_{i}(Z_{i}) - \overdotmu_{0}(x_{i})\right) }{(n_{0} / N)} - \PATE\right) \,\Big|\, \mathcal{F}}} +\\
         \Variance{N^{1/2}\left(\SATE - \PATE \right)}.
    \end{multline}
    
    Comparing \eqref{eqn: asymptotic variance decomp for psi} to the variance decomposition of Theorem~\ref{supp thm: superpopulation noninferiority} and taking the limit as $N \rightarrow \infty$ yields that the asymptotic variance of $N^{1/2}\left(\tauhat_{rothe} - \PATE\right)$ matches that of the $N^{1/2}\left(\tauhat_{gOB} - \PATE\right)$ where the generalized Oaxaca-Blinder estimator is computed using the same imputation functions $\muhat_{0}$ and $\muhat_{1}$ that $\tauhat_{rothe}$ uses.  Consequently, the non-inferiority results for $N^{1/2}\left(\tauhat_{cal} - \PATE\right)$ proven in Theorem~\ref{supp thm: superpopulation noninferiority} translate to $N^{1/2}\left(\tauhat_{rothe, cal} - \PATE\right)$ as well.  This establishes the upper bound on the asymptotic variance of $\tauhat_{rothe, cal}$ and thereby completes the proof.
\end{proof}

\begin{remark}
    The second claim of Theorem~\ref{supp thm: comparison to Rothe} substantially improves upon the result of \citet[Corollary 1, Part ii]{rothe} which provides only the lower bound.  The upper bound of Theorem~\ref{supp thm: comparison to Rothe} guarantees two things:
    \begin{enumerate}
        \item A practitioner is certain that their calibrated estimator is asymptotically no less efficient than the difference in means (or the uncalibrated estimator) regardless of model misspecification. 
        \item If the original estimator of $\tauhat_{rothe}$ was indeed a semiparamteric efficient estimator then so too is the new calibrated estimator $\tauhat_{rothe, cal}$.
    \end{enumerate}
\end{remark}

Informally stated, the result of \citet{Hahn98} establishes that any estimator which has hope of semiparamteric efficiency must be of the form $\tauhat_{rothe}$ for some choice of $\muhat_{0}$ and $\muhat_{1}$ and Theorem~\ref{supp thm: comparison to Rothe} goes on to establish that any such estimator can be imbued with non-inferiority via the calibration procedure.

\section{Cross-Fitting and Calibration}
Cross-fitting is an algorithmic procedure based upon randomly splitting the sample into multiple portions, often called ``folds", computing prediction functions on some portion of these folds, and then applying the prediction functions to data from the other folds.  Since the prediction function is trained on different data than it is applied to, it is independent of the data it is applied to under standard superpopulation models.  This independence provides numerous benefits from a theoretical angle and has established cross-fitting as a common and powerful tool in statistical literature.  In particular, \citet{doubleMachineLearning} demonstrated that an appropriate use of cross-fitting could achieve strong statistical inference guarantees while eschewing classical Donsker-style entropy conditions.  Here we demonstrate that cross-fitting is compatible with calibration to yield non-inferiority results for a wide array of prediction functions $\muhat_{0}$ and $\muhat_{1}$ which need not satisfy the ``typically simple realizations" entropy condition.  Our discussion centers around superpopulation inference in the style of Section~\ref{sec: superpop}.

We begin with some new notation.  Superscripts of $(-i)$ indicate that the associated random function is independent of the $i$th data point; for example $\muhat_{1}^{(-i)}(\cdot)$ is a prediction function of treated outcomes which is independent of the $i$th observation $(y_{i}(Z_{i}), x_{i})$.  In practice, $\muhat_{1}^{(-i)}(\cdot)$ is usually computed by fitting the random prediction function $\muhat_{1}$ on the data set of $N - 1$ individuals which excludes the $i$th individual.  In line with the estimator $\tauhat_{rothe}$ and the work of \citet{highDimRegAdj}, define the ``leave-one-out" estimator
\begin{multline*}
    \tauhat_{loo} = \frac{1}{N}\sum_{i = 1}^{N}\left(\muhat_{1}^{(-i)}(x_{i}) - \muhat_{0}^{(-i)}(x_{i}) \right) + \\
    \frac{1}{n_{1}}\sum_{i \st Z_{i} = 1}\left(y_{i}(Z_{i}) - \muhat_{1}^{(-i)}(x_{i}) \right) - \frac{1}{n_{0}}\sum_{i \st Z_{i} = 0}\left(y_{i}(Z_{i}) - \muhat_{0}^{(-i)}(x_{i}) \right).
\end{multline*}
To define the calibrated version of $\tauhat_{loo}$, some care is needed to account for the sample splitting in both the original prediction functions $(\muhat_{0}, \muhat_{1})$ and in the calibrated prediction functions.  The leave-one-out calibrated prediction function is defined as

\begin{align}\label{eq: leave-one-out lincal}
    \hat{\mu}_{OLS,z}^{(-i)}(x_i) &= \hat{\alpha}_{z}^{(-i)} + \hat{\beta}_{z,0}^{(-i)}\hat{\mu}_0^{(-i)}(x_i) + \hat{\beta}_{z,1}^{(-i)}\hat{\mu}_1^{(-i)}(x_i);\\
    (\hat{\alpha}_z^{(-i)}, \hat{\beta}_{z,0}^{(-i)}, \hat{\beta}_{z,1}^{(-i)})^\T& \in \underset{(\alpha_z, \beta_{z,0}, \beta_{z,1})^\T}{\arg\min}\;\; \sum_{\substack{j: Z_j = z\\ j \neq i}}\{y_j(z) - {\alpha}_{z} - {\beta}_{z,0}\hat{\mu}_0^{(-i)}(x_j) - {\beta}_{z,1}\hat{\mu}_1^{(-i)}(x_j)\}^2.\nonumber 
\end{align} 

The calibrated prediction function is then
\begin{multline*}
    \tauhat_{loo, cal} = \frac{1}{N}\sum_{i = 1}^{N}\left(\muhat_{OLS, 1}^{(-i)}(x_{i}) - \muhat_{OLS, 0}^{(-i)}(x_{i}) \right) + \\
    \frac{1}{n_{1}}\sum_{i \st Z_{i} = 1}\left(y_{i}(Z_{i}) - \muhat_{OLS, 1}^{(-i)}(x_{i}) \right) - \frac{1}{n_{0}}\sum_{i \st Z_{i} = 0}\left(y_{i}(Z_{i}) - \muhat_{OLS, 0}^{(-i)}(x_{i}) \right).
\end{multline*}

In the context of leave-one-out estimation, we slightly modify the notation of the error process in Assumption~\ref{app asm: error process vanishes}.  Specifically, write
\begin{equation*}
    \G_{N, z}(\muhat_{z}) = N^{-1/2}\sum_{i = 1}^{N}\left(\frac{\indicatorFunction{Z_{i} = z}\muhat^{(-i)}_{z}(x_{i})}{n_{z}/N} - \muhat^{(-i)}_{z}(x_{i}) \right).
\end{equation*}

This modification of $\G_{N, z}(\muhat_{z})$ is done to take into account the fact that different prediction functions may be used for different individuals in the population.  In the case of leave-one-out estimation, there are $2N$ different prediction functions $\left(\muhat_{0}^{(-1)}, \ldots, \muhat_{0}^{(-N)}, \muhat_{1}^{(-1)}, \ldots, \muhat_{1}^{(-N)}\right)$.  This change does not modify any of the structure of our previous proofs.  In fact, we could have originally defined the process $\G_{N, z}$ to account for different prediction functions at each $i$, but this would have introduced needless notational burden for the previous proofs wherein the functions $\muhat_{0}$ and $\muhat_{1}$ do not depend upon $i$.

\begin{theorem}\label{supp thm: crossfitting to avoid entropy conditions}
    Consider the superpopulation model model of Section~\ref{sec: superpop}.  
    Suppose that Assumptions~\ref{supp asm: non-degen sampling limit}, \ref{app asm: error process vanishes}, \ref{supp asm: superpop means and covs stabilize}, and \ref{supp asm: superpop bounded fourth moment} hold; then $\tauhat_{loo, cal}$ obeys a central limit theorem and the asymptotic variance of $N^{1/2}\left(\tauhat_{loo, cal} - \PATE\right)$ lies in the closed interval $[\Sigma_{l}, \Sigma_{u}]$ where $\Sigma_{l}$ is the semiparametric efficiency bound of \eqref{eqn: hahn lower bound} and $\Sigma_{u}$ is the asymptotic variance of the $N^{1/2}\left(\tauhat_{unadj} - \PATE\right)$.  Furthermore, the asymptotic variance of $N^{1/2}\left(\tauhat_{loo, cal} - \PATE\right)$ is no greater than that of $N^{1/2}\left(\tauhat_{loo} - \PATE\right)$; so in the case that $\tauhat_{loo}$ is more asymptotically precise than $\tauhat_{unadj}$ we can replace $\Sigma_{u}$ with the limiting variance of $N^{1/2}\left(\tauhat_{loo} - \PATE\right)$ and thereby shrink the interval even further.
\end{theorem}
\begin{proof}
    As in our previous proofs, we begin by showing that the estimators $\tauhat_{loo}$ and $\tauhat_{loo, cal}$ are asymptotically linear under the assumed regularity conditions.  Quite similarly to \citet{highDimRegAdj}
    \begin{equation*}
        \tauhat_{loo} = \frac{1}{N}\left(\overdotmu_{1}(x_{i}) - \overdotmu_{0}(x_{i}) \right) + \frac{1}{n_{1}}\sum_{i \st Z_{i} = 1}\left(y_{i}(Z_{i}) - \overdotmu_{1}(x_{i}) \right) - \frac{1}{n_{0}}\sum_{i \st Z_{i} = 0}\left(y_{i}(Z_{i}) - \overdotmu_{0}(x_{i}) \right) + R
    \end{equation*}
    where $R$ is defined as
    \begin{equation*}
        R = \sum_{i = 1}^{N}\frac{(-1)^{Z_{i}}}{n_{Z_{i}}}\left(\frac{n_{0}}{N}\left(\muhat_{1}^{(-i)}(x_{i}) - \overdotmu_{1}(x_{i}) \right) + \frac{n_{1}}{N}\left(\muhat_{0}^{(-i)}(x_{i}) - \overdotmu_{0}(x_{i}) \right)\right).
    \end{equation*}
    
    The quantity $\tauhat_{loo} - R$ exactly agrees with $N^{-1}\sum_{i = 1}^{N}\psi_{i}(\overdotmu_{0}, \overdotmu_{1})$, and so analysis of $\tauhat_{loo}$ reduces to analysis of $N^{-1}\sum_{i = 1}^{N}\psi_{i}(\overdotmu_{0}, \overdotmu_{1})$ so long as $R$ vanishes asymptotically at a sufficiently fast rate.  Algebraic manipulation shows that
    \begin{equation*}
        R = \frac{1}{N^{1/2}}\left(\G_{N, 1}(\overdotmu_{1}) - \G_{N, 1}(\muhat_{1}) \right) - \frac{1}{N^{1/2}}\left(\G_{N, 0}(\overdotmu_{0}) - \G_{N, 0}(\muhat_{0}) \right).
    \end{equation*}
    Consequently, by the vanishing error process assumption, $|R| = o_{P}(N^{-1/2})$.  Similarly, by Lemma~\ref{prop: error proces vanishing is inherited by consistent OLS} since $|R| = o_{P}(N^{-1/2})$ it follows that $|R_{cal}| = o_{P}(N^{-1/2})$ as well where
    \begin{equation*}
        R_{cal} = \sum_{i = 1}^{N}\frac{(-1)^{Z_{i}}}{n_{Z_{i}}}\left(\frac{n_{0}}{N}\left(\muhat_{OLS, 1}^{(-i)}(x_{i}) - \overdotmu_{OLS, 1}(x_{i}) \right) + \frac{n_{1}}{N}\left(\muhat_{OLS, 0}^{(-i)}(x_{i}) - \overdotmu_{OLS, 0}(x_{i}) \right)\right).
    \end{equation*}    
    
    Consequently,
    \begin{align*}
        \tauhat_{loo} &= N^{-1}\sum_{i = 1}^{N}\psi_{i}(\overdotmu_{0}, \overdotmu_{1}) + o_{P}(N^{-1/2}),\\
        \tauhat_{loo, cal} &= N^{-1}\sum_{i = 1}^{N}\psi_{i}(\overdotmu_{OLS, 0}, \overdotmu_{OLS, 1}) + o_{P}(N^{-1/2}),
    \end{align*}
    and so the conclusions of Theorem~\ref{supp thm: comparison to Rothe} hold automatically for $\tauhat_{loo}$ and $\tauhat_{loo, cal}$ as well.

\end{proof}

Below we present sufficient conditions for the theorem. We begin with two definitions related to those of \citet{highDimRegAdj}.

\begin{definition}\label{defn: jackknife}
    An estimator $\muhat_{z}$ is ``jackknife compatible" if
    \begin{equation*}
        \Expectation{\sum_{i \st Z_{i} = z}\left(\muhat_{z}^{(-i)}(x_{new}) - \muhat_{z}(x_{new}) \right)^{2}} = o(n_{z})
    \end{equation*}
    for a new data point $x_{new}$ drawn independently of the observed data.
\end{definition}

\begin{definition}\label{defn: risk consistent}
    An estimator $\muhat_{z}$ is ``risk consistent" if 
    \begin{equation*}
        \frac{1}{N}\sum_{i = 1}^{N}\left(\muhat_{z}^{(-i)}(x_{i}) - \overdotmu_{z}(x_{i}) \right)^{2} = o_{P}(1).
    \end{equation*}
\end{definition}

\begin{proposition}
    Assume that $\muhat_{0}$ and $\muhat_{1}$ are jackknife compatible and risk consistent, then Assumption~\ref{app asm: error process vanishes} holds.
\end{proposition}
\begin{proof}
    As before, define
    \begin{equation*}
        R = \sum_{i = 1}^{N}\frac{(-1)^{Z_{i}}}{n_{Z_{i}}}\left(\frac{n_{0}}{N}\left(\muhat_{1}^{(-i)}(x_{i}) - \overdotmu_{1}(x_{i}) \right) + \frac{n_{1}}{N}\left(\muhat_{0}^{(-i)}(x_{i}) - \overdotmu_{0}(x_{i}) \right)\right).
    \end{equation*}
    The argument of \citet[Proof of Theorem 5]{highDimRegAdj} shows that $\Expectation{R^{2}} = o(N^{-1})$.\footnote{In fact, a small modification is needed to adapt the argument of \citet{highDimRegAdj} to our context; simply replace their $\mu^{(z)}$ with $\overdotmu_{z}$ for $z \in \{0, 1\}$.}  In fact, a more detailed examination of their argument shows that both $\Expectation{R_{0}^{2}} = o(N^{-1})$ and $\Expectation{R_{1}^{2}} = o(N^{-1})$ for 
    \begin{align*}
        R_{0} = \sum_{i = 1}^{N}\frac{(-1)^{Z_{i}}}{n_{Z_{i}}}\left(\frac{n_{0}}{N}\left(\muhat_{1}^{(-i)}(x_{i}) - \overdotmu_{1}(x_{i}) \right) \right).\\
        R_{1} = \sum_{i = 1}^{N}\frac{(-1)^{Z_{i}}}{n_{Z_{i}}}\left( \frac{n_{1}}{N}\left(\muhat_{0}^{(-i)}(x_{i}) - \overdotmu_{0}(x_{i}) \right)\right).
    \end{align*}
    Take $z \in \{0, 1\}$.  For any $\varepsilon > 0$ Chebyshev's inequality implies that $${\Prob{N^{1/2}|R_{z}| > \varepsilon} \leq \Expectation{R_{z}^{2}}N\varepsilon^{-2}}.$$  Since $\Expectation{R_{z}^{2}} = o(N^{-1})$ the right-hand-side vanishes as $N \rightarrow \infty$ and so $R_{z} = o_{P}(N^{-1/2})$. Algebraic rearrangement yields that $R_{z} = \frac{1}{N^{1/2}}\left(\G_{N, z}(\overdotmu_{z}) - \G_{N, z}(\muhat_{z}) \right)$; so $\G_{N, z}(\overdotmu_{z}) - \G_{N, z}(\muhat_{z}) = o_{P}(1)$ which concludes the proof.

\end{proof}

\begin{remark}
    Our definition of risk consistency differs sharply from that of \citet{highDimRegAdj} in that we are concerned only with the squared distance between the leave-one-out prediction $\muhat_{z}^{(-i)}(x_{i})$ and some fixed function $\overdotmu_{z}(x_{i})$ where $\overdotmu_{z}$ need not be the true conditional mean of $y_{i}(z)$ given the covariates.  Consequently, even under arbitrary model misspecification $\tauhat_{loo, cal}$ achieves asymptotic non-inferiority to both $\tauhat_{loo}$ and $\tauhat_{unadj}$.  Moreover, the leave-one-out cross-fitting procedure allows one to avoid entropy conditions (cf. \citet{generalizedOB, rothe}).  This allows practitioners to use complex machine learning algorithms -- subject to Definitions~\ref{defn: jackknife} and \ref{defn: risk consistent} -- to form the initial estimators $\muhat_{0}$ and $\muhat_{1}$.  See \citet{highDimRegAdj} for examples of such estimators; one such example is the subsampled random forest estimator of \citet{BART}.
\end{remark}


\begin{remark}
    A further advantage of leave-one-out estimation is in its control of finite sample bias; we discuss this issue in a superpopulation setting and focus on estimators of the form $\tauhat_{rothe} = N^{-1}\sum_{i = 1}^{N} \psi_{i}(\muhat_{0}, \muhat_{1})$.  In general, imputation-based estimators can introduce biases in the sense that $\Expectation{\tauhat_{rothe} - \PATE} \neq 0$.  The central limit theorem for estimators of the form $ N^{-1}\sum_{i = 1}^{N} \psi_{i}(\muhat_{0}, \muhat_{1})$ implies that this bias is asymptotically vanishing.  However, \citet{LoopEstimator} highlights that practical cases exist for which finite sample biases are unacceptable. Under mild conditions, leave-one-out estimation of the form $\tauhat_{loo}$ -- which is equivalent to the LOOP estimator of \citet{LoopEstimator} -- achieves finite sample exact unbiasedness; see \citet{LoopEstimator} or \citet[Corollary 3]{rothe} for proof.  Furthermore, our analyses of $\tauhat_{loo, cal}$ and $\tauhat_{rothe, cal}$ demonstrate that calibration can be applied to simultaneously obtain non-inferiority guarantees for such estimators without any assumption of correct model specification.  Together this implies that under the conditions of \citet[Corollary 3]{rothe} or those of \citet{LoopEstimator} the leave-one-out calibrated estimator $\tauhat_{loo, cal}$ has no finite sample bias and is asymptotically non-inferior to the difference in means.
    
    We include simulations which mirror the superpopulation Poisson regression simulations of Table~\ref{tab: SATE vs CATE vs PATE sampling variances}; we evaluate the performance of two sample-splitting estimators $\tauhat_{loo}$ and $\tauhat_{loo, cal}$.  The results are summarized in Table~\ref{tab: poisson leave-one-out sampling MSEs}.  The first columns highlight that -- in this particular simulation setting -- $\tauhat_{gOB}$ and $\tauhat_{loo}$ are less efficient than the unadjusted difference in means $\tauhat_{unadj}$.  However, calibration of the leave-one-out estimator results in a new estimator $\tauhat_{loo, cal}$ which is asymptotically no less efficient than the difference in means.  The final column highlights that $\tauhat_{loo, cal}$ and $\tauhat_{cal}$ are asymptotically equivalent under the conditions of Theorem~\ref{supp thm: crossfitting to avoid entropy conditions}.  Both $\tauhat_{loo}$ and $\tauhat_{loo, cal}$ are guaranteed to have zero bias \citep{LoopEstimator, rothe}, so the third column of Table~\ref{tab: poisson leave-one-out sampling MSEs} demonstrates that via calibration we can construct an estimator with both desirable robustness properties: unbiasedness and asymptotic non-inferiority.
    
    \begin{table}[h]
    \def~{\hphantom{0}}
    \caption{Ratios of Monte Carlo mean-square-errors for $\tauhat_{cal}$, $\tauhat_{gOB}$, $\tauhat_{loo}$, and $\tauhat_{loo, cal}$ to the difference in means estimator $\hat{\tau}_{unadj}$ for various experiment sizes $N$.  Each mean-square-error is based upon $B = 100$ simulated experiments.  Results are averaged over $S = 100$ simulations.}{
    \begin{tabular}{lccccc}
                & $\frac{\hat{\text{MSE}}(\tauhat_{gOB})}{ \hat{\text{MSE}}(\tauhat_{unadj})}$ & $\frac{\hat{\text{MSE}}(\tauhat_{loo}) }{ \hat{\text{MSE}}(\tauhat_{unadj})}$  & $\frac{\hat{\text{MSE}}(\tauhat_{loo, cal}) }{ \hat{\text{MSE}}(\tauhat_{unadj})}$ &  $\frac{\hat{\text{MSE}}(\tauhat_{cal}) }{ \hat{\text{MSE}}(\tauhat_{unadj})}$&  \\
        $N = 200$   &  1.131  &  1.139  &  0.950  & 0.949  &  \\
        $N = 500$   &  1.119  &  1.121  &  0.948  & 0.948  &
    \end{tabular}
    }
    \label{tab: poisson leave-one-out sampling MSEs}
    \end{table}
\end{remark}

To illustrate the benefits of calibration even when flexible machine learning methods are used for imputation, below we present a simulation using random forests as the original imputation functions $\muhat_{0}$ and $\muhat_{1}$.  In this simulation, we used the \texttt{Boston} data set from \citet{MASS}.  In each simulation of we uniformly sampled $N$ observations with replacement from the \texttt{Boston} data set; crime was used as the outcome of interest and the remaining information was taken as features.  Independently, we drew a treatment allocation $Z \sim \textit{Unif}(\Omega)$ with $n_{1}=\lfloor pN\rfloor$ for $p = 0.6$.  By construction, Fisher's sharp null holds.  We use a sample splitting estimation procedure wherein prediction functions $\muhat_{0}$ and $\muhat_{1}$ are trained on half of the data and then applied to the other half to form the treatment effect estimator; the roles of the training and testing sets are interchanged and the results are averaged together to form the final estimator.  This follows the 2-fold procedure of \citet{highDimRegAdj}.  The original prediction functions $\muhat_{0}$ and $\muhat_{1}$ are random forests.  We compare the results against the analogous calibrated estimator which linearly calibrates the prediction functions $\muhat_{0}$ and $\muhat_{1}$ within the training sets; Table~\ref{tab: random forest sample splitting variances_HF} summarizes the results.  In particular, although both estimators achieve the semiparametric efficiency bound in the limit the calibrated estimator displays smaller variances across the simulations of Table~\ref{tab: random forest sample splitting variances_HF}.  Table~\ref{tab: random forest sample splitting variances_HN} repeats the simulations of Table~\ref{tab: random forest sample splitting variances_HF} but incorporates treatment effect heterogeneity by adding independent exponential noise with rate 1 to the treated outcomes and subtracting independent exponential noise with rate 1 from the control outcomes.  The ties between the variances of the calibrated and uncalibrated estimators presented in Table~\ref{tab: random forest sample splitting variances_HN} demonstrates the benign effect of calibration when the original uncalibrated estimator was performing well to begin with. 
    \begin{table}[h]
    \centering
    \begin{tabular}{lcc}
                    & $\hat{\text{var}}(\sqrt{N}\tauhat_{SSRF})$ & $\hat{\text{var}}(\sqrt{N}\tauhat_{SSRF, cal})$  \\
            $N = 500$       &  137.27   &  134.98  \\
            $N = 1000$      &  118.19   &  116.13  \\
            $N = 2000$      &  83.02    &  80.50  \\
    \end{tabular}
    \caption{\textbf{(Sharp Null Simulations)} Monte Carlo variances for the sample-split random forest imputation estimator, $\hat{\text{var}}(\sqrt{N}\tauhat_{SSRF})$, and its calibrated analogue, $\hat{\text{var}}(\sqrt{N}\tauhat_{SSRF, cal})$, for various experiment sizes $N$.  Each variance is based upon $1000$ simulated experiments.}
    \label{tab: random forest sample splitting variances_HF}
    \end{table}
    
   \begin{table}[h]
    \centering
    \begin{tabular}{lcc}
                    & $\hat{\text{var}}(\sqrt{N}\tauhat_{SSRF})$ & $\hat{\text{var}}(\sqrt{N}\tauhat_{SSRF, cal})$  \\
            $N = 500$       &  139.73   &  139.08  \\
            $N = 1000$      &  119.91   &  119.47  \\
            $N = 2000$      &  90.11    &  90.16  \\
    \end{tabular}
    \caption{\textbf{(Weak Null Simulations)} Monte Carlo variances for the sample-split random forest imputation estimator, $\hat{\text{var}}(\sqrt{N}\tauhat_{SSRF})$, and its calibrated analogue, $\hat{\text{var}}(\sqrt{N}\tauhat_{SSRF, cal})$, for various experiment sizes $N$.  Each variance is based upon $1000$ simulated experiments.}
    \label{tab: random forest sample splitting variances_HN}
    \end{table}
\color{black}

\section{An Alternative Framework via Entropy Conditions}
Assumption~\ref{app asm: error process vanishes} is sufficient for our results, but may be challengingly abstract.  For the sake of clarity in our explication, we include some auxiliary results based upon Assumption~\ref{supp asm: simple realizations} which demonstrate how to work directly with entropy conditions in the proofs of this paper.  In the process, we establish several results which demonstrate that entropy conditions and Vapnik–Chervonenkis dimension conditions are sufficient for Assumption~\ref{app asm: error process vanishes}.  We start with a few technical lemmas.
\subsection{Some technical lemmas on entropy conditions}
\begin{lemma}\label{lem: radius is arbitrary}
    For a sequence of function classes $\left\{\mathscr{F}_{N} \right\}_{N \in \N}$ suppose that  $\int_{0}^{1}\sup_{N}\sqrt{\log  \mathscr{N}(\mathscr{F}_{N},||\cdot||_{N}, s)}\, ds < \infty$.  Then it follows that for any $D > 0$ 
    \begin{equation*}
        \int_{0}^{1}\sup_{N}\sqrt{\log  \mathscr{N}\left(\mathscr{F}_{N},||\cdot||_{N}, \frac{s}{D}\right)}\, ds < \infty.
    \end{equation*}
\end{lemma}
\begin{proof}
    By the change of variables $s = D^{-1}t$ 
    \begin{equation*}
        \int_{0}^{1}\sup_{N}\sqrt{\log  \mathscr{N}\left(\mathscr{F}_{N},||\cdot||_{N}, \frac{s}{D}\right)}\, ds = \frac{1}{D}\int_{0}^{D}\sup_{N}\sqrt{\log  \mathscr{N}\left(\mathscr{F}_{N},||\cdot||_{N}, t\right)}\, dt.
    \end{equation*}
    We break the proof into two pieces: $D \geq 1$ versus $D \in (0,1)$.  For now we focus on the first case, so assume that $D \geq 1$.  The result is trivial when $D = 1$, so take $D > 1$ and let $\lfloor D \rfloor$ denote the greatest integer below $D$.  It follows that 
    \begin{multline*}
        \int_{0}^{D}\sup_{N}\sqrt{\log  \mathscr{N}\left(\mathscr{F}_{N},||\cdot||_{N}, t\right)}\, dt = \int_{0}^{1}\sup_{N}\sqrt{\log  \mathscr{N}\left(\mathscr{F}_{N},||\cdot||_{N}, t\right)}\, dt +\\ \int_{1}^{\lfloor D \rfloor}\sup_{N}\sqrt{\log  \mathscr{N}\left(\mathscr{F}_{N},||\cdot||_{N}, t\right)}\, dt + \\
         \int_{\lfloor D \rfloor}^{D}\sup_{N}\sqrt{\log  \mathscr{N}\left(\mathscr{F}_{N},||\cdot||_{N}, t\right)}\, dt.
    \end{multline*}
    The first term is guaranteed to be finite by assumption.  To bound the second term, notice that $\mathscr{N}\left(\mathscr{F}_{N},||\cdot||_{N}, t\right)$ is a non-increasing function of $t$ and so \begin{equation}\label{ineq: 0 to 1 is biggest}
        \int_{0}^{1}\sup_{N}\sqrt{\log  \mathscr{N}\left(\mathscr{F}_{N},||\cdot||_{N}, t\right)}\, dt \geq \int_{\ell}^{\ell + 1}\sup_{N}\sqrt{\log  \mathscr{N}\left(\mathscr{F}_{N},||\cdot||_{N}, t\right)}\, dt.
    \end{equation}
    for any $\ell \in \N$.  Thus, the second term is no greater than ${(\lfloor D \rfloor - 1)\int_{0}^{1}\sup_{N}\sqrt{\log  \mathscr{N}\left(\mathscr{F}_{N},||\cdot||_{N}, t\right)}\, dt}$ which is certain to be finite.  Lastly, because $\mathscr{N}\left(\mathscr{F}_{N},||\cdot||_{N}, t\right) \geq 1$ it follows that the integrand of the third term is non-negative and so 
    \begin{equation*}
        \int_{\lfloor D \rfloor}^{D}\sup_{N}\sqrt{\log  \mathscr{N}\left(\mathscr{F}_{N},||\cdot||_{N}, t\right)}\, dt \leq \int_{\lfloor D \rfloor}^{\lfloor D \rfloor  + 1}\sup_{N}\sqrt{\log  \mathscr{N}\left(\mathscr{F}_{N},||\cdot||_{N}, t\right)}\, dt;
    \end{equation*}
    the right-hand-side of this inequality is finite by \eqref{ineq: 0 to 1 is biggest}.  Consequently, when $D \geq 1$ the desired result holds.
    
    Now suppose that $D \in (0,1)$.  As noted before, the integrand $\sup_{N}\sqrt{\log  \mathscr{N}\left(\mathscr{F}_{N},||\cdot||_{N}, t\right)}$ is non-negative and so
    \begin{equation*}
        \int_{0}^{D}\sup_{N}\sqrt{\log  \mathscr{N}\left(\mathscr{F}_{N},||\cdot||_{N}, t\right)}\, dt \leq \int_{0}^{1}\sup_{N}\sqrt{\log  \mathscr{N}\left(\mathscr{F}_{N},||\cdot||_{N}, t\right)}\, dt ;
    \end{equation*}
    the right-hand-side of this inequality is finite by assumption.  Thus, when $D \in (0, 1)$ the desired result holds, thereby concluding the proof.
\end{proof}

\begin{lemma}\label{lem: marginally simple realizations implies jointly simple}
    Suppose that the prediction functions $\muhat_{0}(\cdot)$ and $\muhat_{1}(\cdot)$ have typically simple realizations (Assumption~\ref{supp asm: simple realizations}). Then, so too does the joint prediction function
    \begin{equation*}
        x \mapsto \begin{bmatrix}\muhat_{0}(x) \\ \muhat_{1}(x) \end{bmatrix}.
    \end{equation*}
\end{lemma}
\begin{proof}
    Say that $\muhat_{0}$ satisfies Assumption~\ref{supp asm: simple realizations} for the sequence of function classes $\left\{\tsrC \right\}_{N \in \N}$ and $\muhat_{1}$ satisfies Assumption~\ref{supp asm: simple realizations} for the sequence of function classes $\left\{\tsrT \right\}_{N \in \N}$.
    
    Consider the class of functions $\left\{\mathscr{C}_{N} \right\}_{N \in \N}$ for $\mathscr{C}_{N} = \tsrC \times \tsrT$.  
    
    By assumption
    \begin{equation*}
        \Prob{\muhat_{0} \in \tsrC } \rightarrow 1 \quad\text{and}\quad \Prob{\muhat_{1} \in \tsrT } \rightarrow 1.
    \end{equation*}
    Because $\Prob{\begin{bmatrix}\muhat_{0}(\cdot) \\ \muhat_{1}(\cdot) \end{bmatrix} \in \mathscr{C}_{N}} \geq 1 - \Prob{\muhat_{0} \not\in \tsrC } - \Prob{\muhat_{1} \not\in \tsrT }$ it follows that
    $$
        \Prob{\begin{bmatrix}\muhat_{0}(\cdot) \\ \muhat_{1}(\cdot) \end{bmatrix} \in \mathscr{C}_{N}} \rightarrow 1.
    $$

    Now, we show that 
    $$
        \int_{0}^{1}\sup_{N}\sqrt{\log  \mathscr{N}(\mathscr{C}_{N},||\cdot||_{N}, s)}\, ds < \infty.
    $$
    
    Consider $\begin{bmatrix}{\mu}_{0}(\cdot) \\ {\mu}_{1}(\cdot) \end{bmatrix},  \begin{bmatrix}{\nu}_{0}(\cdot) \\ {\nu}_{1}(\cdot) \end{bmatrix} \in \mathscr{C}_{n}$; by definition 
    \begin{align}
        \left| \left| \begin{bmatrix}{\mu}_{0}(\cdot) \\ {\mu}_{1}(\cdot) \end{bmatrix} -  \begin{bmatrix}{\nu}_{0}(\cdot) \\ {\nu}_{1}(\cdot) \end{bmatrix} \right|\right|_{N} &= \left(\frac{1}{N}\sum_{i = 1}^{N} \left| \left| \begin{bmatrix}{\mu}_{0}(x_i) \\ {\mu}_{1}(x_i) \end{bmatrix} -  \begin{bmatrix}{\nu}_{0}(x_i) \\ {\nu}_{1}(x_i) \end{bmatrix} \right|\right|_{2}^{2} \right)^{1/2}\nonumber\\
        &= \left[\frac{1}{N}\sum_{i = 1}^{N} \left\{ \left({\mu}_{0}(x_i) -  {\nu}_{0}(x_i) \right)^{2} + \left({\mu}_{1}(x_i) - {\nu}_{1}(x_i) \right)^{2} \right\}\right]^{1/2}\nonumber\\
        &= \left[\frac{1}{N}\sum_{i = 1}^{N} \left\{{\mu}_{0}(x_i) -  {\nu}_{0}(x_i) \right\}^{2} + \frac{1}{N}\sum_{i = 1}^{N}\left\{{\mu}_{1}(x_i) - {\nu}_{1}(x_i) \right\}^{2}\right]^{1/2}\nonumber\\
        &\leq \left[\frac{1}{N}\sum_{i = 1}^{N} \left\{{\mu}_{0}(x_i) -  {\nu}_{0}(x_i) \right\}^{2}\right]^{1/2} + \left[\frac{1}{N}\sum_{i = 1}^{N}\left\{{\mu}_{1}(x_i) - {\nu}_{1}(x_i) \right\}^{2}\right]^{1/2}\label{ineq: subadditive sqrt}\\
        &= \left| \left| {\mu}_{0}(\cdot) -  {\nu}_{0}(\cdot)\right| \right|_{N} +  \left| \left| {\mu}_{1}(\cdot) -  {\nu}_{1}(\cdot)\right| \right|_{N}\nonumber
    \end{align}
    where \eqref{ineq: subadditive sqrt} follows by the subadditivity of the square root.
    
    Consequently, $\mathscr{N}(\mathscr{C}_{N},||\cdot||_{N}, s) \leq \mathscr{N}(\tsrC,||\cdot||_{N}, s/2) \mathscr{N}(\tsrT,||\cdot||_{N}, s/2)$ and so the monotonicity of the logarithm gives that
    \begin{align*}
        \int_{0}^{1}\sup_{N}\sqrt{\log  \mathscr{N}(\mathscr{C}_{N},||\cdot||_{N}, s)}\, ds &\leq \int_{0}^{1}\sup_{N}\sqrt{\log  \left(\mathscr{N}(\tsrC,||\cdot||_{N}, s/2) \mathscr{N}(\tsrT,||\cdot||_{N}, s/2)\right)}\, ds\\
        &= \int_{0}^{1}\sup_{N}\sqrt{\log  \mathscr{N}(\tsrC,||\cdot||_{N}, s/2) + \log \mathscr{N}(\tsrT,||\cdot||_{N}, s/2)}\, ds\\
        &\leq \int_{0}^{1}\sup_{N}\left( \sqrt{\log  \mathscr{N}(\tsrC,||\cdot||_{N}, s/2)} + \sqrt{\log \mathscr{N}(\tsrT,||\cdot||_{N}, s/2)}\right)\, ds\\
        &\leq \begin{multlined}[t][10.5cm] \int_{0}^{1}\sup_{N}\sqrt{\log  \mathscr{N}(\tsrC,||\cdot||_{N}, s/2)} +\\ \int_{0}^{1}\sup_{N} \sqrt{\log \mathscr{N}(\tsrT,||\cdot||_{N}, s/2)}\, ds.
        \end{multlined}
    \end{align*}
    The last line is guaranteed to be finite exactly because $\muhat_{0}$ satisfies Assumption~\ref{supp asm: simple realizations} for the sequence of function classes $\left\{\tsrC \right\}_{N \in \N}$ and $\muhat_{1}$ satisfies Assumption~\ref{supp asm: simple realizations} for the sequence of function classes $\left\{\tsrT \right\}_{N \in \N}$.
\end{proof}

\begin{remark}[Multiplicative Bounds for Covering Numbers]\label{rem: mult bound for covering number}
    In the proof of Lemma~\ref{lem: marginally simple realizations implies jointly simple} we remarked that
    \begin{equation}\label{ineq: norm bound}
         \left| \left| \begin{bmatrix}{\mu}_{0}(\cdot) \\ {\mu}_{1}(\cdot) \end{bmatrix} -  \begin{bmatrix}{\nu}_{0}(\cdot) \\ {\nu}_{1}(\cdot) \end{bmatrix} \right|\right|_{N} \leq  \left| \left| {\mu}_{0}(\cdot) -  {\nu}_{0}(\cdot)\right| \right|_{N} +  \left| \left| {\mu}_{1}(\cdot) -  {\nu}_{1}(\cdot)\right| \right|_{N}
    \end{equation}
    implies that 
    \begin{equation}\label{ineq: covering number bound}
        \mathscr{N}(\mathscr{C}_{N},||\cdot||_{N}, s) \leq \mathscr{N}(\tsrC,||\cdot||_{N}, s/2) \mathscr{N}(\tsrT,||\cdot||_{N}, s/2).
    \end{equation}
    This reasoning plays an important role in our proofs and is of independent interest since covering numbers play an significant role in numerous areas of probability.
    
    To avoid triviality, assume that both $\tsrC$ and $\tsrT$ are non-empty.  Suppose that the points $a_{1}, \ldots, a_{\ell}$ provide a minimal cardinality $(s/2)$-cover of $\tsrC$ and the points $b_{1}, \ldots, b_{\ell'}$ provide a minimal cardinality $(s/2)$-cover of $\tsrT$.  Consider the set of points $C = \{a_{1}, \ldots, a_{\ell}\} \times \{b_{1}, \ldots, b_{\ell'}\}$; this set has cardinality $\ell \ell'$.  Fix a point $c = (a,b) \in \mathscr{C}_{N}$.  Since $a_{1}, \ldots, a_{\ell}$ and $b_{1}, \ldots, b_{\ell'}$ are $(s/2)$-covers there exists at least one point $(a_{i}, b_{j})$ for which
    \begin{align*}
        \left| \left| a -  a_{i}\right| \right|_{N} &\leq \frac{s}{2}\\
        \left| \left| b -  b_{j}\right| \right|_{N} &\leq \frac{s}{2}.
    \end{align*}
    By \eqref{ineq: norm bound} $||c - (a_{i}, b_{j})^{\T}||_{N}$ is bounded above by 
    \begin{equation*}
        \left| \left| a -  a_{i}\right| \right|_{N} +  \left| \left| b -  b_{j}\right| \right|_{N}
    \end{equation*}
    which is bounded above by $s$ due to our choice of $a_{i}$ and $b_{j}$.  This implies that $C$ is a valid $s$-cover of $\mathscr{C}_{N}$ which has cardinality $\ell\ell'$.  Since $C$ is a feasible $s$-cover it follows that the minimal cardinality $s$-cover of $\mathscr{C}_{N}$ must have cardinality no greater than $\ell\ell'$.  However, since $\{a_{1}, \ldots, a_{\ell}\}$ is a minimal cardinality $(s/2)$-cover of $\tsrC$ it follows that $\ell = \mathscr{N}(\tsrC,||\cdot||_{N}, s/2)$; likewise $\ell' = \mathscr{N}(\tsrT,||\cdot||_{N}, s/2)$.  Consequently \eqref{ineq: covering number bound} holds.
    
    Iterating the logic above implies the following theorem.
    
    \renewcommand{\thetheorem}{A.\arabic{theorem}} 
    \begin{theorem}\label{supp thm: submultiplicative covering numbers}
        Suppose that the set $T \subseteq T_{1} \times \cdots \times T_{\ell}$; let $\pi_{i}: T \rightarrow T_{i}$ be the $i$\textsuperscript{th} coordinate projection.  Suppose that $(T, d)$ is a metric space such that 
        \begin{equation*}
            d(a, b) \leq \sum_{i = 1}^{\ell}d_{i}\left(\pi_{i}(a), \pi_{i}(b)\right)
        \end{equation*}
        where $d_{i}(\cdot, \cdot)$ denotes some metric on $T_{i}$.  Then
        \begin{equation}
            \mathscr{N}(T,d, s) \leq \prod_{i = 1}^{\ell} \mathscr{N}\left(T_{i},d_{i}, \frac{s}{\ell}\right).
        \end{equation}
    \end{theorem}
\end{remark}

\begin{proposition}\label{prop: typically simple realizations are inherited}
    If the prediction functions $\muhat_{0}$ and $\muhat_{1}$ are stable, have typically simple realizations, and satisfy Assumptions~\ref{supp asm: means and covs stabilize} and \ref{supp asm: bounded fourth moment}, then the prediction functions $\muhat_{OLS, 0}$ and $\muhat_{OLS, 1}$ also have typically simple realizations.
\end{proposition}
\begin{proof}
    Say that $\muhat_{0}$ satisfies Assumption~\ref{supp asm: simple realizations} for the sequence of function classes $\left\{\tsrC \right\}_{N \in \N}$ and $\muhat_{1}$ satisfies Assumption~\ref{supp asm: simple realizations} for the sequence of function classes $\left\{\tsrT \right\}_{N \in \N}$.  Let  ${\mathscr{C}_{N} = \tsrC \times \tsrT}$.  Define the sequence of function classes $\left\{\mathscr{F}_{N} \right\}_{N \in \N}$ via
    \begin{multline}\label{eqn: definition of F_n}
        \mathscr{F}_{N} = \Bigg\{\mu_{\beta_{0}, \beta_{1}}({x}) = \beta_{0} + \beta_{1}^{\T}\begin{bmatrix}
                        \mu_{0}({x})\\
                        \mu_{1}({x})\\
                        \end{bmatrix} \; \Bigg| \; |\beta_{0} - \dot{\beta}_{0}^{(N)}| \leq 1; \; ||\beta_{1} - \dot{\beta}_{1}^{(N)}||_{2} \leq 1;\\ \left| \left| \begin{bmatrix}
                        \mu_{0}\\
                        \mu_{1}\\
                        \end{bmatrix} - \begin{bmatrix}
                        \dot{\mu}_{0}\\
                        \dot{\mu}_{1}\\
                        \end{bmatrix} \right|\right|_{N} \leq 1;\;
                        \begin{bmatrix}
                        \mu_{0}\\
                        \mu_{1}\\
                        \end{bmatrix}\in \mathscr{C}_{N}\Bigg\}
    \end{multline}
    
    First, we show that  $\muhat_{OLS, 0}$ and $\muhat_{OLS, 1}$ are asymptotically almost surely elements of $\mathscr{F}_{N}$.  Since the proofs are basically the same for both $\muhat_{OLS, 0}$ and $\muhat_{OLS, 1}$, we present the proof only for $\muhat_{OLS, 1}$ and use the notation of \eqref{eqn: useful notation sample beta} and \eqref{eqn: useful notation ``population`` beta}.  By the consistency of the ordinary least squares linear regression coefficients, Lemma~\ref{lem: consistency of OLS coefficients}, it follows that 
    \begin{equation}\label{eqn: OLS coefficients are asymptotically almost surely close}
        \lim_{N \rightarrow \infty}\Prob{|\hat{\beta}_{0} - \dot{\beta}_{0}^{(N)}| > 1\;\; \& \;\;||\hat{\beta}_{1} - \dot{\beta}_{1}^{(N)}||_{2} > 1} = 0.
    \end{equation}
    
    By the joint stability of $\muhat_{0}$ and $\muhat_{1}$ 
    \begin{equation}\label{eqn: predictor functions are asymptotically almost surely close}
        \lim_{N \rightarrow \infty}\Prob{\left| \left| \begin{bmatrix}
                        \muhat_{0}\\
                        \muhat_{1}\\
                        \end{bmatrix} - \begin{bmatrix}
                        \dot{\mu}_{0}\\
                        \dot{\mu}_{1}\\
                        \end{bmatrix} \right|\right|_{N} > 1} = 0.
    \end{equation}

    Since $\muhat_{0}$ and $\muhat_{1}$ have are typically simple with respect to $\left\{\tsrC \right\}_{N \in \N}$ and $\left\{\tsrT \right\}_{N \in \N}$, respectively, it follows from Lemma~\ref{lem: marginally simple realizations implies jointly simple} that 
    \begin{equation}\label{eqn: joint stability}
        \lim_{N \rightarrow \infty} \Prob{\begin{bmatrix}\muhat_{0}(\cdot) \\ \muhat_{1}(\cdot) \end{bmatrix} \in \mathscr{C}_{N}} = 1.
    \end{equation}
    
    By Boole's inequality
    \begin{multline*}
        \Prob{\muhat_{OLS, 1} \in \mathscr{F}_{N}} \geq 1 - \Prob{|\hat{\beta}_{0} - \dot{\beta}_{0}^{(N)}| > 1\;\; \& \;\;||\hat{\beta}_{1} - \dot{\beta}_{1}^{(N)}|| > 1} - \\ \Prob{\left| \left| \begin{bmatrix}
                        \muhat_{0}\\
                        \muhat_{1}\\
                        \end{bmatrix} - \begin{bmatrix}
                        \dot{\mu}_{0}\\
                        \dot{\mu}_{1}\\
                        \end{bmatrix} \right|\right|_{N} > 1} -       \Prob{\begin{bmatrix}\muhat_{0}(\cdot) \\ \muhat_{1}(\cdot) \end{bmatrix} \not\in \mathscr{C}_{N}}
    \end{multline*}
    for each $N \in \N$, so it follows from \eqref{eqn: OLS coefficients are asymptotically almost surely close}, \eqref{eqn: predictor functions are asymptotically almost surely close}, and \eqref{eqn: joint stability} that $\lim_{N \rightarrow \infty}\Prob{\muhat_{OLS, 1} \in \mathscr{F}_{N}} = 1$.  A mirrored proof yields that $\lim_{N \rightarrow \infty}\Prob{\muhat_{OLS, 0} \in \mathscr{F}_{N}} = 1$.

    All that remains to be shown is that
    \begin{equation*}
        \int_{0}^{1}\sup_{N}\sqrt{\log  \mathscr{N}(\mathscr{F}_{N},||\cdot||_{N}, s)}\, ds < \infty.
    \end{equation*}
    To start, we examine two functions $f, g \in \mathscr{F}_{N}$ defined by
    \begin{align*}
        f({x}) &= \beta_{0f} + \beta_{1f}^{\T}\begin{bmatrix}
                                                        \mu_{0f}(x) \\ 
                                                        \mu_{1f}(x)
                                                    \end{bmatrix} \\
        g({x}) &= \beta_{0g} + \beta_{1g}^{\T}\begin{bmatrix}
                                                        \mu_{0g}(x) \\ 
                                                        \mu_{1g}(x)
                                                    \end{bmatrix} .                  
    \end{align*}
    The norm of their difference is
    \begin{align}
        \left|\left|f - g \right|\right|_{N} &= \left|\left|\left(\beta_{0f} - \beta_{0g} \right) + \left(\beta_{1f}^{\T}\begin{bmatrix}
                                                        \mu_{0f} \\ 
                                                        \mu_{1f}
                                                    \end{bmatrix} - \beta_{1g}^{\T}\begin{bmatrix}
                                                        \mu_{0g} \\ 
                                                        \mu_{1g}
                                                    \end{bmatrix}\right) \right|\right|_{N}\nonumber\\
                                            &\leq \left|\left|\beta_{0f} - \beta_{0g}\right|\right|_{N} + \left|\left| \beta_{1f}^{\T}\begin{bmatrix}
                                                        \mu_{0f} \\ 
                                                        \mu_{1f}
                                                    \end{bmatrix} - \beta_{1g}^{\T}\begin{bmatrix}
                                                        \mu_{0g} \\ 
                                                        \mu_{1g}
                                                    \end{bmatrix} \right|\right|_{N}\label{ineq: triangle ineq}\\
                                            &= \left|\beta_{0f} - \beta_{0g}\right| + \left|\left| \beta_{1f}^{\T}\begin{bmatrix}
                                                        \mu_{0f} \\ 
                                                        \mu_{1f}
                                                    \end{bmatrix} - \beta_{1g}^{\T}\begin{bmatrix}
                                                        \mu_{0g} \\ 
                                                        \mu_{1g}
                                                    \end{bmatrix} \right|\right|_{N}\label{eqn: defn of n norm}
    \end{align}
    where \eqref{ineq: triangle ineq} is due to the triangle inequality and \eqref{eqn: defn of n norm} follows from the definition of $||\cdot||_{N}$ for constant functions.  Furthermore
    \begin{align}
        &\left|\left| \beta_{1f}^{\T}\begin{bmatrix}
                        \mu_{0f} \\ 
                        \mu_{1f}
                    \end{bmatrix} - \beta_{1g}^{\T}\begin{bmatrix}
                        \mu_{0g} \\ 
                        \mu_{1g}
                    \end{bmatrix} \right|\right|_{N} \nonumber \\
                    &= \left|\left| \beta_{1f}^{\T}\begin{bmatrix}
                                                        \mu_{0f} \\ 
                                                        \mu_{1f}
                                                    \end{bmatrix} - \beta_{1f}^{\T}\begin{bmatrix}
                                                        \mu_{0g} \\ 
                                                        \mu_{1g}
                                                    \end{bmatrix} + \beta_{1f}^{\T}\begin{bmatrix}
                                                        \mu_{0g} \\ 
                                                        \mu_{1g}
                                                    \end{bmatrix} - \beta_{1g}^{\T}\begin{bmatrix}
                                                        \mu_{0g} \\ 
                                                        \mu_{1g}
                                                    \end{bmatrix} \right|\right|_{N}\nonumber\\
                &\leq \left|\left| \beta_{1f}^{\T}\begin{bmatrix}
                                                        \mu_{0f} \\ 
                                                        \mu_{1f}
                                                    \end{bmatrix} - \beta_{1f}^{\T}\begin{bmatrix}
                                                        \mu_{0g} \\ 
                                                        \mu_{1g}
                                                    \end{bmatrix}\right|\right|_{N} + \left|\left|\beta_{1f}^{\T}\begin{bmatrix}
                                                        \mu_{0g} \\ 
                                                        \mu_{1g}
                                                    \end{bmatrix} - \beta_{1g}^{\T}\begin{bmatrix}
                                                        \mu_{0g} \\ 
                                                        \mu_{1g}
                                                    \end{bmatrix} \right|\right|_{N}\label{ineq: triangle ineq again}\\
                &\leq \left|\left|\beta_{1f}\right|\right|_{N}\left|\left| \begin{bmatrix}
                                                        \mu_{0f} \\ 
                                                        \mu_{1f}
                                                    \end{bmatrix} - \begin{bmatrix}
                                                        \mu_{0g} \\ 
                                                        \mu_{1g}
                                                    \end{bmatrix}\right|\right|_{N} + \left|\left|\begin{bmatrix}
                                                        \mu_{0g} \\ 
                                                        \mu_{1g}
                                                    \end{bmatrix}\right|\right|_{N}\left|\left|\beta_{1f} - \beta_{1g}\right|\right|_{N}\label{eqn: CS ineq}\\
                &= \left|\left|\beta_{1f}\right|\right|_{2}\left|\left| \begin{bmatrix}
                                                        \mu_{0f} \\ 
                                                        \mu_{1f}
                                                    \end{bmatrix} - \begin{bmatrix}
                                                        \mu_{0g} \\ 
                                                        \mu_{1g}
                                                    \end{bmatrix}\right|\right|_{N} + \left|\left|\begin{bmatrix}
                                                        \mu_{0g} \\ 
                                                        \mu_{1g}
                                                    \end{bmatrix}\right|\right|_{N}\left|\left|\beta_{1f} - \beta_{1g}\right|\right|_{2}\label{eqn: final line of inequality}
    \end{align}
    where \eqref{ineq: triangle ineq again} is due to the triangle inequality, \eqref{eqn: CS ineq} is due to the Cauchy-Schwarz inequality, and \eqref{eqn: final line of inequality} is due to the definition of $||\cdot||_{N}$ for constant functions.  Because $f, g \in \mathscr{F}_{N}$ 
    \begin{equation*}
        \left|\beta_{1f} - \dot{\beta}_{1}^{(N)}\right| \leq 1  \quad \text{and} \quad \left|\left|\begin{bmatrix}
                                                        \mu_{0g} \\ 
                                                        \mu_{1g}
                                                    \end{bmatrix} - \begin{bmatrix}
                                                        \overdotmu_{0} \\ 
                                                        \overdotmu_{1}
                                                    \end{bmatrix}\right|\right|_{N} \leq 1
    \end{equation*}
    so 
    \begin{equation*}
        \left|\left|\beta_{1f}\right|\right|_{2} \leq 1 + \left|\left|\dot{\beta}_{1}^{(N)}\right|\right|_{2} \quad \text{and} \quad \left|\left|\begin{bmatrix}
                                                        \mu_{0g} \\ 
                                                        \mu_{1g}
                                                    \end{bmatrix}\right|\right|_{N} \leq 1 + \left|\left|\begin{bmatrix}
                                                        \overdotmu_{0} \\ 
                                                        \overdotmu_{1}
                                                    \end{bmatrix}\right|\right|_{N}.
    \end{equation*}
    Thus \eqref{eqn: final line of inequality} can be bounded above by
    \begin{equation*}
        \left(1 + \left|\left|\dot{\beta}_{1}^{(N)}\right|\right|_{2}\right)\left|\left| \begin{bmatrix}
                                                        \mu_{0f} \\ 
                                                        \mu_{1f}
                                                    \end{bmatrix} - \begin{bmatrix}
                                                        \mu_{0g} \\ 
                                                        \mu_{1g}
                                                    \end{bmatrix}\right|\right|_{N} 
            + \left(1 + \left|\left|\begin{bmatrix}
                \overdotmu_{0} \\ 
                \overdotmu_{1}
            \end{bmatrix}\right|\right|_{N} \right)\left|\left|\beta_{1f} - \beta_{1g}\right|\right|_{2}.
    \end{equation*}
    By Assumptions~\ref{supp asm: means and covs stabilize} and \ref{supp asm: bounded fourth moment} and standard ordinary least squares regression results 
    $\left|\left|\dot{\beta}_{1}^{(N)}\right|\right|_{2}$ is bounded uniformly in $N$.  By Assumptions~\ref{supp asm: means and covs stabilize} and \ref{supp asm: bounded fourth moment} and Lemma~\ref{lem: uniformly bounded N-norm} the quantity $\left|\left|\begin{bmatrix}
                                                        \overdotmu_{0} \\ 
                                                        \overdotmu_{1}
                                                    \end{bmatrix}\right|\right|_{N}$ is also bounded uniformly in $N$, it follows that \eqref{eqn: final line of inequality} is bounded above by
    \begin{equation}\label{eqn: upper bound second term}
        \kappa\left(\left|\left| \begin{bmatrix}
                                                        \mu_{0f} \\ 
                                                        \mu_{1f}
                                                    \end{bmatrix} - \begin{bmatrix}
                                                        \mu_{0g} \\ 
                                                        \mu_{1g}
                                                    \end{bmatrix}\right|\right|_{N} + \left|\left|\beta_{1f} - \beta_{1g}\right|\right|_{2}\right)
    \end{equation}
    for some $\kappa$ which does not depend upon $N$.  Combining \eqref{eqn: defn of n norm} with \eqref{eqn: upper bound second term} yields that
    \begin{align}
         \left|\left|f - g \right|\right|_{N} &\leq 
                                                D\left|\beta_{0f} - \beta_{0g}\right| + D\left|\left|\beta_{1f}^{\T} - \beta_{1g}^{\T} \right|\right|_{2}+  D\left|\left| \begin{bmatrix}
                                                        \mu_{0f} \\ 
                                                        \mu_{1f}
                                                    \end{bmatrix} - \begin{bmatrix}
                                                        \mu_{0g} \\ 
                                                        \mu_{1g}
                                                    \end{bmatrix}\right|\right|_{N} \label{ineq: distance between functions bound}
    \end{align}
    for $D = \max\{1, \kappa\}$ which does not depend upon $N$.
    
    Using \eqref{ineq: distance between functions bound} and Theorem~\ref{supp thm: submultiplicative covering numbers} we can bound the $s$-covering number of $\mathscr{F}_{N}$ as  
    \begin{multline}\label{ineq: covering number initial bound}
        \mathscr{N}(\mathscr{F}_{N},||\cdot||_{N}, s) \leq \mathscr{N}\left(\mathcal{B}(\dot{\beta}_{0}^{(N)}),|\cdot|, \frac{s}{3D}\right) \times\\ \mathscr{N}\left(\mathcal{B}(\dot{\beta}_{1}^{(N)}),||\cdot||_{2}, \frac{s}{3D}\right) \times \mathscr{N}\left(\mathscr{C}_{N},||\cdot||_{N}, \frac{s}{3D}\right)
    \end{multline}
    where $\mathcal{B}(\dot{\beta}_{0}^{(N)})$ is the unit ball around $\dot{\beta}_{0}^{(N)}$ and $\mathcal{B}(\dot{\beta}_{1}^{(N)})$ is the unit ball around $\dot{\beta}_{1}^{(N)}$.  Since the $s/(3D)$-covering number of the unit ball in $\R^{m}$ under the $\ell^{2}$-norm is bounded above by $\left(1 + 6D/s\right)^{m} $ \citep[Example 2]{generalizedOB} it follows from \eqref{ineq: covering number initial bound} that for all $s \in (0,1)$
    \begin{equation}\label{ineq: after using Euclidean ball bounds}
        \mathscr{N}(\mathscr{F}_{N},||\cdot||_{N}, s) \leq \left(1 + 6D/s\right) \times \left(1 + 6D/s\right)^{2} \times \mathscr{N}\left(\mathscr{C}_{N},||\cdot||_{N}, s/(3D)\right).
    \end{equation}
    By the monotonicity of the logarithm \eqref{ineq: after using Euclidean ball bounds} implies that
    \begin{align}
        \int_{0}^{1}\sup_{N}\sqrt{\log  \mathscr{N}(\mathscr{F}_{N},||\cdot||_{N}, s)}\, ds &\leq \int_{0}^{1}\sup_{N}\sqrt{\log \left(\left(1 + 6D/s\right)^{3} \mathscr{N}\left(\mathscr{C}_{N},||\cdot||_{N}, s/(3D)\right) \right) }\, ds \nonumber\\
                 &= \int_{0}^{1}\sup_{N}\sqrt{3\log \left(1 + 6D/s\right) +  \log\left( \mathscr{N}\left(\mathscr{C}_{N},||\cdot||_{N}, s/(3D)\right) \right) }\, ds \nonumber
        \end{align}
        By the subadditivity of the square-root, the last line can be bounded above by
        \begin{align}
                 &\int_{0}^{1}\sup_{N}\left(\sqrt{3\log \left(1 + 6D/s\right)} + \sqrt{ \log\left( \mathscr{N}\left(\mathscr{C}_{N},||\cdot||_{N}, s/(3D)\right) \right) }\right)\, ds \nonumber\\
                &= \underbrace{\int_{0}^{1}\sqrt{3\log \left(1 + 6D/s\right)}\,ds}_{a} +   \underbrace{\int_{0}^{1}\sup_{N}\sqrt{ \log\left( \mathscr{N}\left(\mathscr{C}_{N},||\cdot||_{N}, s/(3D)\right) \right) }\, ds}_{b}.
        \end{align}
        The term $a$ is finite for all $D > 0$.  The term $b$ is finite by Lemmas~\ref{lem: radius is arbitrary} and \ref{lem: marginally simple realizations implies jointly simple}.  Thus, $\int_{0}^{1}\sup_{N}\sqrt{\log  \mathscr{N}(\mathscr{F}_{N},||\cdot||_{N}, s)}\, ds < \infty$ and so $\muhat_{OLS, 0}$ and $\muhat_{OLS, 1}$ have typically simple realizations.

\end{proof}

\begin{remark}\label{rem: simple realizations extend to feature engineering}
    For a deterministic sequence of functions $\left\{f^{(N)}\right\}_{N \in \N}$ which map from $\R^{k}$ to $\R^{\ell}$ define the function class
    \begin{multline}\label{eqn: definition of F_n with engineered features}
        \mathscr{F}_{N} = \Bigg\{\mu_{\beta_{0}, \beta_{1}}({x}) = \beta_{0} + \beta_{1}^{\T}\begin{bmatrix}
                        \mu_{0}({x})\\
                        \mu_{1}({x})\\
                        \end{bmatrix} + \beta_{2}^{\T}f^{(N)}(x) \; \Bigg|
                        \; |\beta_{0} - \dot{\beta}_{0}^{(N)}| \leq 1; \; ||\beta_{1} - \dot{\beta}_{1}^{(N)}||_{2} \leq 1;\\ 
                        ||\beta_{2} - \dot{\beta}_{2}^{(N)}||_{2} \leq 1; \;  \left| \left| \begin{bmatrix}
                        \mu_{0}\\
                        \mu_{1}\\
                        \end{bmatrix} - \begin{bmatrix}
                        \dot{\mu}_{0}\\
                        \dot{\mu}_{1}\\
                        \end{bmatrix} \right|\right|_{N} \leq 1;\; \begin{bmatrix}
                        \mu_{0}\\
                        \mu_{1}\\
                        \end{bmatrix}\in \mathscr{C}_{N}\Bigg\} 
    \end{multline}
    where $\dot{\beta}^{(N)}$ is derived from the population-level ordinary least squares linear regression including the engineered features $f^{(N)}(x_{i})$; i.e., for the case of calibration in the treated population replace \eqref{eqn: useful notation ``population`` beta} with
    \begin{equation*}
        \dot{\beta}^{(N)} = (\dot{\beta}_{0}^{(N)}, \dot{\beta}_{1}^{(N)}, \dot{\beta}_{2}^{(N)}) = \argmin_{\beta_{0}, \beta_{1}, \beta_{2}}\left[\sum_{i = 1}^{N} \left\{y_{i}(1) - \left(\beta_{0} + \beta_{1}^{T}\begin{bmatrix}\dot{\mu}_{0}(x_{i}) \\ \dot{\mu}_{1}(x_{i}) \end{bmatrix} + \beta_{2}^{\T}f^{(N)}(x_{i})\right) \right\}^{2}\right].
    \end{equation*}
    Suppose that the engineered feature vectors $f(x_{i})$ satisfy Assumptions~\ref{supp asm: means and covs stabilize} and \ref{supp asm: bounded fourth moment}, and the required linear regressions are not ill-defined.  Under Assumptions~\ref{supp asm: means and covs stabilize} and \ref{supp asm: bounded fourth moment} following the same line of reasoning used in the proof of Proposition~\ref{prop: typically simple realizations are inherited} yields that 
    \begin{multline*}
        \int_{0}^{1}\sup_{N}\sqrt{\log  \mathscr{N}(\mathscr{F}_{N},||\cdot||_{N}, s)}\, ds \leq \int_{0}^{1}\sqrt{(3 + \ell)\log \left(1 + 8D/s\right)}\,ds +  \\
        \int_{0}^{1}\sup_{N}\sqrt{ \log\left( \mathscr{N}\left(\mathscr{C}_{N},||\cdot||_{N}, s/(4D)\right) \right) }\, ds < \infty.
    \end{multline*}
    Consequently, the prediction functions undergirding $\tauhat_{cal2}$ have typically simple realizations.
\end{remark}

\subsection{Entropy analysis in finite population models}
The work of \citet{generalizedOB} proceeds under a finite population model and the entropy condition Assumption~\ref{supp asm: simple realizations}, so we direct the interested reader to their explication.  We highlight a two points that are relevant in the context of calibration; namely
\begin{itemize}
    \item Lemma~\ref{lem: marginally simple realizations implies jointly simple} establishes that the mapping from covariates to ``pseudo-covariates" $(\muhat_{0}(x_{i}), \muhat_{1}(x_{i}))$ automatically inherits typically simple realizations from the two original prediction functions $\muhat_{0}$ and $\muhat_{1}$;
    \item Proposition~\ref{prop: typically simple realizations are inherited} leverages Lemma~\ref{lem: marginally simple realizations implies jointly simple} to conclude that the prediction functions $\muhat_{OLS, 0}$ and $\muhat_{OLS, 1}$ inherit typically simple realizations from the two original prediction functions $\muhat_{0}$ and $\muhat_{1}$.
\end{itemize}

\subsection{Entropy analysis in superpopulation models}
In the superpopulation model of Section~\ref{sec: superpop}, the entropy condition Assumption~\ref{supp asm: simple realizations} requires mild modification to account for randomness in potential outcomes and covariates.

\begin{assumption}[Superpopulation Typically Simple Realizations]\label{supp asm: superpop simple realizations}
    There exists a sequence of sets of functions $\tsrC$, which may vary with $N$, such that the random function $\muhat_{0}$ falls into this class asymptotically almost surely.  Formally, $\Prob{\muhat_{0} \in \tsrC} \rightarrow 1$.  Furthermore, the sets of functions are ``small" in the sense that
    \begin{equation*}
        \int_{0}^{1}\Expectation{\sup_{N}\sqrt{\log  \mathscr{N}(\tsrC,||\cdot||_{N}, s)}}\, ds < \infty
    \end{equation*}
    where $\mathscr{N}(\tsrC,||\cdot||_{N}, s)$ is the $s$-covering number of $\tsrC$ under the random metric induced by $||\cdot||_{N}$, and the expectation is taken with respect to randomness in $\{x_{i}\}_{i = 1}^{N}$.
    An analogous statement holds for $\muhat_{1}$ with a sequence of sets $\tsrT$.
\end{assumption}

The following lemma reproduces Lemma~\ref{lem: asymptotic linear expansion around SATE} but directly uses the entropy condition Assumption~\ref{supp asm: superpop simple realizations} instead of assuming \textit{a priori} that the error process vanishes as Assumption~\ref{app asm: error process vanishes} does.  A byproduct of this this result is that Assumption~\ref{supp asm: superpop simple realizations} is a sufficient condition for Assumption~\ref{app asm: error process vanishes} in a superpopulation model.

\begin{lemma}[Superpopulation Linear Expansions Via Entropy Bounds]\label{lem: asymptotic linear expansion around SATE via entropy bounds}
    Under Assumptions~\ref{supp asm: non-degen sampling limit}, \ref{supp asm: stability}, and \ref{supp asm: superpop simple realizations} the random variable $N^{-1}\sum_{i = 1}^{N}\left(\muhat_{z}(x_{i}) - y_{i}(z)\right)$ is asymptotically linear in the sense that
    \begin{equation*}
        \frac{1}{N} \sum_{i = 1}^{N}\left(\muhat_{z}(x_{i}) - y_{i}(z)\right) = \frac{1}{n_{z}}\sum_{i \st Z_{i} = z}\dot{\epsilon}_{i}(z) + o_{p}(N^{-1/2})
    \end{equation*}
    where $\dot{\epsilon}_{i}(z) = y_{i}(z) - \overdotmu_{z}(x_i)$.
\end{lemma}
\begin{proof}
    By the exact same reasoning as in \citet[Proof of Theorem 3]{generalizedOB}, rewrite $N^{-1}\sum_{i = 1}^{N}\left(\muhat_{z}(x_{i}) - y_{i}(z)\right) $ as
    \begin{equation*}
        \frac{1}{n_{1}}\sum_{i \st Z_{i} = z}\dot{\epsilon}_{i}(z) + \frac{1}{N}\Bigg(\underbrace{\sum_{i = 1}^{N}\left(\frac{Z_{i}\overdotmu_{z}(x_{i})}{(n_{z} / N)} - \overdotmu_{z}(x_{i})\right)}_{N^{1/2}\mathbb{G}(\overdotmu_{z})} - \underbrace{\sum_{i = 1}^{N}\left(\frac{Z_{i}\muhat_{z}(x_{i})}{(n_{z} / N)} - \muhat_{z}(x_{i})\right)}_{N^{1/2}\mathbb{G}(\muhat_{z})} \Bigg).
    \end{equation*}
    Consequently, the desired result holds so long as we can show that $|\mathbb{G}(\overdotmu_{z}) - \mathbb{G}(\muhat_{z})| = o_{p}(1)$.  Unlike the proof in \cite{generalizedOB}, the processes $\mathbb{G}(\overdotmu_{z})$ and $\mathbb{G}(\muhat_{z})$ inherit randomness from more than just $Z$; randomness enters through $Z$, $\{y_{i}(z)\}_{i = 1}^{N}$, and $\{x_{i}\}_{i = 1}^{N}$.  For this, we use a Massart-like inequality \citep[pg. 73-109]{massartCitation} similar to that of Lemma S1 in \citet{lasso}.  In the original formulation of \citeauthor{lasso} randomness is only due to $Z$.  Consequently, for some fixed universal constant $\kappa > 0$, by the logic of \citet[Proof of Theorem 3]{generalizedOB}
    \begin{multline}\label{eqn: superpop conditional covering number bound}
        \Prob{\sup_{\substack{\mu \in \tsrz \\ ||\mu - \overdotmu_{z}||_{N} \leq r}} \left|\mathbb{G}(\overdotmu_{z}) - \mathbb{G}(\muhat_{z}) \right| > \varepsilon \,\Bigg|\, \{x_{i}, y_{i}(0), y_{i}(1)\}_{i = 1}^{N}} \leq\\ \kappa\frac{r}{\varepsilon} + \kappa\int_{0}^{r}\sup_{N}\left( \log\mathscr{N}(\tsrz,||\cdot||_{N}, s) \right)^{1/2}\,ds.
    \end{multline}
    The norm $||\cdot||_{N}$ on the right-hand-side of \eqref{eqn: superpop conditional covering number bound} is random since it depends upon the realization of  $\{x_{i}\}_{i = 1}^{N}$.  By the law of iterated expectation,
    \begin{multline}\label{eqn: superpop unconditional covering number bound}
        \Prob{\sup_{\substack{\mu \in \tsrz \\ ||\mu - \overdotmu_{z}||_{N} \leq r}} \left|\mathbb{G}(\overdotmu_{z}) - \mathbb{G}(\muhat_{z}) \right| > \varepsilon} \leq \kappa\frac{r}{\varepsilon} + \kappa\Expectation{\int_{0}^{r}\sup_{N}\left( \log\mathscr{N}(\tsrz,||\cdot||_{N}, s) \right)^{1/2}\,ds},
    \end{multline}
    where the expectation on the right is with respect to randomness in $\{x_{i}\}_{i = 1}^{N}$.\footnote{By the law of iterated expectation, the expectation on the right-hand-side of \eqref{eqn: superpop unconditional covering number bound} is with respect to randomness in $\{x_{i}, y_{i}(0), y_{i}(1)\}_{i = 1}^{N}$; however, the right-hand-side of the inequality has no dependence upon $\{y_{i}(0), y_{i}(1)\}_{i = 1}^{N}$ and so the expectation can be taken to be only over $\{x_{i}\}_{i = 1}^{N}$ without loss of generality.}  The Fubini-Tonelli theorem \citep[Theorem 1.7.2]{durrett} justifies exchanging the expectation with the integral, yielding the upper bound of
    \begin{equation*}
        \kappa\frac{r}{\varepsilon} + \kappa\int_{0}^{r}\Expectation{\sup_{N}\left( \log\mathscr{N}(\tsrz,||\cdot||_{N}, s) \right)^{1/2}\,ds}.
    \end{equation*}
    The remaining logic of  \citet[Proof of Theorem 3]{generalizedOB} combined with the superpopulation entropy condition of Assumption~\ref{supp asm: superpop simple realizations} yields that $|\mathbb{G}(\overdotmu_{z}) - \mathbb{G}(\muhat_{z})| = o_{p}(1)$; asymptotic linearity follows immediately.
\end{proof}

\begin{remark}
    The condition of Assumption~\ref{supp asm: superpop simple realizations} is implied by the uniform entropy bound of Equation (2.5.1) in \citet{VanDerVaartWellner}:
    \begin{equation*}
        \int_{0}^{\infty}\sup_{N}\sup_{Q}\left(\log \mathscr{N}(\tsrz,L_{2}(Q), s\left|\left|\tsrz\right|\right|_{Q, 2}) \right)^{1/2}\,ds < \infty
    \end{equation*}
    where the supremum over $Q$ ranges over all finitely discrete probability measures on $\R^{k}$.  The $\log$-covering number $\log\mathscr{N}(\tsrz,L_{2}(Q), s\left|\left|\tsrz\right|\right|_{Q, 2})$ vanishes when $s$ exceeds one, since $\tsrz$, which must be of finite diameter, can be covered by a single ball with diameter greater than $\left|\left|\tsrz\right|\right|_{Q, 2}$.  Thus, the integral in question can be rewritten as
    \begin{equation}\label{eqn: reformulated uniform entropy integral}
        \int_{0}^{1}\sup_{N}\sup_{Q}\left(\log \mathscr{N}(\tsrz,L_{2}(Q), s\left|\left|\tsrz\right|\right|_{Q, 2}) \right)^{1/2}\,ds.
    \end{equation}
    Uniform entropy conditions are tightly related to Vapnik-Chervonenkis dimension \citep[Theorem 2.6.7]{VanDerVaartWellner} for $s \in (0, 1)$
    \begin{equation*}
        \mathscr{N}(\tsrz,L_{2}(Q), s\left|\left|\tsrz\right|\right|_{Q, 2}) \leq \tilde{C}\mathcal{VC}(\tsrz)(16e)^{\mathcal{VC}(\tsrz)} \left(\frac{1}{s} \right)^{2(\mathcal{VC}(\tsrz) - 1)},
    \end{equation*}
    where $\tilde{C}$ is a constant and $\mathcal{VC}(\tsrz)$ denotes the Vapnik-Chervonenkis dimension of the subgraphs of functions in the function class $\tsrz$.  Consequently
    \begin{multline*}
        \log\mathscr{N}(\tsrz,L_{2}(Q), s\left|\left|\tsrz\right|\right|_{Q, 2}) \leq \log \tilde{C} + \log \mathcal{VC}(\tsrz) + \\ \mathcal{VC}(\tsrz)\log(16e) +  \log \left(\left(\frac{1}{s} \right)^{2(\mathcal{VC}(\tsrz) - 1)} \right).
    \end{multline*}
    The finiteness of \eqref{eqn: reformulated uniform entropy integral} is guaranteed as long as
    \begin{multline*}
        \int_{0}^{1}\sup_{N}\Bigg(\log \tilde{C} + \log \mathcal{VC}(\tsrz) + \\ \mathcal{VC}(\tsrz)\log(16e) +  \log \left(\left(\frac{1}{s} \right)^{2(\mathcal{VC}(\tsrz) - 1)} \right) \Bigg)^{1/2}\,ds < \infty.
    \end{multline*}
    Assume that $\mathcal{VC}(\tsrz)$ is bounded above by some constant independent of $N$.  By subadditivity of the square root, a sufficient condition for the required finiteness of the integral above is to require that
    \begin{equation*}
        \int_{0}^{1}\sup_{N}\left(-2(\mathcal{VC}(\tsrz) - 1)\log s\right)^{1/2}\,ds = \sup_{N}\left(\mathcal{VC}(\tsrz) - 1\right)^{1/2}\left(\frac{\pi}{2}\right)\,ds < \infty.
    \end{equation*}
    This holds so long as $\sup_{N}\mathcal{VC}(\tsrz) \geq 1$, which is true by definition.

    \begin{assumption}\label{supp asm: VC dimension}
        For $z \in \{0, 1\}$, the Vapnik-Chervonenkis dimension of the sub-graphs of the function class $\tsrz$ is bounded above by some finite constant which does not depend upon $N$.
    \end{assumption}

    By the discussion above, Assumption~\ref{supp asm: VC dimension} is a sufficient condition for Assumption~\ref{supp asm: superpop simple realizations} and consequently Vapnik-Chervonenkis conditions are sufficient to force the vanishing of the error process from Assumption~\ref{app asm: error process vanishes}.
\end{remark}

\subsection{Entropy analysis in fixed covariate models}

\begin{lemma}[Fixed Covariate Linear Expansions via Entropy Bounds]\label{lem: asymptotic linear expansion around SATE in fixed cov model via entropy}
    Under Assumptions~\ref{supp asm: non-degen sampling limit}, \ref{supp asm: simple realizations}, and \ref{supp asm: fixed cov conditional stability} the random variable $N^{-1}\sum_{i = 1}^{N}\left(\muhat_{z}(x_{i}) - y_{i}(z)\right)$ is conditionally almost surely asymptotically linear in the sense that, for $\dot{\epsilon}_{i}(z) = y_{i}(z) - \overdotmu_{z}(x_i)$
    \begin{equation*}
        \frac{1}{N} \sum_{i = 1}^{N}\left(\muhat_{z}(x_{i}) - y_{i}(z)\right) = \frac{1}{n_{z}}\sum_{i \st Z_{i} = z}\dot{\epsilon}_{i}(z) + o_{p}(N^{-1/2})
    \end{equation*}
    holds for almost all conditioning events of the form \eqref{eqn: fixed cov conditioning on outcomes}.
\end{lemma}
To prove Lemma~\ref{lem: asymptotic linear expansion around SATE in fixed cov model via entropy} condition on \eqref{eqn: fixed cov conditioning on outcomes} and then apply the proof of Theorem~3 from \citet{generalizedOB} to the conditional random functions $\muhat_{z}$ given $\{(y_{i}(0), y_{i}(1)) = (\mathbf{y}_{i}(0), \mathbf{y}_{i}(1))  \, i = 1, \ldots, N\}$.

\begin{theorem}\label{supp thm: fixed covariate clt for taucal using entropy}
    Under the fixed-covariate model, subject to Assumptions~\ref{supp asm: non-degen sampling limit}, \ref{supp asm: simple realizations}, \ref{supp asm: fixed cov means and covs stabilize}, \ref{supp asm: fixed cov bounded fourth moment}, and \ref{supp asm: fixed cov conditional stability}, $N^{1/2}\left(\tauhat_{cal} - \CATE\right)$ obeys a central limit theorem.
\end{theorem}
\begin{proof}
    We start out with the simple observation that
    \begin{equation*}
        N^{1/2}\left(\tauhat_{cal} - \CATE\right) = N^{1/2}\left(\tauhat_{cal} - \SATE\right) + N^{1/2}\left(\SATE - \CATE\right).
    \end{equation*}
    Thus, to show a central limit theorem for $N^{1/2}\left(\tauhat_{cal} - \CATE\right)$ it suffices to show that
    \begin{enumerate}
        \item \label{itm: two CLTs using entropy} Conditionally upon the potential outcomes, the term $N^{1/2}\left(\tauhat_{cal} - \SATE\right)$ converges weakly in probability to a fixed Gaussian distribution and term $N^{1/2}\left(\SATE - \CATE\right)$ obeys a central limit theorem.
        \item \label{itm: asymptotic independence using entropy} The terms $N^{1/2}\left(\tauhat_{cal} - \SATE\right)$ and $N^{1/2}\left(\SATE - \CATE\right)$ are asymptotically independent in the sense that their limiting joint distribution is the product of the two limiting marginal distributions.
    \end{enumerate}
    
    We tackle \ref{itm: two CLTs using entropy} first.  By Lemma~\ref{lem: Lindeberg condition} and the Lindeberg central limit theorem \citep[Theorem 11.2.5]{TestingStatHyp} it follows that $N^{1/2}\left(\SATE - \CATE\right)$ converges in distribution to a Gaussian distribution; denote this limiting distribution as $\Normal{0}{s_{m}}$.  
    
    Next, we show that $N^{1/2}\left(\tauhat_{cal} - \SATE\right)$ converges weakly in probability to a fixed Gaussian distribution.  We start with an algebraic rearrangement of $N^{1/2}\left(\tauhat_{cal} - \SATE\right)$:
    \begin{align*}
        N^{1/2}\left(\tauhat_{cal} - \SATE\right) &=  N^{1/2}\left(\frac{1}{N}\sum_{i = 1}^{N}\left(\muhat_{OLS,1}(x_{i}) - \muhat_{OLS,0}(x_{i}) \right) - \frac{1}{N}\sum_{i = 1}^{N}\left(y_{i}(1) - y_{i}(0) \right)\right) \\
        &= N^{1/2}\left(\frac{1}{N}\sum_{i = 1}^{N}\left(\muhat_{OLS,1}(x_{i}) - y_{i}(1) \right) - \frac{1}{N}\sum_{i = 1}^{N}\left(\muhat_{OLS,0}(x_{i}) - y_{i}(0) \right)\right).
    \end{align*}
    
    For each population of size $N$, condition upon some realization of the potential outcomes $\{(y_{i}(0), y_{i}(1)) = (\mathbf{y}_{i}(0), \mathbf{y}_{i}(1))  \, i = 1, \ldots, N\}$.  
    By Lemma~\ref{lem: asymptotic linear expansion around SATE in fixed cov model} for almost all such conditioning events, we have that 
    \begin{equation}\label{eqn: linear expansion around SATE in fixed covariate model}
        N^{1/2}\left(\frac{1}{N}\sum_{i = 1}^{N}Z_{i}Nn_{1}^{-1}\dot{\epsilon}_{i}(1) - \frac{1}{N}\sum_{i = 1}^{N}(1 - Z_{i})Nn_{0}^{-1}\dot{\epsilon}_{i}(0) \right) + o_{P}(1)
    \end{equation}
    where $\dot{\epsilon}_{i}(z) = y_{i}(z) - \overdotmu_{OLS,z}(x_{i})$ and the randomness in the $o_{P}(1)$ term is only with respect to randomness in $Z$ since we condition upon the covariates implicitly in the fixed-covariate model.  
    
    Under Assumption~\ref{supp asm: fixed cov conditional stability} and the assumption that $N^{-1}\sum_{i = 1}^{N}\left( \overdotmu_{z}(x_{i}) - y_{i}(z)\right)^{2} = o(N)$ as a numeric sequence for almost all conditioning events of the potential outcomes, by Lemma~3 in the appendix of \citet{generalizedOB} we can, without loss of generality, stipulate that $N^{-1}\sum_{i = 1}^{N}\dot{\epsilon}_{i}(z) = 0$ for $z \in \{0, 1\}$ almost surely with respect to the conditioning \eqref{eqn: fixed cov conditioning on outcomes}.

    Under Assumptions~\ref{supp asm: fixed cov means and covs stabilize} and \ref{supp asm: fixed cov bounded fourth moment} the finite population analysis provided in Theorem~\ref{supp thm: reg adj does no harm} shows that $N^{1/2}\left(\frac{1}{N}\sum_{i = 1}^{N}Z_{i}Nn_{1}^{-1}\dot{\epsilon}_{i}(1) - \frac{1}{N}\sum_{i = 1}^{N}(1 - Z_{i})Nn_{0}^{-1}\dot{\epsilon}_{i}(0) \right)$ converges weakly to a centered Gaussian distribution with variance given by the limit of $\sigma_{N}^{2}$ defined in \eqref{eqn: squared studenizing factor for cal}.  This limit exists by Assumption~\ref{supp asm: fixed cov means and covs stabilize} and is common to all conditioning events of the form \eqref{eqn: fixed cov conditioning on outcomes} up to a set of measure zero; we denote it by $s_{d}$.  Consequently, $N^{1/2}\left(\tauhat_{cal} - \SATE\right)$ converges weakly in probability to a $\Normal{0}{s_{d}}$.

    Finally, we turn to \ref{itm: asymptotic independence using entropy}.  By Theorem 5.1 (iii) of \citet{twoPhaseFramework} it follows that the random vector $\left(N^{1/2}\left(\tauhat_{cal} - \SATE\right) ,N^{1/2}\left(\SATE - \CATE\right) \right)$ converges in distribution to  $(\mathcal{C}, \mathcal{D}) \sim \Normal{0}{s_{d}} \tensor \Normal{0}{s_{m}}$.\footnote{The original work of \citet{twoPhaseFramework} focuses on survey-sampling; however, nothing of their result Theorem 5.1 (iii) relies upon the survey-sampling framework of having only a single potential outcome, so we apply their result to the causal inference context of multiple potential outcomes.}
    
    By the continuous mapping theorem \citep[Theorem 18.11]{asymptoticStats_vdv}, $N^{1/2}\left(\tauhat_{cal} - \SATE\right)  + N^{1/2}\left(\SATE - \CATE\right)$ converges in distribution to $\mathcal{C} + \mathcal{D}$.  Since the sum of independent Gaussian random variables is itself Gaussian we have that $N^{1/2}\left(\tauhat_{cal} - \SATE\right)  + N^{1/2}\left(\SATE - \CATE\right)$ converges in distribution to $\Normal{0}{s_{d} + s_{m}}$.  In turn, this implies that $N^{1/2}\left(\tauhat_{cal} - \CATE\right)$ converges in distribution to $\Normal{0}{s_{d} + s_{m}}$.
\end{proof}

\bibliographystyle{apalike}
\bibliography{bibliography}

\end{document}